\documentclass{llncs}

\newcommand{\macrospath}{macros}

\usepackage[utf8]{inputenc}

\usepackage{amsmath}
\usepackage{amssymb}
\usepackage{stmaryrd}
\usepackage{bussproofs}

\usepackage{ifthen}
\usepackage{xspace}
\usepackage{fancybox}
\usepackage{enumitem}

\usepackage{tikz}
\usetikzlibrary{matrix,arrows}
\usetikzlibrary{decorations.shapes}
\usetikzlibrary{decorations.text}
\usetikzlibrary{decorations.pathmorphing}
\usetikzlibrary{decorations.markings}

\tikzset{
node distance=1.3cm, auto,
every node/.style={font=\tiny },
ocenter/.style={baseline={([yshift=-.5ex, xshift=-.5ex]current bounding box)}},  
labelBeginAbove/.style={postaction={decorate,decoration={markings,mark=at position 0 with {\node[inner sep= 0.6pt, above=1pt]{\tiny #1};}} } },
labelBeginBelow/.style={postaction={decorate,decoration={markings,mark=at position 0 with {\node[inner sep= 0.6pt, below=1pt]{\tiny #1};}}}},
labelEndAbove/.style={postaction={decorate,decoration={markings,mark=at position 1 with {\node[inner sep= 0.6pt, above=1pt]{\tiny #1};}}}},
labelEndBelow/.style={postaction={decorate,decoration={markings,mark=at position 1 with {\node[inner sep= 0.6pt, below=1pt]{\tiny #1};}}}},
labelEndRight/.style={postaction={decorate,decoration={markings,mark=at position 1 with {\node[inner sep= 0.6pt, right=1pt]{\tiny #1};}}}},
labelEndLeft/.style={postaction={decorate,decoration={markings,mark=at position 1 with {\node[inner sep= 0.6pt, left=1pt]{\tiny #1};}}}}
}

\newcommand{\nodeHorDist}{2cm}
\newcommand{\nodeVerDist}{0.8cm}

\newcommand{\commutesredgenEEABCD}[8]{
      \begin{tikzpicture}[ocenter]
       \node (s) {\normalsize\ensuremath{#3}};
       \node at (s.center)  [right=1.7*\nodeHorDist](s1){\normalsize\ensuremath{#4}};
       \node at (s.center)  [below=\nodeVerDist](s2) {\normalsize\ensuremath{#5}};
       \node at (s2.center) [right=1.7*\nodeHorDist](t) {\normalsize\ensuremath{#6}};
       \node at (s.center)  [below=\nodeVerDist/2](eq1){\normalsize\ensuremath{#1}};
       \node at (s1.center) [below=\nodeVerDist/2](eq2){\normalsize\ensuremath{#2}};
       \draw[-o] (s) to node {$#7$} (s1);
       \draw[-o, dashed] (s2) to node {$#8$} (t);
      \end{tikzpicture} 
}

\newcommand{\commutesredEEABCD}[6]{
\commutesredgenEEABCD{#1}{#2}{#3}{#4}{#5}{#6}{}{}
}

\newcommand{\commutesdbEEABCD}[6]{
\commutesredgenEEABCD{#1}{#2}{#3}{#4}{#5}{#6}{\db}{\db}
}

\newcommand{\commuteslsEEABCD}[6]{
\commutesredgenEEABCD{#1}{#2}{#3}{#4}{#5}{#6}{\lssym}{\lssym}
}


\newcommand{\ignore}[1]{}
\newcommand{\sep}{\hspace*{0.5cm}}
\newcommand{\ms}{\medskip}

\newcommand{\myinput}[1]{\ifthenelse{\boolean{withimages}}{\input{#1}}{}}

\newcommand{\reflemma}[1]{Lemma~\ref{l:#1}}
\newcommand{\reflemmap}[2]{Lemma~\ref{l:#1}.\ref{p:#1-#2}}
\newcommand{\reflemmapp}[3]{Lemma~\ref{l:#1}.\ref{p:#1-#2}.\ref{p:#1-#2-#3}}

\newcommand{\reflemmaeq}[1]{{L.\ref{l:#1}}}
\newcommand{\reflemmaeqp}[2]{{L.\ref{l:#1}.\ref{p:#1-#2}}}
\newcommand{\refpoint}[1]{Point~\ref{p:#1}}
\newcommand{\refpointeq}[1]{P.\ref{p:#1}}

\newcommand{\refpointmute}[1]{\ref{p:#1}}

\newcommand{\refthm}[1]{Theorem~\ref{thm:#1}}

\newcommand{\reftm}[1]{Theorem~\ref{tm:#1}}

\newcommand{\refprop}[1]{Proposition~\ref{prop:#1}}
\newcommand{\refsect}[1]{Sect.~\ref{sect:#1}}

\newcommand{\refapp}[1]{Appendix~\ref{app:#1}}

\newcommand{\reffig}[1]{Fig.~\ref{fig:#1}}

\newcommand{\refdefp}[2]{Definition~\ref{def:#1}.\ref{p:#1-#2}}

\renewcommand{\case}[1]{{\bf Case #1.}}
\newcommand{\casealt}[1]{{\bf #1.}}
\newcommand{\caselight}[1]{\textit{#1}}

\newcommand{\ie}{\textit{i.e.}\xspace}
\newcommand{\eg}{\textit{e.g.}\xspace}
\newcommand{\ih}{\textit{i.h.}\xspace}







\newcommand{\defeq}{\mathrel{:=}}
\newcommand{\grameq}{\mathrel{::=}}
\newcommand{\set}[1]{\{#1\}}

\newcommand{\size}[1]{|#1|}

\newcommand{\LeftRightarrow}{\Lleftarrow\!\!\!\!\Rrightarrow}

\newcommand{\db}{{\tt dB}}

\newcommand{\lssym}{{\tt ls}}

\newcommand{\admsym}{{\mathtt c}}

\newcommand{\lsc}{LSC}

\renewcommand{\l}{\lambda}
\newcommand{\isub}[2]{\{#1/#2\}}
\renewcommand{\isub}[2]{\{#1{\shortleftarrow}#2\}}
\newcommand{\esub}[2]{[#1/#2]}
\renewcommand{\esub}[2]{[#1{\shortleftarrow}#2]}
\newcommand{\fv}[1]{{\tt fv}(#1)}

\newcommand{\varsplit}[3]{#1_{[#3]_{#2}}}

\newcommand{\rootRew}[1]{\mapsto_{#1}}
\newcommand{\Rew}[1]{\rightarrow_{#1}}

\newcommand{\rtodb}{\rootRew{\db}}
\newcommand{\rtols}{\rootRew{\lssym}}

\newcommand{\rtom}{\rootRew{\msym}}
\newcommand{\rtoe}{\rootRew{\esym}}








\newcommand{\tostructsym}{\LeftRightarrow}

\newcommand{\esym}{{\mathtt e}}
\newcommand{\msym}{{\mathtt m}}

\newcommand{\psym}{{\mathtt p}}

\newcommand{\togen}{\multimap}

\newcommand{\tom}{\Rew{\msym}}
\newcommand{\toe}{\Rew{\esym}}



\newcommand{\eqstruct}{\equiv}
\newcommand{\tostruct}{\eqstruct}
\newcommand{\tostructgc}{\tostruct_{\structrulegc}}

\newcommand{\tostructdup}{\tostruct_{\structruledup}}
\newcommand{\tostructlam}{\tostruct_{\lambda}}

\newcommand{\tostructapl}{\tostruct_{\structruleapl}}
\newcommand{\tostructapr}{\tostruct_{\structruleapr}}
\newcommand{\tostructes}{\tostruct_{[\cdot]}}
\newcommand{\tostructcom}{\tostruct_{\structrulecom}}
\newcommand{\structrulegc}{\mathtt{gc}}

\newcommand{\structruledup}{\mathtt{dup}}

\newcommand{\structruleapl}{\mathtt{@l}}
\newcommand{\structruleapr}{\mathtt{@r}}

\newcommand{\structrulecom}{\mathtt{com}}

\newcommand{\tm}{t}
\newcommand{\tmtwo}{u}
\newcommand{\tmthree}{w}
\newcommand{\tmfour}{r}
\newcommand{\tmfive}{q}

\newcommand{\tmp}{\tm'}
\newcommand{\tmtwop}{\tmtwo'}
\newcommand{\tmthreep}{\tmthree'}
\newcommand{\tmfourp}{\tmfour'}
\newcommand{\tmfivep}{\tmfive'}

\newcommand{\tmpp}{\tm''}
\newcommand{\tmtwopp}{\tmtwo''}

\newcommand{\var}{x}
\newcommand{\vartwo}{y}
\newcommand{\varthree}{z}


\newcommand{\ctxholep}[1]{\langle #1\rangle}
\newcommand{\ctxhole}{\ctxholep{\cdot}}

\newcommand{\ctx}{C}
\newcommand{\ctxtwo}{C'}
\newcommand{\ctxthree}{C''}
\newcommand{\ctxfour}{C'''}
\newcommand{\ctxp}[1]{\ctx\ctxholep{#1}}
\newcommand{\ctxtwop}[1]{\ctxtwo\ctxholep{#1}}
\newcommand{\ctxthreep}[1]{\ctxthree\ctxholep{#1}}

\newcommand{\sctx}{L}
\newcommand{\sctxtwo}{\sctx'}
\newcommand{\sctxthree}{\sctx''}
\newcommand{\sctxp}[1]{\sctx\ctxholep{#1}}
\newcommand{\sctxtwop}[1]{\sctxtwo\ctxholep{#1}}
\newcommand{\sctxthreep}[1]{\sctxthree\ctxholep{#1}}





\newcommand{\arbctxp}[1]{\arbctxp{#1}}
\newcommand{\arbctxtwop}[1]{\arbctxtwop{#1}}






%





%








\newcommand{\commone}{\skeval\csym_1}
\newcommand{\commtwo}{\skeval\csym_2}
\newcommand{\commthree}{\skeval\csym_3}
\newcommand{\commfour}{\skback\csym_4}
\newcommand{\commfive}{\skback\csym_5}
\newcommand{\commsix}{\skback\csym_6}

\newcommand{\tomachhole}[1]{\leadsto_{#1}}
\newcommand{\tomach}{\tomachhole{}}
\newcommand{\tomachm}{\tomachhole{\msym}}

\newcommand{\tomache}{\tomachhole{\esym}}

\newcommand{\tomacha}{\tomachhole{\admsym}}

\newcommand{\tomachasub}[1]{\tomachhole{\admsym_{#1}}}

\newcommand{\tomachasubone}{\tomachhole\commone}
\newcommand{\tomachasubtwo}{\tomachhole\commtwo}
\newcommand{\tomachasubthree}{\tomachhole\commthree}
\newcommand{\tomachasubfour}{\tomachhole\commfour}
\newcommand{\tomachasubfive}{\tomachhole\commfive}
\newcommand{\tomachasubsix}{\tomachhole\commsix}

\newcommand{\admnf}[1]{\mathtt{nf}_{\admsym}(#1)}
\newcommand{\tomachc}{\tomachhole{\admsym}}
\newcommand{\tomachcp}[1]{\tomachhole{\admsym#1}}

\newcommand{\tomachp}{\tomachhole{\psym}}


\newcommand{\code}{\overline{\tm}}
\newcommand{\codetwo}{\overline{\tmtwo}}
\newcommand{\codethree}{\overline{\tmthree}}
\newcommand{\codefour}{\overline{\tmfour}}



\newcommand{\genv}{E}
\newcommand{\genvtwo}{E'}
\newcommand{\genvthree}{E''}


\newcommand{\genvweak}{\genv_{\mathrm w}}
\newcommand{\genvweaktwo}{\genvtwo_{\mathrm w}}

\newcommand{\genvtrunk}{\genv_{\mathrm t}}

\newcommand{\stctx}[1]{\ctx_{#1}}
\newcommand{\stctxp}[2]{\ctx_{#1}\ctxholep{#2}}

\newcommand{\stempty}{\epsilon}
\newcommand{\cons}{:}

\newcommand{\stack}{\pi}
\newcommand{\stacktwo}{\pi'}



\newcommand{\state}{s}
\newcommand{\statetwo}{s'}
\newcommand{\statethree}{s''}









\newcommand{\rename}[1]{#1^\alpha}

\newcommand{\exec}{\rho}
\newcommand{\exectwo}{\sigma}

\newcommand{\decode}[1]{\llbracket #1\rrbracket}
\renewcommand{\decode}[1]{\underline{#1}}



\newcommand{\dstackp}[1]{\decode{\stack}\ctxholep{#1}}


\newcommand{\deriv}{d}
\newcommand{\derivtwo}{e}

\newcommand{\sizee}[1]{|#1|_{\esym}}
\newcommand{\sizem}[1]{|#1|_{\msym}}

\newcommand{\sizep}[1]{|#1|_p}







\newcommand{\distil}{{\tt D}}
\newcommand{\calculus}{{\tt C}}
\renewcommand{\calculus}{\togen}
\newcommand{\mach}{{\tt M}}

\renewcommand{\rtodb}{\rootRew{\msym}}
\renewcommand{\rtols}{\rootRew{\esym}}




%

\newcommand{\csym}{{\mathtt c}}

\newcommand{\la}[1]{\lambda #1.}


\newcommand{\maca}[5]{#2 & #3 & #4 & #5 & #1}

\newcommand{\mac}[5]{(#2,#3,#4,#5,#1)}

\newcommand{\sizecom}[1]{|#1|_c}

\newcommand{\decodep}[2]{\decode{#1}\ctxholep{#2}}



\newcommand{\myproof}[1]{
\ifthenelse{\boolean{omitproofs}}{\begin{IEEEproof} Proof available but omitted for readability. \end{IEEEproof}}{#1}}


\newcommand{\esmeas}[1]{|#1|_{[\cdot]}}

\newcommand{\decodefun}{\decode{{ }\cdot{ }}}


\newcommand{\node}{\mathtt{n}}


\newcommand{\quiet}{neutral\xspace}
\newcommand{\Quiet}{Neutral\xspace}

\newcommand{\prefix}{\prec_p}
\newcommand{\outin}{\prec_O}
\newcommand{\leftright}{\prec_L}
\newcommand{\leftout}{\prec_{\lo}}

\newcommand{\lo}{LO\xspace}

\newcommand{\ilo}{i\lo}

\newcommand{\tolo}{\Rew{\tiny \mathtt{LO}}}

\newcommand{\lfv}[1]{{\tt lfv}(#1)}

\newcommand{\skamstate}[5]{#2\mid#3\mid#4\mid#5\mid#1}
\newcommand{\skeval}{\textcolor{blue}{\blacktriangledown}}
\newcommand{\skback}{\textcolor{red}{\blacktriangle}}

\newcommand{\compatible}{\propto}
\newcommand{\skap}[2]{(#1,#2)}

\newcommand{\skframe}{F}

\newcommand{\skframeweak}{\skframe_{\mathrm w}}
\newcommand{\skframeweaktwo}{\skframetwo_{\mathrm w}}
\newcommand{\skframetrunk}{\skframe_{\mathrm t}}

\newcommand{\skframetwo}{\skframe'}
\newcommand{\skframethree}{\skframe''}

\newcommand{\skboundvar}{\skeval}
\newcommand{\openscope}{\skeval}
\newcommand{\openscopem}[1]{\skeval #1}
\newcommand{\closescope}{\skback}
\newcommand{\closescopem}[1]{\skback#1}
\newcommand{\skphase}{\varphi}

\newcommand{\skl}{\Lambda}
\newcommand{\sklp}[1]{\Lambda(#1)}

\newcommand{\skammachine}{Strong MAM\xspace}
\newcommand{\rmeasure}[1]{\size{#1}}

\newcommand\scalemath[2]{\medskip\scalebox{#1}{\mbox{\ensuremath{\displaystyle #2}}}\medskip}

\newcommand{\decodecompat}[2]{\decode{#1 \compatible #2}}

\newcommand{\sizecomev}[1]{|#1|_{\skeval \csym}}
\newcommand{\sizecombt}[1]{|#1|_{\skback \csym}}
\newcommand{\polsize}[2]{|#1|_{#2}}

\newcommand{\withproofs}[1]{\ifthenelse{\boolean{withproofs}}{#1}{}}

\newcommand{\withoutproofs}[1]{\ifthenelse{\boolean{withproofs}}{}{#1}}



\renewcommand{\lssym}{\esym}
\renewcommand{\db}{\msym}

\newboolean{withproofs}
\setboolean{withproofs}{true}
\pagestyle{headings} 

\begin{document}

\mainmatter

\title{A Strong Distillery}

\author{
Beniamino Accattoli\inst{1} \and
Pablo Barenbaum\inst{2} \and
Damiano Mazza\inst{3}
}

\institute{
INRIA, UMR 7161, LIX, \'Ecole Polytechnique, CNRS\\
\email{beniamino.accattoli@inria.fr}
\and
University of Buenos Aires -- CONICET\\
\email{pbarenbaum@dc.uba.ar}
\and
CNRS, UMR 7030, LIPN, Universit\'e Paris 13, Sorbonne Paris Cit\'e \\
\email{Damiano.Mazza@lipn.univ-paris13.fr}
}

\maketitle

\begin{abstract}
	Abstract machines for the strong evaluation of $\lambda$-terms (that is, under abstractions) are a mostly neglected topic, despite their use in the implementation of proof assistants and higher-order logic programming languages. This paper introduces a machine for the simplest form of strong evaluation, leftmost-outermost (call-by-name) evaluation to normal form, proving it correct, complete, and bounding its overhead. Such a machine, deemed \emph{Strong Milner Abstract Machine}, is a variant of the KAM computing normal forms and using just one global environment. Its properties are studied via a special form of decoding, called a \emph{distillation}, into the Linear Substitution Calculus, neatly reformulating the machine as a standard micro-step strategy for explicit substitutions, namely \emph{linear leftmost-outermost reduction}, \ie the extension to normal form of linear head reduction. Additionally, the overhead of the machine is shown to be linear both in the number of steps and in the size of the initial term, validating its design. The study highlights two distinguished features of strong machines, namely backtracking phases and their interactions with abstractions and environments.
\end{abstract}


\section{Introduction}
The computational model behind functional programming is the weak $\l$-calculus, where \emph{weakness} is the fact that evaluation stops as soon as an abstraction is obtained. Evaluation is usually defined in a small-step way, specifying a strategy for the selection of weak $\beta$-redexes. Both the advantage and the drawback of $\l$-calculus is the lack of a machine in the definition of the model. Unsurprisingly implementations of functional languages have been explored for decades. 

Implementation schemes are called \emph{abstract machines}, and usually account for two tasks. First, they switch from small-step to \emph{micro-step} evaluation, delaying the costly meta-level substitution used in small-step operational semantics and replacing it with substitutions of one occurrence at a time, when required. Second, they also  \emph{search the next redex} to reduce, walking through the program according to some evaluation strategy. Abstract machines are \emph{machines} because they are deterministic and the complexity of their steps can easily be measured, and are \emph{abstract} because they omit many details of a real implementation, like the actual representation of terms and data-structures or the garbage collector.

Historically, the theory of $\l$-calculus and the implementation of functional languages have followed orthogonal approaches. The former rather dealt with \emph{strong} evaluation, and it is only since the seminal work of Abramsky and Ong \cite{DBLP:journals/iandc/AbramskyO93} that the theory took weak evaluation seriously. Dually, practical studies mostly ignored  \emph{strong} evaluation, with the notable exception of Cr\'{e}gut \cite{DBLP:conf/lfp/Cregut90,DBLP:journals/lisp/Cregut07} (1990) and, more recently, the semi-strong approach of Gr\'egoire and Leroy \cite{DBLP:conf/icfp/GregoireL02} (2002)---see also the \emph{related work} paragraph below. Strong evaluation is nonetheless essential in the implementation of proof assistants or higher-order logic programming, typically for type-checking in frameworks with dependent types as the Edinburgh Logical Framework or the Calculus of Constructions, as well as for unification modulo $\beta\eta$ in simply typed frameworks like $\l$-prolog. 

The aim of this paper is to move the first steps towards a systematic and theoretical exploration of the implementation of strong evaluation. Here we deal with the simplest possible case, call-by-name evaluation to strong normal form, implemented by a variant of the Krivine Abstract Machine. The study is carried out according to the \emph{distillation methodology}, a new approach recently introduced by the authors and previously applied only to weak evaluation \cite{DBLP:conf/icfp/AccattoliBM14}.

\paragraph*{Distilling Abstract Machines.}
Many abstract machines can be rephrased as strategies in \emph{$\l$-calculi with explicit substitutions} (ES for short), see at least \cite{DBLP:journals/tcs/Curien91,DBLP:journals/jfp/HardinM98,DBLP:journals/lisp/Cregut07,DBLP:journals/tocl/BiernackaD07,DBLP:journals/lisp/Lang07,DBLP:journals/toplas/AriolaBS09}. The Linear Substitution Calculus (LSC)---a variation over a $\l$-calculus with ES by Robin Milner \cite{DBLP:journals/entcs/Milner07} developed by Accattoli and Kesner \cite{DBLP:conf/rta/Accattoli12,DBLP:conf/popl/AccattoliBKL14}---provides more than a simple reformulation: it disentangles the two tasks carried out by abstract machines, retaining the \emph{micro-step operational semantics} and omitting the \emph{search for the next redex}. Such a neat disentangling, that we prefer to call a \emph{distillation}, is a decoding based on the following key points:
\begin{enumerate}
 \item \emph{Partitioning}: the machine transitions are split in two classes. \emph{Principal transitions} are mapped to the rewriting rules of the calculus, while \emph{commutative transitions}---responsible for the search for the redex---are mapped on a notion of structural equivalence, specific to the LSC.
 \item \emph{Rewriting}: structural equivalence accounts both for the search for the redex and garbage collection, and commutes with evaluation. It can thus be postponed, isolating the micro-step strategy in the rewriting of the LSC.
 \item \emph{Logic}: the LSC itself has only two rules, corresponding to cut-elimination in linear logic proof nets. A distillation then provides a logical reading of an abstract machine (see \cite{DBLP:conf/icfp/AccattoliBM14} for more details).
 \item \emph{Complexity}: by design, a principal transition has to take linear time in the input, while a commutative transition has to be constant. 
\end{enumerate}
 
A \emph{distillery} is then given by a machine, a strategy, a structural equivalence, and a decoding function satisfying the above points. In bilinear distilleries, the number of commutative transitions is linear in both the \emph{number of principal transitions} and the \emph{size of the initial term}. Bilinearity guarantees that distilling away the commutative part by switching to the LSC preserves the asymptotical behavior, \ie\ it does not forget too much. At the same time, the bound on the commutative overhead justifies the design of the abstract machine, providing a provably bounded implementation scheme.
 
\paragraph*{A Strong Distillery.}
Our machine is a strong version of the Milner Abstract Machine (MAM), a variant with just one \emph{global environment} of the Krivine Abstract Machine (KAM), introduced in \cite{DBLP:conf/icfp/AccattoliBM14}. 

The first result of the paper is the design of a distillery relating the \skammachine to \emph{linear leftmost-outermost reduction} in the LSC \cite{DBLP:conf/popl/AccattoliBKL14,DBLP:conf/csl/AccattoliL14}---that is at the same time a refinement of leftmost-outermost (\lo) $\beta$-reduction and an extension of linear head reduction \cite{DBLP:journals/tcs/MascariP94,Danos04headlinear,DBLP:conf/rta/Accattoli12} to normal form---together with the proof of correctness and completeness of the implementation \cite{DBLP:journals/tcs/Plotkin75}. Moreover, the linear \lo strategy is \emph{standard} and \emph{normalizing} \cite{DBLP:conf/popl/AccattoliBKL14}, and thus we provide an instance of Plotkin's approach of mapping abstract machines to such strategies \cite{DBLP:conf/popl/AccattoliBKL14}.

The second result is the complexity analysis showing that the distillery is bilinear, \ie that  the cost of the additional search for the next redex specific to the machine is negligible.  The analysis is simple, and yet subtle and robust. It is subtle because it requires a global analysis of executions, and it is robust because the overhead is bilinear for \emph{any} evaluation sequence, not necessarily to normal form, and even for diverging ones. 

For the design of the \skammachine we make various choices:
\begin{enumerate}
 \item \emph{Global Environment}: we employ a \emph{global} environment, which is in opposition to having closures (pairing subterms with \emph{local} environments), and it models a store-based implementation scheme. The choice is motivated by future extensions to more efficient strategies as call-by-need, where the global environment allows to integrate sharing with a form of memoization \cite{DBLP:conf/ppdp/DanvyZ13,DBLP:conf/icfp/AccattoliBM14}.

 \item \emph{Sequential Exploration and Backtracking}: we fix a sequential exploration of the term (according to the leftmost-outermost order), in opposition to the parallel evaluation of the arguments (once a head normal form has been reached). This choice internalizes the handling of the recursive iterations, that would be otherwise left to the meta-level, providing a finer study of the data-structures needed by a strong machine. 
 On the other hand, it forces to have backtracking transitions, activated when the current subterm has been checked to be normal and evaluation needs to retrieve the next subterm on the stack. Call-by-value machines usually have a similar but simpler backtracking mechanism, realized via an additional component, the \emph{dump}. 

 \item \emph{(Almost) No Garbage Collection}: we focus on time complexity, and thus ignore space issues, that is, our machine does not account for garbage collection. In particular, we keep the global environment completely unstructured, similarly to the (weak) MAM. Strong evaluation however is subtler, as to establish a precise relationship between the machine and the calculus with ES, garbage collection cannot be completely ignored. Our approach is to isolate it within the meta-level: we use a system of parenthesized markers, to delimit subenvironments created under abstractions that could be garbage collected once the machine backtracks outside those abstraction. These labels are not inspected by the transitions, and play a role only for the proof of the distillation theorem. Garbage collection then is somewhat accounted for by the analysis, but there are no dedicated transitions nor rewriting rules, it is rather encapsulated in the decoding and in the structural equivalence.
 \end{enumerate}


\paragraph*{Efficiency?}
It is known that \lo\ evaluation is not efficient. Improvements are possible along three axis: refining the strategy (by turning to strong call-by-value/need, partially done in \cite{DBLP:conf/icfp/GregoireL02,DBLP:journals/lisp/Cregut07,fireballs}), speeding up the substitution process (by forbidding the substitution of variables, see \cite{DBLP:conf/wollic/AccattoliC14,fireballs}), and  avoiding useless substitutions (by adding \emph{useful sharing}, see \cite{DBLP:conf/csl/AccattoliL14,fireballs}). These improvements however require sophisticated machines, left to future work. 

\lo\ evaluation is nonetheless a good first case study, as it allows to isolate the analysis of backtracking phases and their subtle interactions with abstractions and environments. We expect that the mentioned optimizations can be added in a quite modular way, as they  have all been addressed in the complementary study in \cite{fireballs}, based on the same technology (\ie LSC and distilleries).

\paragraph*{(Scarce) Related Work.}
Beyond Cr\'{e}gut's \cite{DBLP:conf/lfp/Cregut90,DBLP:journals/lisp/Cregut07}, we are aware of only two other similar works on strong abstract machines, 
Garc{\'{\i}}a{-}P{\'{e}}rez, Nogueira and Moreno-Navarro's \cite{DBLP:conf/ppdp/Garcia-PerezNM13} (2013),
and Smith's \cite{ConnorSmith2014} (unpublished, 2014). 
Two further studies, de Carvalho's \cite{DBLP:journals/corr/abs-0905-4251} and  
Ehrhard and Regnier's \cite{DBLP:conf/cie/EhrhardR06}, introduce strong versions of the KAM but for theoretical purposes; in particular, their design choices are not tuned towards implementations (\eg rely on a na\"ive parallel exploration of the term). 
Semi-strong machines for call-by-value (\ie dealing with weak evaluation but on open terms) are studied by Gr\'egoire and Leroy \cite{DBLP:conf/icfp/GregoireL02} and in a recent work by Accattoli and Sacerdoti Coen \cite{fireballs} (see \cite{fireballs} for a comparison with \cite{DBLP:conf/icfp/GregoireL02}). 
More recent work by D\'en\`es \cite{Denes} and Boutiller \cite{Boutiller} appeared in the context of term evaluation in Coq. These works, which do offer the nice perspective of concretely dealing with proof assistants, are focused on quite specific Coq-related tasks (such as term simplification) and the difference in reduction strategy and underlying motivations makes a comparison difficult.

Of all the above, the closest to ours is Cr\'egut's work, because it defines an implementation-oriented strong KAM, thus also addressing leftmost-outermost reduction. His machine uses local environments, sequential exploration and backtracking, scope markers akin to ours, and a calculus with ES to establish the correctness of the implementation. His calculus, however, has no less than 13 rewriting rules, while ours just 2, and so our approach is simpler by an order of magnitude. Moreover, we want to stress that our contribution does not lie in the machine \emph{per se}, or the chosen reduction strategy (as long as it is strong), but in the combined presence of a robust and simple abstraction of the machine, provided by the LSC, and the complexity analysis showing that such an abstraction does not miss too much.  In this respect, none of the above works comes with an analysis of the overhead of the machine nor with the logical and rewriting perspective we provide.  In fact, our approach offers general guidelines for the design of (strong) abstract machines. The choice of leftmost-outermost reduction showcases the idea while keeping technicalities to a minimum, but it is by no means a limitation. The development of strong distilleries for call-by-value or lazy strategies, which may be more attractive from a programming languages perspective, are certainly possible and will be the object of future work (again, an intermediary step has already been taken in \cite{fireballs}).

Global environments are explored by Fern{\'{a}}ndez and Siafakas in \cite{DBLP:journals/entcs/FernandezS09}, and used in a minority of works, \eg \cite{DBLP:conf/birthday/SandsGM02,DBLP:conf/ppdp/DanvyZ13}. We introduced the distillation technique in \cite{DBLP:conf/icfp/AccattoliBM14} to revisit the relationship between the KAM and weak linear head reduction pointed out by Danos and Regnier \cite{Danos04headlinear}. Distilleries have also been used in \cite{fireballs}. The idea to distinguish between \emph{operational content} and \emph{search for the redex} in an abstract machine is not new, as it underlies in particular the \emph{refocusing semantics} of Danvy and Nielsen \cite{Danvy04refocusingin}. The LSC, with its roots in linear logic proof nets, allows to see this distinction as an avatar of the principal/commutative divide in cut-elimination, because machine transitions may be seen as cut-elimination steps \cite{DBLP:journals/toplas/AriolaBS09,DBLP:conf/icfp/AccattoliBM14}. Hence, it is fair to say that distilleries bring an original refinement where logic, rewriting, and complexity enlighten the picture, leading to formal bounds on machine overheads.

Omitted proofs may be found in the appendices.

\section{Linear Leftmost-Outermost Reduction}
\label{sect:lo-reduction}
The language of the \emph{linear substitution calculus} (\lsc\ for short) is given by the following term grammar:
\[\begin{array}{c@{\hspace{2em}}ccccc}
   \mbox{LSC Terms} & \tm,\tmtwo,\tmthree,\tmfour & \grameq & \var\mid \l \var. \tm \mid \tm \tmtwo\mid  \tm\esub\var\tmtwo.
  \end{array}\]
The constructor $\tm\esub{\var}{\tmtwo}$ is called an \emph{explicit
  substitution, shortened ES} (of $\tmtwo$ for $\var$ in $\tm$). Both $\l
\var. \tm$ and $\tm\esub{\var}{\tmtwo}$ bind $\var$ in $\tm$, and we
silently work modulo $\alpha$-equivalence of these bound variables,
\eg\ $(\var\vartwo)\esub\vartwo\tm\isub\var\vartwo =
(\vartwo\varthree)\esub\varthree\tm$. 

The operational semantics of the \lsc\ is parametric
in a notion of (one-hole) context. General \emph{contexts}, that simply extend the contexts for $\l$-terms with the two cases for ES, and the special case of \emph{substitution contexts} are defined by:
\[\begin{array}{r@{\hspace{2em}}cclcc}
   \mbox{Contexts} & \ctx, \ctxtwo & \grameq & \ctxhole\mid \l \var. \ctx\mid \ctx \tm \mid\tm\ctx\mid\ctx\esub{\var}{\tm}\mid\tm\esub{\var}{\ctx};\\
   \mbox{Substitution Contexts} & \sctx, \sctxtwo & \grameq & \ctxhole \mid \sctx\esub\var\tm.
  \end{array}\]
We write $\ctx\prefix\tm$ if there is a term $\tmtwo$ s.t. $\ctxp\tmtwo = \tm$, call it the
\emph{prefix relation}.

The rewriting relation is $\to \defeq \tom\cup\toe$ where $\tom$ and $\toe$ are the \emph{multiplicative} and \emph{exponential} rules, defined by
\[\begin{array}{r@{\hspace{0.8cm}}c@{\hspace{1cm}}c}
  &\textsc{Rule at Top Level} & \textsc{Contextual closure} \\
  \mbox{Multiplicative} & \sctxp{\l \var.\tm} \tmtwo  \rtom \sctxp{\tm\esub{\var}{\tmtwo}} &
        \ctxp \tm \tom \ctxp \tmtwo \textrm{~~~if } \tm \rtom \tmtwo \\

  \mbox{Exponential} & \ctxp\var \esub\var\tmtwo \rtoe \ctxp\tmtwo \esub\var\tmtwo &
        \ctxp \tm \toe \ctxp \tmtwo \textrm{~~~if } \tm \rtoe \tmtwo
\end{array}\]

The rewriting
rules are assumed to use \emph{on-the-fly} $\alpha$-equivalence to
avoid variable capture. For instance, $(\la \var\tm)\esub\vartwo\tmtwo \vartwo  \tom
   \tm\isub\vartwo\varthree\esub\var\vartwo\esub\varthree\tmtwo$ for
     $\varthree\notin\fv\tm$, and $(\la\vartwo(\var\vartwo))\esub{\var}{\vartwo} \toe (\la
   \varthree(\vartwo\varthree))\esub{\var}{\vartwo}$. 
Moreover, in $\toe$ the context $\ctx$ is assumed to not capture
$\var$, in order to have $(\l \var.\var)\esub{\var}{\vartwo} \not\toe (\l
\var.\vartwo)\esub{\var}{\vartwo}$.

The above operational semantics ignores garbage collection. In the \lsc, this may be realized by an additional rule which may always be postponed, see \cite{DBLP:conf/rta/Accattoli12}. 

Taking the external context into account, an exponential step has the
form $
\ctxtwop{\ctxp{\var}\esub{\var}{\tmtwo}} \toe
\ctxtwop{\ctxp{\tmtwo}\esub{\var}{\tmtwo}}$. We shall often use a
\emph{compact} form:
\[\begin{array}{c@{\hspace{1.5cm}}c}
  \multicolumn{2}{c}{\textsc{Exponential Rule in Compact Form}}\\
	\ctxthreep{\var}\toe\ctxthreep{\tmtwo} &
        \mbox{if }\ctxthree =
\ctxtwop{\ctx\esub{\var}{\tmtwo}}\\

\end{array}\]

\begin{definition}[Redex Position]
Given a $\tom$-step $\ctxp\tm \tom \ctxp\tmtwo$ with $\tm\rtom\tmtwo$ or a compact $\toe$-step $\ctxp\var\toe\ctxp\tm$, the \emph{position} of the redex is the context $\ctx$.
\end{definition}

We identify a redex with its position, thus using $\ctx,\ctxtwo,\ctxthree$ for redexes, and use $\deriv:\tm\to^k\tmtwo$ for derivations, \ie\ for possibly empty sequences of rewriting steps. We write $\esmeas\tm$ for
the number of substitutions in $\tm$, and use $\size\deriv$, $\sizem\deriv$, and $\sizee\deriv$ for the number of steps, $\msym$-steps, and $\esym$-steps in  $\deriv$, respectively.

\paragraph{Linear Leftmost-Outermost Reduction, Two Definitions.} 
We give two definitions of linear \lo reduction $\tolo$, a traditional one based on ordering redexes and a new contextual one not mentioning the order, apt to work with LSC and relate it to abstract machines. We start by defining the \lo\ order on contexts.

\begin{definition}[\lo\ Order]  
    The \emph{outside-in order} $\ctx\outin\ctxtwo$ is defined by
    \begin{enumerate}
    \item 
      \emph{Root}: $\ctxhole\outin\ctx$ for every context
      $\ctx\neq\ctxhole$;
    \item 
      \emph{Contextual closure}: if $\ctx\outin\ctxtwo$ then
      $\ctxthreep\ctx\outin\ctxthreep\ctxtwo$ for any context
      $\ctxthree$.
    \end{enumerate}
    Note that $\outin$ can be seen as the prefix relation $\prefix$ on
      contexts. 
    The \emph{left-to-right order} $\ctx\leftright\ctxtwo$ is defined by
    \begin{enumerate}
    \item 
      \emph{Application}: if $\ctx\prefix \tm$ and
      $\ctxtwo\prefix\tmtwo$ then
      $\ctx\tmtwo\leftright\tm\ctxtwo$;
    \item 
      \emph{Substitution}: if $\ctx\prefix \tm$ and
      $\ctxtwo\prefix\tmtwo$ then
      $\ctx\esub\var\tmtwo\leftright\tm\esub\var\ctxtwo$;
    \item 
      \emph{Contextual closure}: if $\ctx\leftright\ctxtwo$ then
      $\ctxthreep\ctx\leftright\ctxthreep\ctxtwo$ for any context
      $\ctxthree$.
    \end{enumerate}      
    Last, the \emph{left-to-right outside-in order} is defined by
    $\ctx\leftout\ctxtwo$ if $\ctx\outin\ctxtwo$ or
    $\ctx\leftright\ctxtwo$.
  
\end{definition}

Two examples of the outside-in order are $(\l\var.\ctxhole)\tm
  \outin 
  (\l\var.(\ctxhole\esub\vartwo\tmtwo))\tm$ and $\tm\esub\var\ctxhole \outin \tm\esub\var{\tmtwo\ctx}$, and an example of the left-to-right order is ${\tm\esub\var\ctx}\tmtwo \leftright \tm\esub\var\tmthree\ctxhole$.
The next immediate lemma guarantees that we defined a total order.
\begin{lemma}[Totality of $\leftout$]\label{l:lefttor-basic} 
  If $\ctx\prefix\tm$ and $\ctxtwo\prefix\tm$ then either
  $\ctx\leftout\ctxtwo$ or $\ctxtwo\leftout\ctx$ or
  $\ctx=\ctxtwo$.
\end{lemma}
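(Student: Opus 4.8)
The plan is to prove the statement by structural induction on $\tm$, with a case analysis on the outermost shapes of $\ctx$ and $\ctxtwo$. First I would dispose of the cases involving the empty context: if $\ctx=\ctxhole$, then either $\ctxtwo=\ctxhole$ and so $\ctx=\ctxtwo$, or $\ctxtwo\neq\ctxhole$ and the \emph{Root} clause gives $\ctxhole\outin\ctxtwo$, hence $\ctx\leftout\ctxtwo$; the case $\ctxtwo=\ctxhole$ is symmetric. So I may assume both $\ctx$ and $\ctxtwo$ are non-empty. Then $\tm$ cannot be a variable (whose only prefix context is $\ctxhole$), and moreover the topmost constructor of each of $\ctx$ and $\ctxtwo$ must coincide with that of $\tm$, which dictates the remaining cases.

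For the inductive step there are three cases. If $\tm=\l\var.\tmthree$, then necessarily $\ctx=\l\var.\ctx'$ and $\ctxtwo=\l\var.\ctxtwo'$ with $\ctx',\ctxtwo'\prefix\tmthree$ (working modulo $\alpha$ so that the bound variable is literally shared); one applies the induction hypothesis to $\ctx'$ and $\ctxtwo'$ and closes the resulting relation under the context $\l\var.\ctxhole$, using the contextual-closure clauses of $\outin$ and of $\leftright$ (and congruence of $=$). If $\tm=\tmthree\tmfour$ — and likewise if $\tm=\tmthree\esub\var\tmfour$ — a non-empty prefix of $\tm$ is either ``on the left'', of the form $\ctx'\tmfour$ with $\ctx'\prefix\tmthree$, or ``on the right'', of the form $\tmthree\,\ctx'$ with $\ctx'\prefix\tmfour$ (resp. $\ctx'\esub\var\tmfour$ and $\tmthree\esub\var{\ctx'}$); this dichotomy is exhaustive and exclusive precisely because a context has exactly one hole. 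If $\ctx$ and $\ctxtwo$ lie on the same side, one concludes by the induction hypothesis applied to $\tmthree$ or to $\tmfour$, closing under $\ctxhole\,\tmfour$, $\tmthree\,\ctxhole$ (resp. $\ctxhole\esub\var\tmfour$, $\tmthree\esub\var\ctxhole$). If they lie on opposite sides, the \emph{Application} clause (resp. the \emph{Substitution} clause) of $\leftright$ directly yields $\ctx\leftright\ctxtwo$ or $\ctxtwo\leftright\ctx$ according to which of the two is ``on the left'', and hence the claim. This covers all cases.

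The proof is pure bookkeeping, so I do not expect a genuine obstacle; the only points requiring (minor) care are checking that at each node the three alternatives stay mutually exclusive and jointly exhaustive — in particular that a non-empty prefix of an application or of an ES sits entirely inside one of the two immediate subterms — and that the side conditions of the \emph{Application} and \emph{Substitution} clauses (namely $\ctx'\prefix\tmthree$ and $\ctxtwo'\prefix\tmfour$) are exactly what the case analysis makes available. This is consistent with the statement being flagged as immediate.
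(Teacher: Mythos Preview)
Your proposal is correct and follows exactly the approach the paper indicates: a straightforward structural induction on $\tm$, with the obvious case analysis on the outermost constructors of $\ctx$ and $\ctxtwo$. The paper itself dismisses the proof as ``a trivial induction on $\tm$'' without giving details, so your write-up is in fact more explicit than the original.
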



Remember that we identify redexes with their position context and write $\ctx\leftout\ctxtwo$. We can now define linear \lo\ reduction, first considered in \cite{DBLP:conf/popl/AccattoliBKL14}, where it is proved that it is standard and normalizing, and then in \cite{DBLP:conf/csl/AccattoliL14}, extending linear head reduction \cite{DBLP:journals/tcs/MascariP94,Danos04headlinear,DBLP:conf/rta/Accattoli12} to normal form.

\begin{definition}[Linear \lo Reduction $\tolo$]
\label{def:linear-lo-red}
  Let $\tm$ be a term. $\ctx$ is the
  \emph{leftmost-outermost} (\lo\ for short) redex of $\tm$ if $\ctx\leftout\ctxtwo$ for
  every other redex $\ctxtwo$ of $\tm$. We
  write $\tm\tolo\tmtwo$ if a step reduces
  the \lo\ redex.
\end{definition}


We now define \lo contexts and prove that the position of a linear \lo step is always a \lo context. We need two notions. 

\begin{definition}[\Quiet Term]
A term is \emph{\quiet} if it is $\to$-normal and it is not of the form $\sctxp{\la\var\tm}$.  
\end{definition}

\Quiet terms are such that their plugging in a context cannot create a multiplicative redex. We also need the notion of left free variable of a context, \ie\ of a variable occurring free at the left of the hole. 

\begin{definition}[Left Free Variables]
The set $\lfv\ctx$ of \emph{left free variables} of $\ctx$ is defined by:
\begin{align*}
    \lfv{\ctxhole} &\defeq \emptyset & \lfv{\tm\ctx} &\defeq \fv{\tm} \cup \lfv{\ctx} \\
    \lfv{\l\var.\ctx} &\defeq \lfv{\ctx} \setminus \set{\var} & \lfv{\ctx\esub{\var}{\tm}} &\defeq \lfv{\ctx} \setminus \set{\var} \\
    \lfv{\ctx\tm} & \defeq\lfv{\ctx} & \lfv{\tm\esub{\var}{\ctx}}                &\defeq (\fv{\tm} \setminus \set{\var})\cup \lfv{\ctx}
\end{align*}
\end{definition}


\begin{definition}[\lo\ Contexts]
  \label{def:LOCtx}
  A context $\ctx$ is 
  	\label{p:LOCtx-internal}\emph{\lo} if 
    \begin{enumerate}
     	\item \emph{Right Application}: whenever $\ctx = \ctxtwop{\tm\ctxthree}$ then $\tm$ is \quiet, and
	\item \emph{Left Application}: whenever $\ctx = \ctxtwop{\ctxthree\tm}$ then $\ctxthree\neq\sctxp{\la\var\ctxfour}$.
	\item \emph{Substitution}: whenever $\ctx = \ctxtwop{\ctxthree\esub\var\tmtwo}$ then $\var\notin\lfv\ctxthree$.
    \end{enumerate}
%
%
\end{definition}



\begin{lemma}[\lo Reduction and \lo\ Contexts]
	\label{l:LO-characts}
	Let $\tm \to \tmtwo$ by reducing a redex $\ctx$. Then $\ctx$ is a $\tolo$ step iff $\ctx$ is \lo.
\end{lemma}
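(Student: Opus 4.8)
The plan is to route everything through the total order $\leftout$. Redex positions of $\tm$ are prefixes of $\tm$, so by Lemma~\ref{l:lefttor-basic} they are linearly ordered by $\leftout$; hence the $\tolo$ redex of Definition~\ref{def:linear-lo-red} is their unique minimum, and the statement reduces to: \emph{a redex $\ctx$ of $\tm$ is $\leftout$-minimal iff $\ctx$ is a \lo\ context.} For the direction ``\lo\ context $\Rightarrow$ minimal'' I would argue by induction on $\ctx$, proving the slightly stronger claim that if $\ctx$ is \lo\ and $\ctxtwo$ is any redex of $\tm = \ctxp{r}$, then $\ctx = \ctxtwo$ or $\ctx \leftout \ctxtwo$. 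The two ingredients are: (i) the clauses of Definition~\ref{def:LOCtx} are \emph{hereditary} --- writing $\ctx$ as a single context constructor wrapped around a smaller context $\ctx'$, $\ctx$ being \lo\ is equivalent to $\ctx'$ being \lo\ plus the ``root'' instance of the relevant clause(s); and (ii) a redex $\ctxtwo$ of $\tm$ is either the root ($\ctxtwo = \ctxhole$, possible only if $\tm$ itself matches a redex pattern, and then $\ctxhole \outin \ctx$ settles it) or lies in an immediate subterm of $\tm$. If $\ctxtwo$ lies in the same subterm as the hole of $\ctx$, apply the induction hypothesis and lift the comparison with the contextual closures of $\outin$ and $\leftright$; if $\ctxtwo$ lies strictly to the right, we get $\ctx \leftright \ctxtwo$; and if $\ctxtwo$ lies ``to the left'' or is the node just above the hole, the \lo\ clauses exclude it --- Right Application makes the function of such an application \quiet, hence $\to$-normal and not a multiplicative-redex pattern; Left Application forbids the application node above the hole from being a multiplicative redex; Substitution forbids an exponential step firing into a left free variable of the sub-context above an ES; the clauses constraining ES arguments do the same job for ES nodes.

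For the converse, ``minimal $\Rightarrow$ \lo\ context'', I would prove the contrapositive: if the redex $\ctx$ violates some clause of Definition~\ref{def:LOCtx}, I exhibit a redex $\ctxtwo$ with $\ctxtwo \leftout \ctx$ and $\ctxtwo \neq \ctx$, so $\ctx$ is not minimal. If Right Application fails, $\ctx = \ctxtwop{\tm\ctxthree}$ with $\tm$ not \quiet: either $\tm$ is not $\to$-normal, so a redex of $\tm$ --- being at the left of the hole --- is $\leftright$-below $\ctx$ by the Application clause of $\leftright$ and contextual closure; or $\tm = \sctxp{\la\var\tmthree}$, and then $\tm\,\ctxthreep{r}$ is a multiplicative redex at position $\ctxtwop{\ctxhole} \outin \ctx$. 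If Left Application fails, $\ctx = \ctxtwop{\ctxthree\tm}$ with $\ctxthree$ of the forbidden shape, so $\ctxthreep{r}$ has the form $\sctxp{\la\var\tmthree}$ and $\ctxthreep{r}\,\tm$ is a multiplicative redex at a position $\outin$-below $\ctx$. If Substitution fails, $\ctx = \ctxtwop{\ctxthree\esub\var\tmtwo}$ with $\var \in \lfv\ctxthree$: after the on-the-fly $\alpha$-renaming that keeps $\var$ bound by this ES, $\ctxthreep{r}$ has a free occurrence of $\var$ to the left of the hole and not captured inside $\ctxthree$ --- which is precisely what $\lfv$ records --- so this occurrence is an exponential redex under $\esub\var\tmtwo$ whose compact position $\ctxtwo$ sits there; then $\ctxtwo \leftright \ctx$ follows from contextual closure of $\leftright$ through the ES node, once one knows that a left free variable of $\ctxthree$ lies $\leftright$-below the hole of $\ctxthree$. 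The clauses on ES arguments are handled exactly as Right Application.

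I expect the main obstacle to be the bookkeeping around substitution contexts, for two reasons. First, in the Left Application case one must characterize exactly which sub-contexts $\ctxthree$ are ``bad'': a context that is \emph{not} literally $\sctxp{\la\var\ctxfour}$ can still turn into one of the form $\sctxp{\la\var\tmthree}$ once its hole is filled with a redex --- namely when the hole sits in an ES argument along the spine of a substitution context that wraps an abstraction --- so both the forbidden-shape clause and the matching hereditary lemma must be phrased with care. Second, the Substitution case rests on the auxiliary fact that $\var \in \lfv\ctxthree$ produces a non-captured free occurrence of $\var$ in $\ctxthree$ sitting $\leftright$-below the hole; this is proved by induction on $\ctxthree$, reading $\lfv$ off its structure and matching it against the base clauses of $\leftright$. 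Everything else is routine, using that $\outin$ coincides with the prefix relation $\prefix$ on contexts and the stability of $\outin$ and $\leftright$ under arbitrary contexts.
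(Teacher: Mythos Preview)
Your proposal is correct and your two directions line up with the paper's, but the argument for ``\lo\ context $\Rightarrow$ minimal'' is organized differently. The paper does not induct on $\ctx$; instead it argues by contradiction: assuming a redex $\ctxtwo$ with $\ctxtwo\leftout\ctx$ and $\ctxtwo\neq\ctx$, it case-splits on whether $\ctxtwo\outin\ctx$ or $\ctxtwo\leftright\ctx$. In the first case $\ctxtwo$ must be a multiplicative redex (an exponential position is a context to a variable, hence has no strict extension), and the shape of $\ctx$ then contradicts the Left (or Right) Application clause. In the second case the paper factors $\ctx=\ctxthreep{\tmthree\,\ctxfour}$ with the hole of $\ctxtwo$ in $\tmthree$; Right Application forces $\tmthree$ to be \quiet, so $\ctxtwo$ is an exponential step whose binder sits in $\ctxthree$, contradicting the Substitution clause. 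Your inductive approach is more bookkeeping but yields a cleaner hereditary picture of the clauses; the paper's contradiction argument is shorter but leans more on reading off the structure of $\leftout$.

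Your worry about Left Application is well placed: if the hole of $\ctx$ sits in an ES argument along the spine $\sctx$ of a multiplicative redex, then the sub-context $\ctxthree$ in $\ctx=\ctxtwop{\ctxthree\,\tm}$ is \emph{not} literally of the forbidden shape $\sctxp{\la\var\ctxfour}$, yet the filled term $\ctxthreep{r}$ \emph{is} of the shape $\sctxp{\la\var\tmthree}$, so there is a strictly smaller multiplicative redex. The paper's proof is terse on this point (it jumps directly to the case where the hole lies under the abstraction); your formulation makes the issue explicit and handles it correctly. Your remarks about ``clauses constraining ES arguments'' should be dropped, however: Definition~\ref{def:LOCtx} has no such clause, and the case where the hole sits in an ES argument is covered by the other clauses together with the shape analysis you already sketch.
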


\paragraph{Structural Equivalence.} A peculiar trait of the LSC is that the rewriting rules do not propagate ES. Therefore, evaluation is usually stable by structural equivalences moving ES around. In this paper we use the following equivalence, including garbage collection ($\tostructgc$), that we prove to be a strong bisimulation.
\begin{definition}[Structural equivalence]
The \emph{structural equivalence} $\eqstruct$ is the symmetric, reflexive, transitive, and contextual closure of
the following axioms:
$$
\begin{array}{rll@{\hspace{1em}}l}
    (\l\var.\tm)\esub{\vartwo}{\tmtwo}             & \tostructlam & \l\var.\tm\esub{\vartwo}{\tmtwo}               & \text{if $\var \not\in \fv{\tmtwo}$} \\
    (\tm\,\tmtwo)\esub{\var}{\tmthree}             & \tostructapl & \tm\esub{\var}{\tmthree}\,\tmtwo               & \text{if $\var \not\in \fv{\tmtwo}$} \\
    (\tm\,\tmtwo)\esub{\var}{\tmthree}             & \tostructapr & \tm\,\tmtwo\esub{\var}{\tmthree}               & \text{if $\var \not\in \fv{\tm}$} \\    
    \tm\esub{\var}{\tmtwo}\esub{\vartwo}{\tmthree} & \tostructcom & \tm\esub{\vartwo}{\tmthree}\esub{\var}{\tmtwo} & \text{if $\vartwo \not\in \fv{\tmtwo}$ and $\var \not\in \fv{\tmthree}$} \\
    \tm\esub{\var}{\tmtwo}\esub{\vartwo}{\tmthree} & \tostructes  & \tm\esub{\var}{\tmtwo\esub{\vartwo}{\tmthree}} & \text{if $\vartwo \not\in \fv{\tm}$} \\
    \tm\esub{\var}{\tmtwo}                         & \tostructgc  & \tm                                            & \text{if $\var \not\in \fv{\tm}$} \\
    \tm\esub{\var}{\tmtwo}                         & \tostructdup & \varsplit{\tm}{\var}{\vartwo}\esub{\var}{\tmtwo}\esub{\vartwo}{\tmtwo}   \\
\end{array}
$$
In $\tostructdup$, $\varsplit{\tm}{\var}{\vartwo}$ denotes a term obtained from $\tm$ by renaming some
(possibly none) occurrences of $\var$ as $\vartwo$, with $y$ a fresh variable.
\end{definition}

\begin{proposition}[Structural Equivalence $\eqstruct$ is a Strong Bisimulation]
	\label{prop:bisimulation}
	If $\tm \eqstruct\tmtwo \tolo \tmthree$ then exists $\tmfour$ s.t. $\tm \tolo\tmfour \eqstruct\tmthree$ and the steps are either both multiplicative or both exponential.
\end{proposition}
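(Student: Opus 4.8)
The plan is to reduce the statement to a local commutation property for a single structural axiom, and then to establish that by a case analysis on the relative position of the \lo-redex and of the axiom pattern, with Lemma~\ref{l:LO-characts} handling the leftmost-outermost constraint.

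\emph{Reduction to one axiom.} First I would observe that strong bisimulations — relations $R$ such that $\tm \mathrel{R} \tmtwo \tolo \tmthree$ implies $\tm \tolo \tmfour \eqstruct \tmthree$ for some $\tmfour$, with the two $\tolo$-steps both multiplicative or both exponential, and symmetrically — contain the identity and are closed under union, inverse and relational composition (the last two precisely because the matching reduct is only asked to be $\eqstruct$-related, not $R$-related), hence under reflexive, symmetric and transitive closure. As $\eqstruct$ is the reflexive–symmetric–transitive closure of the relations $R_a$ (one per structural axiom $\ell \to r$) that relate $\ctxp\ell$ with $\ctxp r$ for an arbitrary context $\ctx$ — and each such $R_a$ is already closed under all contexts — it suffices to prove that every $R_a$ and every $R_a^{-1}$ is a strong bisimulation. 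Concretely, for each axiom, each context $\ctx$, and each $\tolo$-step out of $\ctxp\ell$ (resp.\ out of $\ctxp r$), I have to produce a $\tolo$-step of the same kind out of $\ctxp r$ (resp.\ out of $\ctxp\ell$) whose reduct is $\eqstruct$-equivalent to that of the given one. Only the $\tolo$-steps must match one for one; the equivalence of the reducts may take several axioms (confronting $\tostructes$ with an exponential step, for instance, usually also calls for a $\tostructdup$), and throughout I would tacitly use on-the-fly $\alpha$-renaming as the calculus does.

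\emph{The local analysis.} Fixing an axiom $\ell \to r$, a context $\ctx$ and a step $\ctxp\ell \tolo \tm'$ contracting the \lo-redex, whose position context I call $\ctxfour$ (the opposite direction being symmetric), I would compare the occurrence of $\ell$ (the hole of $\ctx$) with $\ctxfour$. If the two are disjoint, or $\ctxfour$ sits inside a subterm of $\ell$ that the axiom neither erases nor duplicates, then reduction and axiom act on independent positions: they commute, the residual redex is $\ctxfour$ up to the relocation possibly performed by the axiom, and the reduct is $\eqstruct$ to the one obtained by firing the corresponding redex in $\ctxp r$. If instead the occurrence of $\ell$ sits strictly inside $\ctxfour$ — inside the substitution context, the body, or the argument of a multiplicative redex, or, for a compact exponential redex, inside its surrounding context or inside the subterm it copies — then firing $\ctxfour$ leaves the $\ell$-pattern in place (up to relocation, or up to a second copy when the axiom is $\tostructdup$), and the two reducts again come out $\eqstruct$. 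Genuine overlaps occur only at the level of explicit substitutions: either an axiom slides an ES across the abstraction or application forming a multiplicative redex, and one checks that the redex survives with a reduct $\eqstruct$ to (often literally equal to) the other one; or an axiom acts exactly on the ES $\esub\var\tmtwo$ that makes a compact exponential redex — commuting, nesting, duplicating or relocating it, by one of $\tostructcom$, $\tostructes$, $\tostructdup$, $\tostructlam$, $\tostructapl$, $\tostructapr$ — in which case the redex survives on the other side, now based on the moved or copied ES and copying the same subterm up to $\eqstruct$. In every case the kind of the step is preserved, since no axiom turns a substitution context enclosing an abstraction into something else, nor creates or deletes the variable occurrence carrying a compact exponential redex.

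\emph{Keeping the step leftmost-outermost.} This is the real content — everything above being routine LSC rewriting — and amounts to checking, in each case, that the residual redex is still the \lo\ one. By Lemma~\ref{l:LO-characts} it is enough to check that its position is still a \lo\ context per Definition~\ref{def:LOCtx}, and the key remark is that the three clauses of that definition speak only of material lying on or to the left of the hole: that the left subterm of a right-application be \quiet, that the left sub-context of a left-application not be a substitution context enclosing an abstraction, and the condition $\var \notin \lfv{\ctxthree}$ of the substitution clause; this material is exactly what the freshness side conditions of the axioms keep untouched. The two axioms I expect to need genuine care are $\tostructgc$ and $\tostructdup$. For $\tostructgc$ the configuration to rule out is a $\tolo$-step inside the erased argument $\tmtwo$ of $\tm\esub\var\tmtwo$ with $\var \notin \fv\tm$: this cannot arise, since — again by Definition~\ref{def:LOCtx} through Lemma~\ref{l:LO-characts} — a \lo-redex is never located under a $\tostructgc$-erasable substitution, so the \lo-redexes of $\tm\esub\var\tmtwo$ and of $\tm$ lie in their shared part and the square closes immediately. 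For $\tostructdup$ the point is that duplicating a subterm inserts a fresh copy strictly to the right of the original in the $\leftout$ order, so the leftmost copy stays leftmost, and firing it in $\ctxp r$ matches the step in $\ctxp\ell$ with the reducts $\eqstruct$ by re-duplicating. The one remaining nuisance is treating the compact shape of exponential redexes uniformly, since the ES defining such a redex may itself be the one an axiom moves or copies; this multiplies sub-cases but raises no conceptual difficulty.
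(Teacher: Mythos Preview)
Your approach is essentially the paper's: both reduce to the one-step relation $\tostructsym$ (the symmetric contextual closure of the axioms) and then analyse how a single axiom instance interacts with a single $\tolo$-step. The paper organises this as an induction on the position context of the $\tolo$-redex (base cases: root multiplicative/exponential; inductive cases: under $\lambda$, left/right of application, body of a substitution), pushing the $\tostructsym$-step down until it meets the redex; your relative-position trichotomy (disjoint, nested, overlapping) is the same case split described top-down rather than recursively, and your use of Lemma~\ref{l:LO-characts} to re-certify the residual as \lo\ is how the paper implicitly closes each diagram.

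One point of your sketch is too optimistic. In the overlap between $\tostructes$ (read right-to-left) and an exponential step contracting the outer substitution, the two sides do \emph{not} copy ``the same subterm up to $\eqstruct$'': one side copies $\tmpp\esub{\vartwo}{\tmtwop}$, the other copies just $\tmpp$, and these are not $\eqstruct$-related in general. Closing the square requires first extracting the inner $\esub{\vartwo}{\tmtwop}$ through the entire \lo\ context $\ctx$, i.e.\ establishing $\ctxp{\tmpp\esub{\vartwo}{\tmtwop}} \eqstruct \ctxp{\tmpp}\esub{\vartwo}{\tmtwop}$, and only then applying $\tostructes$, $\tostructcom$ and the $\tostructdup$ you anticipated. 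The paper isolates this extraction as a separate lemma (proved by induction on the inductive \lo\ structure of $\ctx$); it is the one genuinely non-local ingredient of the argument, and your plan needs it as well.
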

%

\section{Distilleries}
\label{sect:machines}
An abstract machine $\mach$ is meant to implement a strategy $\calculus$ via a \emph{distillation}, \ie\ a decoding function $\decodefun$. A machine has a state $\state$, given by a \emph{code} $\code$, \ie\ a $\l$-term $\tm$ without ES and not considered up to $\alpha$-equivalence, and some data-structures like stacks, dumps, environments, and heaps. The data-structures are used to implement the search for the next $\calculus$-redex and some form of substitution, and they decode to evaluation contexts for $\calculus$. Every state $\state$ decodes to a term $\decode\state$, having the shape $\stctx\state\ctxholep\code$, where $\code$ is the code currently under evaluation and $\stctx\state$ is the evaluation context given by the data-structures. 

A machine computes using transitions, whose union is denoted by $\tomach$, of two types. The \emph{principal} one, denoted by $\tomachp$, corresponds to the firing of a rule defining $\calculus$, up to structural equivalence $\eqstruct$. The \emph{commutative} transitions, denoted by $\tomachc$, only rearrange the data structures, and on the calculus are either invisible or mapped to $\eqstruct$. The terminology reflects a proof-theoretic view, as machine transitions can be seen as cut-elimination steps \cite{DBLP:journals/toplas/AriolaBS09,DBLP:conf/icfp/AccattoliBM14}. The transformation of evaluation contexts is formalized in the LSC as a structural equivalence $\eqstruct$, which is required to commute with evaluation $\calculus$, \ie\ to satisfy
\begin{center}
\begin{tabular}{c@{\sep}c@{\sep}c}
 \begin{tikzpicture}[ocenter]
  \node (s) {\normalsize$\tm$};
  \node at (s.center)  [below =0.7*\nodeVerDist](s2) {\normalsize$\tmtwo$};
  \node at (s.center) [right= 0.7*\nodeHorDist](t) {\normalsize$\tmfour$};

  \node at (s.center)[anchor = center, below=0.3*\nodeVerDist](eq1){\normalsize$\tostruct$};
    \draw[-o] (s) to  (t);
\end{tikzpicture} 

&  $\Rightarrow \exists \tmfive$ s.t. & 
 \begin{tikzpicture}[ocenter]
  \node (s) {\normalsize$\tm$};
  \node at (s.center)  [below =0.7*\nodeVerDist](s2) {\normalsize$\tmtwo$};
  \node at (s.center) [right= 0.7*\nodeHorDist](t) {\normalsize$\tmfour$};
  \node at (s2-|t) [](s1){\normalsize$\tmfive$};

  \node at (s.center)[anchor = center, below=0.3*\nodeVerDist](eq1){\normalsize$\tostruct$};
  \node at (t.center)[anchor = center, below=0.3*\nodeVerDist](eq2){\normalsize$\tostruct$};
    \draw[-o] (s) to  (t);
	\draw[-o, dashed] (s2) to  (s1);
\end{tikzpicture} 
\end{tabular}
\end{center}
for each of the rules of $\calculus$, preserving the kind of rule. In fact, this means that $\eqstruct$ is a \emph{strong} bisimulation (\ie\ \emph{one} step to \emph{one} step) with respect to $\calculus$, that is what we proved in \refprop{bisimulation} for the equivalence at work in this paper. Strong bisimulations formalize transformations which are transparent with respect to the behavior, even at the level of complexity, because they can be delayed without affecting the length of evaluation:

\begin{lemma}[Postponement of $\tostruct$]
	\label{l:postponement}
	If $\eqstruct$ is a strong bisimulation, $\tm\mathrel{(\calculus\cup\eqstruct)^*}\tmtwo$ implies $\tm\mathrel{\calculus^*\eqstruct}\tmtwo$ and the number and kind of steps of $\calculus$ in the two reduction sequences is exactly the same.
\end{lemma}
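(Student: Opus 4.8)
The plan is to argue by induction on the length $n$ of the mixed derivation $\deriv : \tm \mathrel{(\calculus\cup\eqstruct)^*} \tmtwo$, repeatedly commuting each structural step to the right past the $\calculus$-steps that follow it. The key local move is exactly the strong bisimulation diagram recalled just before the statement: a peak $\tmfour \eqstruct \cdot \calculus \tmtwo$ is turned into a valley $\tmfour \calculus \cdot \eqstruct \tmtwo$ with the $\calculus$-step of the \emph{same} kind (multiplicative or exponential). Since $\eqstruct$ is reflexive and transitive, consecutive structural steps always collapse to a single use of the relation, so the target shape of this reorganization of derivations is indeed $\calculus^* \eqstruct$.

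For $n = 0$ we have $\tm = \tmtwo$ and conclude by reflexivity of $\eqstruct$, with zero $\calculus$-steps on both sides. For $n > 0$, decompose $\deriv$ as a prefix $\tm \mathrel{(\calculus\cup\eqstruct)^*} \tmthree$ of length $n-1$ followed by a final step from $\tmthree$ to $\tmtwo$, which is either a $\calculus$-step or a $\eqstruct$-step. By the induction hypothesis applied to the prefix, there is $\tmfour$ with $\tm \mathrel{\calculus^*} \tmfour \eqstruct \tmthree$ and with the same number and kinds of $\calculus$-steps as in the prefix. If the final step is $\tmthree \eqstruct \tmtwo$, transitivity of $\eqstruct$ gives $\tmfour \eqstruct \tmtwo$, hence $\tm \mathrel{\calculus^*} \tmfour \eqstruct \tmtwo$ with the $\calculus$-steps unchanged. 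If the final step is $\tmthree \calculus \tmtwo$, apply the strong bisimulation hypothesis to $\tmfour \eqstruct \tmthree \calculus \tmtwo$: it produces $\tmfive$ with $\tmfour \calculus \tmfive \eqstruct \tmtwo$, the step $\tmfour \calculus \tmfive$ being of the same kind as $\tmthree \calculus \tmtwo$. Thus $\tm \mathrel{\calculus^*} \tmfour \calculus \tmfive \eqstruct \tmtwo$: we have added exactly one $\calculus$-step, of the required kind, so the total count and the multiset of kinds coincide with those of $\deriv$. In both cases we obtain $\tm \mathrel{\calculus^*\eqstruct} \tmtwo$ with the stated invariants, closing the induction.

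The only nontrivial ingredient is the commutation in the second case, and it is exactly what the strong bisimulation assumption provides (for the equivalence of this paper, \refprop{bisimulation}); the crucial points are that it is \emph{one}-step-to-\emph{one}-step and kind-preserving, which is what lets the induction carry the \emph{exact} count of $\calculus$-steps rather than merely an upper bound. So there is no real obstacle beyond already having a strong bisimulation in hand; the only mild care needed is to keep the trailing structural part a single $\eqstruct$, which is handled by transitivity of $\eqstruct$ (used in the first case, and implicitly in reading the bisimulation hypothesis as acting on whole equivalences rather than on single axioms).
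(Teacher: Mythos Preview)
Your proof is correct and follows exactly the approach indicated by the paper, which merely states that the result is ``a straightforward induction on the number of rewriting steps in $\tm\mathrel{(\calculus\cup\eqstruct)^*}\tmtwo$.'' You have simply spelled out the details of that induction, using the strong bisimulation diagram in the only natural way.
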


We can finally introduce distilleries, \ie\ systems where a strategy $\calculus$ simulates a machine $\mach$ up to structural equivalence $\eqstruct$ via the decoding $\decodefun$.

\begin{definition}
A \emph{distillery} $\distil = (\mach, \calculus, \tostruct,\decodefun)$ is given by:
\begin{enumerate}
	\item An \emph{abstract machine} $\mach$, given by
	\begin{enumerate}
		\item a deterministic labeled transition system (lts) $\tomach$ over states $\state$, with labels in $\{\mathtt{m},\mathtt{e},\mathtt{c}\}$; the transitions labelled by $\mathtt{m},\mathtt{e}$ are called \emph{principal}, the others \emph{commutative};
		\item a distinguished class of states deemed \emph{initial}, in bijection with closed \mbox{$\l$-terms}; from these, the \emph{reachable} states are obtained by applying $\tomach^\ast$;
	\end{enumerate}
	
	\item a deterministic \emph{strategy} $\calculus$, \ie, a deterministic lts over the terms of the LSC induced by some strategy on its reduction rules, with labels in $\{\mathtt{m},\mathtt{e}\}$. 

	 \item a \emph{structural equivalence} $\eqstruct$ on terms which is a strong bisimulation with respect to $\calculus$;

	\item a \emph{decoding function} $\decodefun$ from states to terms whose graph, when restricted to reachable states, is a weak simulation up to $\tostruct$ (the commutative transitions are considered as $\tau$ actions). More explicitly, for all reachable states:
	\begin{itemize}		
		\item \emph{projection of principal transitions}: $\state\tomachp\statetwo$ implies $\decode\state\calculus_{\mathtt p}\eqstruct\decode\statetwo$ for all $\mathtt p\in\{\mathtt m,\mathtt e\}$;
		\item \emph{distillation of commutative transitions}: $\state\tomacha\statetwo$ implies $\decode\state\eqstruct\decode\statetwo$.
	\end{itemize}
\end{enumerate}
\end{definition}

The simulation property is a minimum requirement, but a stronger form of relationship is usually desirable. Additional hypotheses are required in order to obtain the converse simulation and provide complexity bounds. 

\emph{Terminology:} an \emph{execution} $\exec$ is a sequence of transitions from an initial state. With $\size\exec$, $\sizep\exec$ and $\sizecom\exec$ we denote respectively the length, the number of principal and commutative transitions of $\exec$, whereas $\size\tm$ denotes the size of a term $t$.

\begin{definition}[Distillation Qualities]
A distillery is 
\begin{itemize}
	\item \emph{Reflective} when on reachable states:
\begin{itemize}
	\item \emph{Termination}: $\tomacha$ terminates;
	\item \emph{Progress}: if $\state$ is final then $\decode\state$ is a $\calculus$-normal form.
\end{itemize}

\item \emph{Bilinear} when,  given an execution $\exec$ from an initial term $\tm$:
\begin{itemize}
	\item \emph{Execution Length}: the number of commutative steps $\sizecom\exec$ is linear in both $\size\tm$ and $\sizep\exec$, \ie\ $\sizecom\exec\leq c\cdot(1+\sizep\exec)\cdot\size{\tm}$ for some non-zero constant $c$ (when $\sizep\exec=0$, $O(\size\tm)$ time is still needed to recognize that $t$ is normal).
	\item \emph{Commutative}: each commutative transition is implementable in $O(1)$ time on a RAM;
	\item \emph{Principal}: each principal transition is implementable in $O(\size\tm)$ time on a RAM.
\end{itemize}
\end{itemize}
\end{definition}

A reflective distillery is enough to obtain a weak bisimulation between the strategy $\togen$ and the machine $\mach$, up to structural equivalence $\eqstruct$ (again, the weakness is with respect to commutative transitions). With $\sizem\exec$ and  $\sizee\exec$ we denote respectively the number of multiplicative and exponential transitions of $\exec$.

\begin{theorem}[Correctness and Completeness]
	\label{tm:GenSim} 
	Let $\distil$ be a reflective distillery and $\state$ an initial state.
	\begin{enumerate}
		\item \emph{Simulation up to $\tostruct$}: for every execution $\exec:\state\tomach^*\statetwo$ there is a derivation $\deriv:\decode\state\togen^*\tostruct\decode\statetwo$ s.t.\ $\sizem\exec=\sizem\deriv$ and $\sizee\exec=\sizee\deriv$.
		\item \label{p:GenSim-two}\emph{Reverse Simulation up to $\tostruct$}: for every derivation $\deriv:\decode\state\togen^*\tm$ there is an execution $\exec:\state\tomach^*\statetwo$ s.t. $\tm\tostruct\decode\statetwo$ and $\sizem\exec=\sizem\deriv$ and $\sizee\exec=\sizee\deriv$.
	\end{enumerate}
\end{theorem}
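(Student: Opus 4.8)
The plan is to prove the two items separately: the first by a direct induction on the execution followed by postponement of $\eqstruct$, the second by induction on the length of the derivation, this being the place where reflectivity does the real work.

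For the \emph{simulation} statement, I would first show, by induction on the length of $\exec:\state\tomach^*\statetwo$, that $\decode\state \mathrel{(\togen\cup\eqstruct)^*}\decode\statetwo$ with the number of $\mathtt m$-labeled (resp. $\mathtt e$-labeled) $\togen$-steps in the resulting sequence equal to $\sizem\exec$ (resp. $\sizee\exec$). The base case is immediate; in the inductive step one splits $\exec$ as a first transition followed by a shorter execution and applies the distillery conditions: a principal transition $\state\tomachp\statetwo$ of kind $\mathtt p$ yields $\decode\state\togen_{\mathtt p}\eqstruct\decode\statetwo$, contributing exactly one $\togen$-step of the correct kind, whereas a commutative transition yields $\decode\state\eqstruct\decode\statetwo$, contributing none. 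Since $\eqstruct$ is a strong bisimulation (\refprop{bisimulation}), Lemma~\ref{l:postponement} then turns $\decode\state \mathrel{(\togen\cup\eqstruct)^*}\decode\statetwo$ into a derivation $\deriv:\decode\state\togen^*\eqstruct\decode\statetwo$ with exactly the same number and kind of $\togen$-steps, which is the claim.

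For the \emph{reverse simulation}, the core is the following lemma, which is where Progress and Termination are used: \emph{if $\state$ is reachable and $\decode\state\togen_{\mathtt p}\tmtwo$, then there is an execution $\state\tomacha^*\state'\tomachp\state''$ whose principal transition has kind $\mathtt p$ and with $\decode{\state''}\eqstruct\tmtwo$}. Indeed, $\decode\state$ is not $\togen$-normal, so by Progress (contrapositively, together with the fact that $\eqstruct$ preserves normal forms, itself immediate from \refprop{bisimulation}) $\state$ is not final; since $\tomacha$ terminates and $\tomach$ is deterministic, iterating transitions from $\state$ reaches, after finitely many commutative transitions, a state $\state'$ that is either final or performs a principal transition. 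The final case is impossible, because commutative transitions preserve the decoding up to $\eqstruct$, so $\decode{\state'}\eqstruct\decode\state$, and Progress would then force $\decode\state$ to be normal. Hence $\state'\tomachp\state''$ for some kind $\mathtt p'$, and $\decode\state\eqstruct\decode{\state'}\togen_{\mathtt p'}\eqstruct\decode{\state''}$; pushing $\decode\state\eqstruct\decode{\state'}$ through the step with \refprop{bisimulation} gives $\decode\state\togen_{\mathtt p'}\tmthree\eqstruct\decode{\state''}$, and determinism of $\togen$ together with $\decode\state\togen_{\mathtt p}\tmtwo$ forces $\mathtt p'=\mathtt p$ and $\tmthree=\tmtwo$, so $\decode{\state''}\eqstruct\tmtwo$.

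Granting the lemma, the induction on $\size\deriv$ is routine: for $\deriv:\decode\state\togen_{\mathtt p}\tmtwo\togen^*\tm$, obtain $\state\tomacha^*\state'\tomachp\state''$ as above; transport $\tmtwo\togen^*\tm$ along $\tmtwo\eqstruct\decode{\state''}$ using \refprop{bisimulation} to get $\decode{\state''}\togen^*\tm'$ with $\tm'\eqstruct\tm$ and the same number and kind of steps; apply the induction hypothesis to this strictly shorter derivation from the reachable state $\state''$, yielding $\state''\tomach^*\state'''$ with $\tm'\eqstruct\decode{\state'''}$ and matching $\mathtt m$/$\mathtt e$ counts; then concatenate, noting that the added commutative transitions leave the $\mathtt m$/$\mathtt e$ counts untouched while the added principal transition contributes exactly one step of kind $\mathtt p$. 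I expect the reverse-simulation lemma to be the main obstacle: it is the only place reflectivity is genuinely needed, and the subtle point is that the machine a priori only fires \emph{some} redex (up to $\eqstruct$), so one must combine determinism of $\togen$ with the strong bisimulation property to argue that this redex is forced to be the $\togen$-redex.
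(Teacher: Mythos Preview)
Your proposal is correct in spirit and close to the paper's argument, but the two parts are organized slightly differently, and Part~2 has a small technical slip worth noting.

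For Part~1, the paper maintains the invariant $\decode\state\togen^*\eqstruct\decode{\state''}$ throughout the induction, invoking postponement (\reflemma{postponement}) at each inductive step to push the intermediate $\eqstruct$ past the newly added $\togen$-step; you instead accumulate a raw $(\togen\cup\eqstruct)^*$ sequence and apply postponement once at the end. Both are valid uses of \reflemma{postponement}.

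For Part~2, the paper splits $\deriv$ at the \emph{last} step: the IH is applied to the prefix $\decode\state\togen^*\tmtwo$, which still starts from the given initial state $\state$, and then one extends the resulting execution by first going to the commutative normal form and then firing one principal transition (your one-step lemma is exactly this, proved inline). You instead split at the \emph{first} step and then invoke the IH on a derivation $\decode{\state''}\togen^*\tm'$ starting from the merely \emph{reachable} state $\state''$. Since the theorem, and hence the IH, is stated for the fixed initial state $\state$, the IH does not literally apply at that point. The fix is immediate: either strengthen the inductive statement to range over all reachable states (all distillery hypotheses---progress, termination, the simulation clauses---are already stated for reachable states, so nothing else changes), or adopt the paper's last-step split, which avoids the issue and also spares you the multi-step transport of the tail along $\eqstruct$.
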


Bilinearity, instead, is crucial for the low-level theorem.

\begin{theorem}[Low-Level Implementation Theorem]
\label{tm:low-level}
Let $\calculus$ be a strategy on terms with ES s.t. there exists a bilinear reflective distillery $\distil = (\mach, \calculus, \tostruct,\decodefun)$. Then a derivation $\deriv: \tm \togen^* \tmtwo$ is implementable on RAM machines in $O((1+\size\deriv)\cdot \size\tm)$ steps, \ie\ bilinear in the size $\size\tm$ of the initial term and the length $\size\deriv$ of the derivation.
\end{theorem}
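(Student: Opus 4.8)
The plan is to obtain the bound by composing the reverse simulation of Theorem~\ref{tm:GenSim} with the three clauses of bilinearity, so that the only real content is bookkeeping. Throughout I assume $\tm$ is an initial term, so that there is an initial state $\state$ with $\decode\state=\tm$: this is the intended reading, since the framework --- and Theorem~\ref{tm:GenSim} in particular --- is stated relative to initial states, and a derivation from an arbitrary reachable term can be prefixed by the execution reaching it.

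First I would turn the derivation $\deriv:\tm\togen^*\tmtwo$ into a machine execution. By the Reverse Simulation clause of Theorem~\ref{tm:GenSim} --- which uses only that $\distil$ is reflective --- there is an execution $\exec:\state\tomach^*\statetwo$ with $\tmtwo\eqstruct\decode\statetwo$, $\sizem\exec=\sizem\deriv$ and $\sizee\exec=\sizee\deriv$. Since every step of the strategy $\calculus$ is multiplicative or exponential, $\sizep\exec=\sizem\exec+\sizee\exec=\sizem\deriv+\sizee\deriv=\size\deriv$. So the machine reproduces $\deriv$, up to $\eqstruct$, with exactly $\size\deriv$ principal transitions, and it remains to show that running $\exec$ on a RAM costs $O((1+\size\deriv)\cdot\size\tm)$.

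Next I would bound the length of $\exec$ and weigh each transition. The Execution Length clause gives $\sizecom\exec\leq c\cdot(1+\sizep\exec)\cdot\size\tm=c\cdot(1+\size\deriv)\cdot\size\tm$, hence $\size\exec=\sizep\exec+\sizecom\exec=O((1+\size\deriv)\cdot\size\tm)$. By the Commutative clause each of the $\sizecom\exec$ commutative transitions runs in $O(1)$ time, contributing $O((1+\size\deriv)\cdot\size\tm)$; by the Principal clause each of the $\sizep\exec=\size\deriv$ principal transitions runs in $O(\size\tm)$ time, contributing $O(\size\deriv\cdot\size\tm)$; and building the initial state $\state$ from $\tm$ costs $O(\size\tm)$. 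Summing the three contributions yields $O((1+\size\deriv)\cdot\size\tm)$.

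I expect no genuine difficulty in this theorem itself: all the work has been front-loaded into proving that the machine at hand is a bilinear reflective distillery. Two points are worth flagging. First, the implementation realizes $\deriv$ only \emph{up to $\eqstruct$} --- it ends in a state decoding to a term $\eqstruct$-equivalent to $\tmtwo$ --- which is the appropriate notion here and is exactly why $\eqstruct$ is asked to be a strong bisimulation (\refprop{bisimulation}). Second, the Principal clause charges $O(\size\tm)$ per principal step in terms of the \emph{initial} term; this is both what is needed above and what actually holds in concrete distilleries, because along an execution the only codes ever duplicated or substituted are (renamings of) subterms of $\tm$. If one had to name a main obstacle, it is precisely ensuring that this last invariant --- boundedness by $\size\tm$ of the codes manipulated by principal transitions --- is genuinely delivered by the bilinearity hypothesis rather than tacitly assumed.
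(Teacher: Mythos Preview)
Your proposal is correct and follows essentially the same approach as the paper: invoke the reverse simulation of Theorem~\ref{tm:GenSim} to obtain an execution with $\sizep\exec=\size\deriv$, then use the three bilinearity clauses to bound commutative transitions by $O((1+\size\deriv)\cdot\size\tm)$ total and principal ones by $O(\size\deriv\cdot\size\tm)$ total. Your additional remarks (the $O(\size\tm)$ setup cost, the ``up to $\eqstruct$'' caveat, and the role of the subterm invariant behind the Principal clause) are accurate elaborations but not needed for the argument itself.
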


\begin{proof}
Given $\deriv:\tm\calculus^n\tmtwo$ by \reftm{GenSim}.\refpointmute{GenSim-two} there is an execution $\exec:\state\tomach^*\statetwo$ s.t. $\tmtwo\tostruct\decode\statetwo$ and $\sizep\exec=\size\deriv$. The cost of implementing $\exec$ is the sum of the costs of implementing the commutative and the principal transitions. By bilinearity, $ \sizecom\exec =  O((1+\sizep\exec)\cdot\size{\tm})$ and so all the commutative transitions in $\exec$ require $O((1+\sizep\exec)\cdot\size{\tm})$ steps, because a single one takes a constant number of steps. Again by bilinearity, each principal one takes $O(\size{\tm})$, and so all the principal transitions together require $O(\sizep\exec\cdot\size{\tm})$ steps.\qed
\end{proof}
\section{Strengthening the MAM}

The machine we are about to introduce implements leftmost-outermost reduction and may therefore be seen as a strong version of the Krivine abstract machine (KAM). However, it differs from the KAM in the fundamental point of using global, as opposed to local, environments. It is therefore more appropriate to say that it is a strong version of the machine we introduced in \cite{DBLP:conf/icfp/AccattoliBM14}, which we called MAM (Milner abstract machine). Let us briefly recall its definition:
$$\small
  \setlength{\arraycolsep}{0.6em}
  \begin{array}{c|c|ccc|c|cl}
  \mbox{\scriptsize Code} & \mbox{\scriptsize Stack} & \mbox{\scriptsize Env} &&
  \mbox{\scriptsize Code} & \mbox{\scriptsize Stack} & \mbox{\scriptsize Env} \\
    \code\codetwo & \stack & \genv
    & \tomachasub{1} &
    \code & \codetwo\cons\stack & \genv\\
    
    \l\var.\code & \codetwo\cons\stack & \genv
    & \tomachm &
    \code & \stack & \esub{\var}{\codetwo}\cons\genv
  \\
    \var & \stack & \genv
    & \tomache &
    \rename{\code} & \stack & \genv & \text{ if $\genv(\var) = \code$}\\
    
  \end{array}
$$
Note that the stack and the environment of the MAM contain \emph{codes}, not \emph{closures} as in the KAM. A global environment indeed circumvents the complex mutually recursive notions of \emph{local environment} and \emph{closure}, at the price of the explicit $\alpha$-renaming $\rename{\code}$ which is applied \emph{on the fly} in $\tomache$. The price however is negligible, at least theoretically, as the asymptotic complexity of the machine is not affected, see \cite{DBLP:conf/icfp/AccattoliBM14} (the same can be said of variable names vs de Bruijn indexes/levels).

We know that the MAM performs \emph{weak} head reduction, whose reduction contexts are (informally) of the form $\ctxhole\stack$. This justifies the presence of the stack. It is immediate to extend the MAM so that it performs full head reduction, \ie, so that the head redex is reduced even if it is under an abstraction. Since head contexts are of the form $\Lambda.\ctxhole\stack$ (with $\Lambda$ a list of abstractions), we simply add a stack of abstractions $\Lambda$ and augment the machine with the following transition:
$$\small
  \setlength{\arraycolsep}{0.6em}
  \begin{array}{c|c|c|ccc|c|c|c}
  \mbox{\tiny Abs} & \mbox{\tiny Code} & \mbox{\tiny Stack} & \mbox{\tiny Env} &&
  \mbox{\tiny Abs} & \mbox{\tiny Code} & \mbox{\tiny Stack} & \mbox{\tiny Env}\\
    \Lambda & \l\var.\code & \stempty & \genv
    & \tomachasub{2} &
    \var\cons\Lambda & \code & \stempty & \genv
  \end{array}
$$
The other transitions do not touch the $\Lambda$ stack.

\lo reduction is nothing but iterated head reduction. \lo reduction contexts, which we formally introduced in Definition~\ref{def:LOCtx}, when restricted to the pure \mbox{$\l$-calculus} (without ES) are of the form $\Lambda.rC\pi$, where: $\Lambda$ and $\stack$ are as above; $r$, if present, is a \quiet term; and $C$ is either $\ctxhole$ or, inductively, a \lo context. Then \lo contexts may be represented by stacks of triples of the form $(\Lambda,r,\stack)$, where $r$ is a \quiet term. These stacks of triples will be called \emph{dumps}.

The states of the machine for full \lo reduction are as above but augmented with a dump and a \emph{phase} $\skphase$, indicating whether we are executing head reduction ($\skeval$) or whether we are backtracking to find the starting point of the next iteration ($\skback$). To the above transitions (which do not touch the dump and are always in the $\skeval$ phase), we add the following:

\scalemath{0.87}{
  \hspace{-0.8cm}
  \setlength{\arraycolsep}{0.6em}
  \footnotesize
  \begin{array}{c|c|c|c|c|ccc|c|c|c|c|c}
  \mbox{\tiny Abs} & \mbox{\tiny Code} & \mbox{\tiny Stack} & \mbox{\tiny Env} & \mbox{\tiny Dump} & \mbox{\tiny Ph} &&
  \mbox{\tiny Abs} & \mbox{\tiny Code} & \mbox{\tiny Stack} & \mbox{\tiny Env} & \mbox{\tiny Dump} & \mbox{\tiny Ph}\\
    \Lambda & \var & \stack & \genv & D & \skeval 
    & \tomachasub{3} &
    \Lambda & \var & \stack & \genv & D & \skback
      \\ \multicolumn{13}{r}{\text{ if $\genv(\var) = \bot$}}
  \\
     \var\cons\Lambda & \code & \stempty & \genv & D & \skback
    & \tomachasub{5} &
    \Lambda & \l\var.\code & \stempty & \genv & D & \skback
  \\
    \stempty & \codetwo & \stempty & \genv & (\Lambda,\code,\stack)\cons D & \skback
    & \tomachasub{7} &
    \Lambda & \code\codetwo & \stack & \genv & D & \skback
  \\
    \Lambda & \code & \codetwo\cons\stack & \genv & D & \skback
    & \tomachasub{6} &
    \stempty & \codetwo & \stempty & \genv & (\Lambda,\code,\stack)\cons D & \skeval 
  \\
  \end{array}
}

\noindent where $\genv(\var)=\bot$ means that the variable $\var$ is undefined in the environment $\genv$.

In the machine we actually use we join the dump and the $\Lambda$ stack into the \emph{frame} $\skframe$, to reduce the number of machine components (the analysis will however somewhat reintroduce the distinction). In the sequel, the reader should bear in mind that a state of the \skammachine introduced below corresponds to a  state of the machine just discussed according to the following correspondence:\footnote{Modulo the presence of markers of the form $\closescopem\var$ and $\openscopem\var$ in the environment, which are needed for bookkeeping purposes and were omitted here.}
\[\small\begin{array}{r@{\hspace{1.5em}}ccccccc}
   \mbox{Discussed Machine:}
  &

\begin{array}{c|c|c|c|c|c}
\mbox{\scriptsize Abs} & \mbox{\scriptsize Code} & \mbox{\scriptsize Stack} & \mbox{\scriptsize Env} & \mbox{\scriptsize Dump} & \mbox{\scriptsize Ph}\\
 \Lambda_0 & \code & \stack & \genv & (\Lambda_1,\code_1,\stack_1)\cons\cdots\cons(\Lambda_n,\code_n,\stack_n) & \skphase\\
 \end{array}\\
\multicolumn{2}{c}{\updownarrow}\\

\mbox{\skammachine:}& 
\begin{array}{c|c|c|c|c}
\mbox{\scriptsize Frame} & \mbox{\scriptsize Code} & \mbox{\scriptsize Stack} & \mbox{\scriptsize Env} & \mbox{\scriptsize Ph}\\
\Lambda_0\cons\skap{\code_1}{\stack_1}\cons\Lambda_1\cons\cdots\cons\skap{\code_n}{\stack_n}\cons\Lambda_n & \code & \stack & \genv & \skphase
 \end{array}\\

  \end{array}\]
\section{The Strong Milner Abstract Machine}
\label{sect:SMAM}
\begin{figure}[t]
	\begin{center}
	\ovalbox{$\begin{array}{l@{\hspace{1em}}rcl@{\hspace{2em}}l@{\hspace{1em}}rcl}
			\text{Frames} &
			\skframe & \grameq & \stempty
								\mid \skap{\code}{\stack}\cons\skframe
								\mid \var\cons\skframe
			&
			\text{Stacks} &
			\stack   & \grameq & \stempty \mid \code\cons\stack
			\\
			\text{Environments} &
			\genv    & \grameq & \stempty
								\mid \esub{\var}{\code}\cons\genv
								\mid \openscopem\var\cons\genv
								\mid \closescopem\var\cons\genv
			&
			\text{Phases} &
			\skphase & \grameq & \skeval \mid \skback
		\end{array}$}
	\end{center}
	\begin{center}
	  \ovalbox{
		\scalemath{0.94}{
		\setlength{\arraycolsep}{0.5em}
		\begin{array}{c|c|c|c|clc|c|c|c|c}
		\mbox{Frame} & \mbox{Code} & \mbox{Stack} & \mbox{Env} & \mbox{Ph}
		&&
		\mbox{Frame} & \mbox{Code} & \mbox{Stack} & \mbox{Env} & \mbox{Ph}\\
			\maca{ \skeval }{ \skframe }{ \code\codetwo }{ \stack }{ \genv } &
			\tomachasubone &
			\maca{ \skeval }{ \skframe }{ \code }{ \codetwo\cons\stack }{ \genv }
		\\
			\maca{ \skeval }{ \skframe }{ \l\var.\code }{ \codetwo\cons\stack }{ \genv } &
			\tomachm &
			\maca{ \skeval }{ \skframe }{ \code }{ \stack }{ \esub{\var}{\codetwo}\cons\genv }
		\\
			\maca{ \skeval }{ \skframe }{ \l\var.\code }{ \stempty }{ \genv } &
			\tomachasubtwo &
			\maca{ \skeval }{ \var\cons\skframe }{ \code }{ \stempty }{ \openscopem\var\cons\genv }
		\\
			\maca{ \skeval }{ \skframe }{ \var }{ \stack }{ \genv } &
			\tomache &
			\maca{ \skeval }{ \skframe }{ \rename{\code} }{ \stack }{ \genv } \\
			\multicolumn{11}{r}{\text{ if $\genv(\var) = \code$}}
		\\
			\maca{ \skeval }{ \skframe }{ \var }{ \stack }{ \genv } &
			\tomachasubthree &
			\maca{ \skback }{ \skframe }{ \var }{ \stack }{ \genv }
			\\ \multicolumn{11}{r}{\text{ if $\genv(\var) = \openscope$}}
		\\
			\maca{ \skback }{ \var\cons\skframe }{ \code }{ \stempty }{ \genv } &
			\tomachasubfour &
			\maca{ \skback }{ \skframe }{ \l\var.\code }{ \stempty }{ \closescopem\var\cons\genv }
		\\
			\maca{ \skback }{ \skap{\code}{\stack}\cons\skframe }{ \codetwo }{ \stempty }{ \genv } &
			\tomachasubfive &
			\maca{ \skback }{ \skframe }{ \code\codetwo }{ \stack }{ \genv }
		\\
			\maca{ \skback }{ \skframe }{ \code }{ \codetwo\cons\stack }{ \genv } &
			\tomachasubsix &
			\maca{ \skeval }{ \skap{\code}{\stack}\cons\skframe }{ \codetwo }{ \stempty }{ \genv }
		
		\end{array}
		}
		\vspace{-0.5cm}}
	\end{center}
	\begin{center}
    \ovalbox{
    \begin{tabular}{c|c}
      
      
      $\begin{array}[t]{lrll}
      \multicolumn{3}{c}{\scriptstyle \mbox{Frames (Ordinary, Weak, Trunk)}}\\
	\skframe & \grameq	&  \skframeweak \mid \skframetrunk \mid \skframeweak\cons\skframetrunk \\
	\skframeweak & \grameq	& \stempty \mid \skap{\code}{\stack}\cons\skframe\\
	\skframetrunk & \grameq	& \stempty \mid \var\cons\skframe 
      \end{array}$
      
      &

      $\begin{array}[t]{lrlll}
      \multicolumn{3}{c}{\scriptstyle \mbox{Environments (Well-Formed, Weak, Trunk)}}\\
	\genv		& \grameq & \genvweak \mid \genvtrunk \mid \genvweak \cons \genvtrunk\\
	\genvweak       & \grameq & \stempty \mid \esub{\var}{\code}\cons\genvweak \mid  
				    \closescopem\var\cons\genvweak\cons\openscopem\var\cons\genvweaktwo\\
	\genvtrunk	& \grameq & \stempty \mid \openscopem\var\cons\genv
      \end{array}$
      
    \end{tabular}    
    }
  \end{center}  
	\caption{The \skammachine.}
	\label{fig:SMAM}
	\vspace{-0.25cm}
\end{figure}

The components and the transitions of the \skammachine\ are given by the first two boxes in Fig.~\ref{fig:SMAM}. As above, we use $\code,\codetwo,\ldots$ to denote \emph{codes}, \ie, terms not containing ES and \emph{well-named}, by which mean that distinct binders bind distinct variables and that the sets of free and bound variables are disjoint (codes are not considered up to $\alpha$-equivalence). The \skammachine has two phases: {\em evaluation} ($\skeval$) and {\em backtracking} ($\skback$).

\paragraph{Initial states.} The \emph{initial states} of the \skammachine\ are of the form $\skamstate{\skeval}{\stempty}{\code}{\stempty}{\stempty}$, where $\code$ is a closed code called the \emph{initial term}. In the sequel, we abusively say that a state is reachable from a term meaning that it is reachable from the corresponding initial state.

\paragraph{Scope Markers.} The two transitions to evaluate and backtrack on abstractions, $\tomachasubtwo$ and $\tomachasubfour$, add markers to delimit subenvironments associated to scopes. The marker $\openscopem\var$ is introduced when the machine starts evaluating under an abstraction $\lambda\var$, while $\closescopem\var$ marks the end of such a subenvironment. Note that the markers are not inspected by the machine. They are in fact needed only for the analysis, as they structure the frame and the environment of a reachable state into \emph{weak} and \emph{trunk} parts, allowing a simple decoding towards terms with ES.

\paragraph{Weak and Trunk Frames.}
A frame $\skframe$ may be uniquely decomposed into $\skframe=\skframeweak\cons\skframetrunk$ (with ``$\cons$'' abusively denoting concatenation, as we will always do in the sequel), where $\skframeweak=\skap{\code_1}{\stack_1}\cons\cdots\cons\skap{\code_n}{\stack_n}$ (with $n$ possibly null) is a \emph{weak frame}, \ie where no abstracted variable appear, and $\skframetrunk$ is a \emph{trunk frame}, \ie not of the form $(\code,\stack)\cons\skframe'$ (it either starts a variable entry or it is empty). More precisely, we rely on the alternative grammar\footnote{We slightly abuse notations: the production $\skframeweak\cons\skframetrunk$ may produce $\stempty\cons\stempty$ which is not a valid list/frame. To be formal, one should introduce the composition of lists, noted $\skframeweak\circ\skframetrunk$ or $\skframetrunk\ctxholep\skframeweak$ that removes empty frames in excess. To ease the reading, instead, we overload '$\cons$' with composition.} in the third box of \reffig{SMAM}. We denote by $\skl(\skframe)$ the set of variables in $\skframe$, \ie the set of $\var$ s.t. $\skframe= \skframetwo\cons\var\cons\skframethree$. 

\paragraph{Weak, Trunk, and Well-Formed Environments.}
Similarly to the frame, the environment of a reachable state has a weak/trunk structure. In contrast to frames, however, not every environment can be seen this way, but only the well-formed ones (reachable environments will be shown to be well-formed). A weak environment $\genvweak$ does not contain any open scope, \ie\ whenever in $\genvweak$ there is a scope opener marker ($\openscopem\var$) then one can also find the scope closer marker ($\closescopem\var$), and (globally) the closed scopes of $\genvweak$ are well-parenthesized. A trunk environment $\genvtrunk$ may instead also contain open scopes that have no closing marker in $\genvtrunk$ (but not unmatched closing markers $\closescopem\var$). Formally, weak $\genvweak$, trunk $\genvtrunk$, and well-formed environments $\genv$ (all  the environments that we will consider will be well-formed, that is why we note them $\genv$) are defined in the third box in \reffig{SMAM}. 
\paragraph{Accessing Environments and Meta-level Garbage Collection.} Fragments of the form $\closescopem\var\cons\genvweak\cons\openscopem\var$ within an environment will essentially be ignored; this is how a simple form of garbage collection is encapsulated at the meta-level in the decoding. In particular, for a well-formed environment $\genv$ we define $\genv(\var)$ as:
$$
        \begin{array}{rcl@{\hspace{3em}}rcl}
          \stempty(\var) & \defeq & \bot 
          &
          (\closescopem\vartwo\cons\genvweak\cons\openscopem\vartwo\cons\genv)(\var) & \defeq & \genv(\var) \\
          
          (\esub{\var}{\code}\cons\genv)(\var)       & \defeq & \code 
          &
          (\openscopem\var\cons\genv)(\var) & \defeq & \skboundvar\\
          
          (\esub{\vartwo}{\code}\cons\genv)(\var)       & \defeq & \genv(\var)
          &
          (\openscopem\vartwo\cons\genv)(\var) & \defeq & \genv(\var) \\
          
        \end{array}
$$
We write $\skl(\genv)$ to denote the set of variables bound to $\openscope$ by an environment $\genv$, \ie those variables whose
 scope is not closed with $\closescope$. 
  \begin{lemma}[Weak Environments Contain only Closed Scopes]
\label{l:weak-envs-closed-scopes}
If $\genvweak$ is a weak environment then $\sklp\genvweak = \emptyset$.
\end{lemma}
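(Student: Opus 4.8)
The plan is to prove Lemma~\ref{l:weak-envs-closed-scopes} by induction on the structure of the weak environment $\genvweak$, following the grammar in the third box of Figure~\ref{fig:SMAM}, namely $\genvweak \grameq \stempty \mid \esub{\var}{\code}\cons\genvweak \mid \closescopem\var\cons\genvweak\cons\openscopem\var\cons\genvweaktwo$. Recall that $\skl(\genv)$ is the set of variables bound to $\openscope$ by $\genv$, i.e.\ those $\var$ such that, in the recursive traversal defining $\genv(\cdot)$, we reach a case $(\openscopem\var\cons\genv')(\var) = \skboundvar$. So it suffices to show that in a weak environment no such ``dangling'' $\openscopem\var$ is ever exposed.

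First I would treat the base case $\genvweak = \stempty$: here $\stempty(\var) = \bot$ for every $\var$, so $\skl(\stempty) = \emptyset$ trivially. For the case $\genvweak = \esub{\var}{\code}\cons\genvweak'$, observe from the definition of environment lookup that $(\esub{\var}{\code}\cons\genvweak')(\vartwo)$ is either $\code$ (if $\vartwo = \var$) or $\genvweak'(\vartwo)$; in neither case does it return $\skboundvar$ unless $\genvweak'(\vartwo) = \skboundvar$, hence $\skl(\esub{\var}{\code}\cons\genvweak') \subseteq \skl(\genvweak')$, which is empty by the induction hypothesis. The key case is $\genvweak = \closescopem\var\cons\genvweak_1\cons\openscopem\var\cons\genvweak_2$: by the lookup clause $(\closescopem\vartwo\cons\genvweak\cons\openscopem\vartwo\cons\genv)(\cdot) \defeq \genv(\cdot)$, the whole fragment $\closescopem\var\cons\genvweak_1\cons\openscopem\var$ is skipped in one shot, so $\genvweak(\vartwo) = \genvweak_2(\vartwo)$ for all $\vartwo$, giving $\skl(\genvweak) = \skl(\genvweak_2) = \emptyset$ by the induction hypothesis on the (structurally smaller) weak environment $\genvweak_2$.

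The one subtlety to be careful about is the shape of the lookup clause for closed scopes: the definition matches a fragment of the precise form $\closescopem\vartwo\cons\genvweak\cons\openscopem\vartwo$, i.e.\ a $\closescope$ marker immediately (at the list level) followed by a \emph{weak} environment and then the matching $\openscope$ marker with the \emph{same} variable. I would note that this matches exactly the third production of the $\genvweak$ grammar — the grammar is designed precisely so that every $\openscope$ marker occurring in $\genvweak$ is introduced together with its matching $\closescope$ in this bracketed form — so the inductive decomposition of $\genvweak$ always aligns with a lookup clause, and the recursion on $\genv(\cdot)$ never gets ``stuck'' facing an unmatched $\openscopem\var$ at the head. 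In other words, the well-parenthesization built into the grammar is exactly what guarantees that lookup always reaches the $\stempty$ or $\esub{\var}{\code}$ or skip cases, never the $(\openscopem\var\cons\genv)(\var) = \skboundvar$ case.

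I do not expect any real obstacle here: this is a routine structural induction, and the only thing to get right is aligning the three grammar productions for $\genvweak$ with the corresponding clauses in the definition of $\genv(\cdot)$. The mild point of attention — if any — is the implicit use of concatenation ``$\cons$'' in the grammar (the footnote about composition of lists removing spurious empty frames/environments), which means one should read $\closescopem\var\cons\genvweak_1\cons\openscopem\var\cons\genvweak_2$ up to this flattening; but this does not affect the argument, since lookup is defined on the flattened list and the skip clause fires regardless of how the sub-lists were assembled.
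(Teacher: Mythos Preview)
Your proposal is correct and follows exactly the approach of the paper, which dismisses the proof as ``a straightforward induction on the definition of weak environment $\genvweak$''. Your case analysis on the three productions of the $\genvweak$ grammar, aligned with the corresponding clauses of the lookup function, is precisely the intended argument.
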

\paragraph{Implementation.} Variables are meant to be implemented as memory locations, so that the environment is simply a store, and accessing it takes constant time on RAM. In particular both the list structure of environments and the scope markers are used to define the decoding (\ie for the analysis), but are not meant to be part of the actual implementation. This is to kept in mind for the sake of the bilinearity of the distillery to be defined.

\paragraph{Compatibility.}
In the \skammachine, both the frame and the environment record information about the abstractions in which evaluation is currently taking place. Clearly, such information has to be coherent, otherwise the decoding of a state becomes impossible. The following compatibility predicate captures the correlation between the structure of the frame and that of the environment.

\begin{definition}[Compatibility $\skframe \compatible \genv$]
Compatibility $\skframe \compatible \genv$ between frames and environments is defined by
  \begin{enumerate}
  \item \emph{Base}: $\stempty \compatible \stempty$;
  \item \emph{Weak Extension}: $(\skframeweak\cons\skframetrunk) \compatible (\genvweak\cons\genvtrunk)$ if $\skframetrunk \compatible \genvtrunk$;
  \item \emph{Abstraction}: $(\var\cons\skframe) \compatible (\openscopem\var\cons\genv) $ if $\skframe \compatible \genv$;
  \end{enumerate}
\end{definition}


\begin{lemma}[Properties of Compatibility]
\label{l:comp-properties} 
\begin{enumerate}

\item \emph{Well-Formed Environments}: \label{p:comp-properties-well-form}
if $\skframe$ and $\genv$ are compatible then $\genv$ is well-formed.

\item \emph{Factorization}: \label{p:comp-properties-fact}
every compatible pair $\skframe\compatible \genv$ can be written as $ (\skframeweak\cons\skframetrunk) \compatible (\genvweak\cons\genvtrunk) $ with $\skframetrunk = \var\cons\skframetwo$ iff $\genvtrunk = \openscopem\var\cons\genvtwo$;

\item \emph{Open Scopes Match}: \label{p:comp-properties-open-scopes}
$\sklp\skframe = \sklp\genv$.

\item \emph{Compatibility and Weak Structures Commute}: \label{p:comp-properties-weak-comm}
for all $\skframeweak$ and $\genvweak$, $\skframe \compatible \genv$ iff $ (\skframeweak\cons\skframe) \compatible  (\genvweak\cons\genv)$.

\end{enumerate}
\end{lemma}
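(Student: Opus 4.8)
The plan is to prove all four items by induction on the derivation of the compatibility judgement $\skframe\compatible\genv$, which is generated by only three rules (Base, Weak Extension, Abstraction). The items are almost independent, but it is convenient to establish the two ``structural'' ones (Factorization and Open Scopes Match) first, since they make precise the trunk alignment that the other two exploit. Throughout I use two facts from the excerpt: the unique decomposition of any frame $\skframe$ as $\skframeweak\cons\skframetrunk$ (its weak part followed by its trunk part), and Lemma~\ref{l:weak-envs-closed-scopes}, i.e.\ $\sklp\genvweak=\emptyset$ for every weak environment $\genvweak$. I also use two immediate grammar closure facts: $\openscopem\var\cons\genv$ is a trunk (hence well-formed) environment whenever $\genv$ is well-formed, and $\genvweak\cons\genv$ is well-formed whenever $\genvweak$ is weak and $\genv$ is well-formed.

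For item~2 (Factorization), the heads of $\skframetrunk$ and $\genvtrunk$ are read directly off the last rule: in the Base case both trunk parts are $\stempty$; in the Abstraction case the conclusion is literally $(\var\cons\skframe)\compatible(\openscopem\var\cons\genv)$, so $\skframetrunk = \var\cons\skframe$ exactly when $\genvtrunk = \openscopem\var\cons\genv$; in the Weak Extension case the premise is already a compatibility $\skframetrunk\compatible\genvtrunk$ on the trunk parts, and an inversion on it (it must end in Base or Abstraction, since $\skframetrunk$ is a trunk frame) gives the biconditional, which transfers to the conclusion by uniqueness of the weak/trunk decomposition. For item~3 (Open Scopes Match): Base gives $\emptyset=\emptyset$; in the Abstraction case $\sklp(\var\cons\skframe)$ and $\sklp(\openscopem\var\cons\genv)$ are obtained from the respective inner sets by adjoining $\var$, and the induction hypothesis closes the case; in the Weak Extension case a weak frame contains no abstracted variable so $\sklp(\skframeweak\cons\skframetrunk)=\sklp\skframetrunk$, while a weak environment has no open scope by Lemma~\ref{l:weak-envs-closed-scopes} and, being internally well-parenthesized and closed, does not affect the open scopes of what follows it, so $\sklp(\genvweak\cons\genvtrunk)=\sklp\genvtrunk$; again conclude by the induction hypothesis.

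Item~1 (Well-Formed Environments) is then a short induction: the Base case is $\stempty$, a weak and hence well-formed environment; in the Abstraction case the induction hypothesis gives that $\genv$ is well-formed, so $\openscopem\var\cons\genv$ is a trunk environment and thus well-formed; in the Weak Extension case the induction hypothesis gives $\genvtrunk$ well-formed and the side condition of the rule gives $\genvweak$ weak, so $\genvweak\cons\genvtrunk$ is well-formed by the closure fact above. For item~4 (commutation), the left-to-right direction proceeds by merging weak prefixes: write the unique decompositions $\skframe = \skframeweak'\cons\skframetrunk$ and, using item~1, $\genv = \genvweak'\cons\genvtrunk$, so that $\skframeweak\cons\skframe = (\skframeweak\cons\skframeweak')\cons\skframetrunk$ and $\genvweak\cons\genv = (\genvweak\cons\genvweak')\cons\genvtrunk$ have weak first components; from $\skframe\compatible\genv$ one obtains $\skframetrunk\compatible\genvtrunk$ (by inversion on Weak Extension, or directly in the Base/Abstraction cases), and a single application of Weak Extension rebuilds $(\skframeweak\cons\skframe)\compatible(\genvweak\cons\genv)$. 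The right-to-left direction is the inverse reasoning: the derivation of $(\skframeweak\cons\skframe)\compatible(\genvweak\cons\genv)$ ends in Weak Extension (or Base, in the all-empty degenerate case), and its premise $\skframetrunk\compatible\genvtrunk$ is on the trunk parts of $\skframeweak\cons\skframe$ and $\genvweak\cons\genv$, which by uniqueness of the decomposition coincide with the trunk parts of $\skframe$ and $\genv$; reapplying Weak Extension with the weak parts of $\skframe$ and $\genv$ yields $\skframe\compatible\genv$.

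The only genuinely delicate point is the bookkeeping around the overloaded ``$\cons$'' and the degenerate decomposition $\stempty\cons\stempty$ flagged in the footnote of the paper: the inversions in items~2 and~4 must be set up so that an application of Weak Extension whose weak part is empty is identified with its own premise, and so that ``the trunk part'' always refers to the normalized decomposition; otherwise the inductive arguments do not obviously terminate and the alignment of trunk parts before and after prepending a weak chunk is not literally an equality. Once this convention is fixed, all the case analyses above are routine.
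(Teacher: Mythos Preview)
Your proposal is correct and follows essentially the same approach as the paper: items~1--3 by induction on the derivation of $\skframe\compatible\genv$, and item~4 as a corollary of Factorization obtained by splitting off trunk parts and reapplying Weak Extension with merged weak prefixes. Your more explicit treatment of the overloaded ``$\cons$'' and the normalization of empty weak parts is a welcome addition that the paper only flags in a footnote.
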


\paragraph{Invariants.} The properties of the machine that are needed to prove its correctness and completeness are given by the following invariants. 

\begin{lemma}[\skammachine invariants]
	\label{l:invariants}
	Let $\state = \skamstate{\skphase}{\skframe}{\codetwo}{\stack}{\genv}$ be a state reachable from an initial term $\code_0$. Then:
	\begin{enumerate}
		\item \emph{Compatibility:} \label{p:invariants-comp}
		$\skframe$ and $\genv$ are compatible, \ie $\skframe \compatible \genv$.

		\item \emph{Normal Form:} \label{p:invariants-nf}
		\begin{enumerate}[label=(\arabic*), ref=\arabic*]
			\item \emph{Backtracking Code}: \label{p:invariants-nf-code} 
			if $\skphase = \skback$, then $\codetwo$ is normal, and if $\stack$ is non-empty, then $\codetwo$ is \quiet; 
			
			\item \emph{Frame}: \label{p:invariants-nf-frame} 
			if $\skframe = \skframetwo\cons\skap{\codethree}{\stacktwo}\cons\skframethree$, then $\codethree$ is \quiet.
		\end{enumerate}

		\item \emph{Backtracking Free Variables:} \label{p:invariants-backtracking}
		\begin{enumerate}[label=(\arabic*), ref=\arabic*]
			\item \emph{Backtracking Code}:\label{p:invariants-backtracking-code} 
			if $\skphase = \skback$ then $\fv{\codetwo} \subseteq \skl(\skframe)$;

			\item \emph{Pairs in the Frame}: \label{p:invariants-backtracking-frame} 
			if $\skframe = \skframetwo\cons\skap{\codethree}{\stacktwo}\cons\skframethree$ then $\fv{\codethree} \subseteq \skl(\skframethree)$.
		\end{enumerate}

		\item \emph{Name:} \label{p:invariants-name}
		\begin{enumerate}[label=(\arabic*), ref=\arabic*]
			\item \emph{Substitutions}: \label{p:invariants-name-es}
			if $\genv = \genvtwo \cons \esub\var\code \cons \genvthree$  then $\var$ is fresh wrt $\code$ and $\genvthree$;

			\item \emph{Markers}: \label{p:invariants-name-marker}
			if $\genv = \genvtwo \cons \openscopem\var \cons \genvthree$ and $\skframe = \skframetwo \cons \var \cons \skframethree$ then $\var$ is fresh wrt $\genvthree$  and $\skframethree$, and $\genvtwo(\vartwo) = \bot$ for any free variable $\vartwo$ in $\skframethree$;

			\item \emph{Abstractions}: \label{p:invariants-name-abs}
			if $\la\var\code$ is a subterm of $\skframe$, $\codetwo$, $\stack$, or $\genv$ then $\var$ may occur only in $\code$ and in the closed subenvironment $\closescopem\var \cons \genvweak \cons \openscopem\var$ of $\genv$, if it exists.
		\end{enumerate}

		\item \emph{Closure:} \label{p:invariants-closure}
		\begin{enumerate}[label=(\arabic*), ref=\arabic*]
			\item \emph{Environment}: \label{p:invariants-closure-env}
			if $\genv = \genvtwo \cons \esub\var\code \cons \genvthree$ then $\genvthree(\vartwo)\neq \bot$ for all $\vartwo \in \fv\code$;
			
			\item \emph{Code, Stack, and Frame}: \label{p:invariants-closure-state}
			$\genv(\var)\neq \bot$ for any free variable in $\codetwo$ and in any code of $\stack$ and $\skframe$.
		\end{enumerate}
	\end{enumerate}
\end{lemma}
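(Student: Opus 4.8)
I would prove Lemma~\ref{l:invariants} by induction on the length $k$ of the execution $\state_0 \tomach^k \state$, where $\state_0 = \skamstate{\skeval}{\stempty}{\code_0}{\stempty}{\stempty}$ is the initial state of the closed, well-named term $\code_0$. Since the invariants are interdependent, this is a \emph{simultaneous} induction: in the inductive step the whole bundle of invariants for $\state$ is available when establishing each invariant for $\statetwo$. For $k=0$ everything is either immediate or vacuous: Compatibility is the Base clause $\stempty\compatible\stempty$; the Normal Form and Backtracking invariants are empty because the phase is $\skeval$ and both frame and environment are empty; the Name invariant's Substitutions/Markers cases are vacuous (empty environment) and its Abstractions case holds because $\code_0$ is well-named, so any subterm $\l\var.\code$ binds $\var$ disjointly from all other variables; Closure is vacuous for the environment and holds for the code since $\code_0$ is closed.

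\textbf{Inductive step: organization.} Assume the invariants for a reachable $\state$ and let $\state\tomach\statetwo$. I would do a case analysis on the eight transitions of Fig.~\ref{fig:SMAM}, grouped as: the search transitions $\tomachasubone$ (unfold an application) and $\tomachasubtwo$ (enter an abstraction); the principal transitions $\tomachm$ (push an ES) and $\tomache$ (substitute, with the on-the-fly renaming $\rename\code$); the phase switch $\tomachasubthree$; and the backtracking transitions $\tomachasubfour$ (close a scope), $\tomachasubfive$ (reassemble an application), $\tomachasubsix$ (descend into an argument, returning to $\skeval$). For \emph{Compatibility}, the point is that $\tomachasubone,\tomachm,\tomachasubfive,\tomachasubsix$ touch only the weak parts of frame and environment (or nothing), so compatibility is preserved by the ``weak structures commute'' clause of Lemma~\ref{l:comp-properties}; $\tomachasubtwo$ pushes matching $\var$ on the frame and $\openscopem\var$ on the environment, preserved by the Abstraction clause; and $\tomachasubfour$ pops them, where the Factorization clause of Lemma~\ref{l:comp-properties} guarantees the popped $\var$ matches the topmost open scope, so $\closescopem\var\cons\genv$ stays well-formed. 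From Compatibility one gets ``Open Scopes Match'', $\sklp\skframe=\sklp\genv$, which is used repeatedly below.

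\textbf{Inductive step: the remaining invariants.} For the \emph{Normal Form} and \emph{Backtracking Free Variables} invariants, the content is: a variable is always $\quiet$; the $\skback$ phase is entered only by $\tomachasubthree$, firing on a code $\var$ with $\genv(\var)=\openscope$, which is normal and $\quiet$, and whose free variable $\var$ lies in $\sklp\genv=\sklp\skframe\subseteq\skl(\skframe)$; while backtracking, $\tomachasubfour$ maps $\code$ to $\l\var.\code$ (still normal, free variables still in $\skl(\skframe)$ since the popped $\var$ leaves the abstraction) and $\tomachasubfive$ maps $\codetwo$ to $\codethree\codetwo$, which is normal and $\quiet$ because $\codethree$ — coming from the frame pair $\skap{\codethree}{\stack}$ — is $\quiet$ by the Frame invariant and $\codetwo$ is normal by the Backtracking Code invariant, with $\fv{\codethree}\cup\fv{\codetwo}\subseteq\skl(\skframe)$ by the ``Pairs in the Frame'' and Backtracking invariants. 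The Frame invariant (and the free-variable condition on frame pairs) is maintained because pairs are pushed only by $\tomachasubsix$, where the pushed code was a backtracking code with a non-empty stack, hence $\quiet$ with free variables in $\skl(\skframe)$. For the \emph{Name} invariant, the Substitutions and Abstractions cases interlock on $\tomachm$: the old Abstractions invariant applied to the subterm $\l\var.\code$ of the code gives $\var$ fresh w.r.t.\ $\codetwo$ and $\genv$; the Abstractions invariant is stable because the only new binders ever created are the fresh ones produced by $\rename\cdot$ in $\tomache$, and otherwise binders are only relocated. The Markers case is fed by $\tomachasubtwo$, whose new $\var$ is fresh by the old Abstractions invariant. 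For \emph{Closure}, the delicate transition is $\tomache$: $\rename\code$ is an $\alpha$-variant, so $\fv{\rename\code}=\fv\code$, and since $\genv(\cdot)$ is defined by skipping over balanced $\closescopem\var\cons\genvweak\cons\openscopem\var$ blocks, the Closure-Environment invariant at the entry reached by $\genv(\var)=\code$ yields $\genv(\vartwo)\neq\bot$ for every $\vartwo\in\fv\code$; for $\tomachm$ the free variables of the newly stored $\codetwo$ are covered by the old Closure invariant for the stack.

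\textbf{Main obstacle.} The genuine difficulty is not any single transition in isolation but the bookkeeping of the parenthesized scope markers: making precise the interaction between the weak/trunk decomposition of the frame, the weak/trunk decomposition of the environment, and the garbage-collecting definition of $\genv(\cdot)$. Lemma~\ref{l:comp-properties} (especially Factorization and the commutation of compatibility with weak structures) is exactly the tool that tames this, so the crux of the proof is to check carefully that the two scope-manipulating transitions $\tomachasubtwo$ and $\tomachasubfour$ keep $(\skframe,\genv)$ compatible — hence $\genv$ well-formed and the decoding of $\statetwo$ meaningful — and that they respect the freshness and closure side-conditions of the Name and Closure invariants, these being precisely the invariants whose statements refer to the closed-subenvironment structure $\closescopem\var\cons\genvweak\cons\openscopem\var$.
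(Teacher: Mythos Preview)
Your proposal is correct and follows the paper's approach closely: induction on the length of the execution, case analysis on the eight transitions, and systematic use of Lemma~\ref{l:comp-properties} (factorization and weak-structure commutation) to handle Compatibility. The paper actually proves the five invariants in separate subsections rather than as one simultaneous induction, but it explicitly records the same inter-invariant dependencies you rely on, so the two organizations are equivalent.

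One point of emphasis deserves correction. You call $\tomache$ ``the delicate transition'' for Closure, but in the paper's proof the genuinely subtle case for Closure is $\tomachasubfour$. When the marker $\closescopem\var$ is pushed, the weak environment $\genvweak$ that sat above $\openscopem\var$ becomes invisible to the lookup $\genv(\cdot)$; you must therefore argue that no free variable of the current code, nor of any code in the frame, was bound in that now-hidden $\genvweak$. The paper discharges this using the Backtracking Free Variables invariant (for the code) together with the Name~Markers invariant (for codes in the frame, via the clause ``$\genvtwo(\vartwo)=\bot$ for any free variable $\vartwo$ in $\skframethree$''). This is the \emph{only} place where Closure depends on those two invariants, and it is the real content of the dependency structure you sketch in your ``Main obstacle'' paragraph; your Closure paragraph should name $\tomachasubfour$ explicitly and spell out this argument rather than leaving it implicit.
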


Since the statement of the invariants is rather technical, let us summarize the dependencies (or lack thereof) of the various points and their use in the distillation proof of the next section.
\begin{itemize}
	\item The compatibility, normal form and backtracking free variables invariants are independent of each other and of the subsequent invariants.
	\item The name invariant relies on the compatibility invariant only. It implies the determinism of the machine (because in the variable case at most one among $\tomache$ and $\tomachasubfour$ applies).
	\item The closure invariant relies on the compatibility, name and backtracking free variable invariants only. It is crucial for the progress property (because in the variable case at least one among $\tomache$ and $\tomachasubfour$ applies).
\end{itemize}
The proof of every invariant is by induction on the number of transitions leading to the reachable state. In this respect, the various points of the statement of each invariant are entangled, in the sense that each point needs to use the induction hypothesis of one of the other points, and thus they cannot be proved separately.

\section{Distilling the Strong MAM}
\label{sect:distillation}

The definition of the decoding relies on the notion of compatible pair.

\begin{definition}[Decoding]
Let $\state = \mac{\skphase}{\skframe}{\code}{\stack}{\genv}$ be a state s.t. $\skframe \compatible \genv$ is a compatible pair. Then $\state$ decodes to a state context $\stctx\state$ and a term $\decode\state$ as follows:
\begin{center}
\ovalbox{\small
\begin{minipage}{0.97\textwidth}
$$\begin{array}{rcl@{\hspace{2em}}rcl}
    \multicolumn{3}{l}{\mbox{Weak Environments:}} & \multicolumn{3}{l}{\mbox{Compatible Pairs:}}\\
	\decode\stempty & \defeq & \ctxhole 
	&	
	\decodecompat{\stempty}{\stempty} & \defeq & \ctxhole 
	\\
	
    \decode{\esub{\var}{\codetwo}\cons\genvweak} & \defeq & \decodep\genvweak{\ctxhole\esub{\var}{\codetwo}} 
    &
    \decodecompat{ (\skframeweak\cons\skframetrunk) }{ (\genvweak\cons\genvtrunk) } 
    & \defeq & \decodecompat{\skframetrunk}{\genvtrunk}\ctxholep{\decodep\genvweak{\decode\skframeweak}}
    \\
    
    \decode{\closescopem\var\cons\genvweak\cons\openscopem\var\cons\genvweaktwo} & \defeq & \decode{\genvweaktwo}
    &
    \decodecompat{ (\var\cons\skframe) }{ (\openscopem\var\cons\genv) } & \defeq & \decodecompat{\skframe}{\genv}\ctxholep{\l\var.\ctxhole}
\end{array}$$
$$\begin{array}{rcl@{\hspace{5em}}rcl@{\hspace{3em}}rcl}
	\multicolumn{3}{l}{\mbox{Weak Frames:}} & \multicolumn{3}{l}{\mbox{Stacks:}} & \multicolumn{3}{l}{\mbox{States:}}\\
    \decode\stempty & \defeq & \ctxhole 
    &
    \decode\stempty & \defeq & \ctxhole
    &
    \stctx\state & \defeq & \decodecompat{\skframe}{\genv}\ctxholep{\decode\stack} 
    \\
    
    \decode{ \skap\codetwo\stack \cons\skframeweak} & \defeq & \decodep\skframeweak{\decodep\stack{\codetwo\ctxhole}}
    &
    \decode{\codetwo\cons\stack} & \defeq & \decode\stack\ctxholep{\ctxhole\codetwo}
    &
    \decode\state & \defeq & \stctxp\state\code\\
    \\
\end{array}$$
\end{minipage}
}
\end{center}
\end{definition}

The following lemmas sum up the properties of the decoding.

\begin{lemma}[Closed Scopes Disappear]
\label{l:closed-scopes-disap} 
Let $\skframe \compatible \genv$ be a compatible pair. Then $\decodecompat\skframe{ (\closescopem\var\cons\genvweak\cons\openscopem\var\cons\genv) } = \decodecompat\skframe\genv$.
\end{lemma}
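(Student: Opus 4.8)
The statement to prove is: if $\skframe \compatible \genv$ is a compatible pair, then
\[
\decodecompat\skframe{(\closescopem\var\cons\genvweak\cons\openscopem\var\cons\genv)} = \decodecompat\skframe\genv.
\]
The plan is to reason by induction on the structure of the compatible pair $\skframe \compatible \genv$, following the three clauses of the definition of compatibility. The key observation is that the closed subenvironment $\closescopem\var\cons\genvweak\cons\openscopem\var$ is a \emph{weak} environment fragment (it is built according to the production $\closescopem\var\cons\genvweak\cons\openscopem\vartwo$ of the grammar for $\genvweak$, with the two markers matching), so it gets absorbed in the weak-environment part of the decoding and never reaches the trunk recursion where frame structure matters.

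\textbf{Base case.} If the pair is $\stempty \compatible \stempty$, then prepending the closed block turns the environment into $\closescopem\var\cons\genvweak\cons\openscopem\var\cons\stempty$, which matches the third clause of the weak-environment decoding: $\decode{\closescopem\var\cons\genvweak\cons\openscopem\var\cons\stempty} = \decode\stempty = \ctxhole$, and on the other side $\decodecompat\stempty\stempty = \ctxhole$. So both sides are $\ctxhole$.

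\textbf{Inductive step, weak extension.} Here $\skframe = \skframeweak\cons\skframetrunk$ and $\genv = \genvweak'\cons\genvtrunk$, and the decoding is $\decodecompat{\skframetrunk}{\genvtrunk}\ctxholep{\decodep{\genvweak'}{\decode\skframeweak}}$. Prepending $\closescopem\var\cons\genvweak\cons\openscopem\var$ simply enlarges the weak environment prefix from $\genvweak'$ to $\closescopem\var\cons\genvweak\cons\openscopem\var\cons\genvweak'$, which is still a well-formed weak environment. I then need the auxiliary fact that the weak-environment decoding map $\genvweak \mapsto \decodep\genvweak{\cdot}$ satisfies $\decodep{\closescopem\var\cons\genvweak\cons\openscopem\var\cons\genvweak'}{C} = \decodep{\genvweak'}{C}$ for any context $C$ — this is itself an immediate sub-induction on $\genvweak'$, using exactly the third clause of the weak-environment decoding when $\genvweak'$ is consumed (or, more slickly, just the clause $\decode{\closescopem\var\cons\genvweak\cons\openscopem\var\cons\genvweaktwo} \defeq \decode{\genvweaktwo}$ read with $\genvweaktwo := \genvweak'$, after noting the decoding of a weak environment into a context is a fold that commutes appropriately). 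With that fact, the two decodings coincide and we are done without even invoking the induction hypothesis of the main lemma — the absorption happens entirely inside the weak part.

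\textbf{Inductive step, abstraction.} Here $\skframe = \vartwo\cons\skframe'$, $\genv = \openscopem\vartwo\cons\genv'$, with $\skframe'\compatible\genv'$, and $\decodecompat{\vartwo\cons\skframe'}{\openscopem\vartwo\cons\genv'} = \decodecompat{\skframe'}{\genv'}\ctxholep{\l\vartwo.\ctxhole}$. The point is that $\closescopem\var\cons\genvweak\cons\openscopem\var\cons\openscopem\vartwo\cons\genv'$ is again a well-formed environment whose leading weak block is exactly $\closescopem\var\cons\genvweak\cons\openscopem\var$, so its decomposition into $\genvweak''\cons\genvtrunk''$ has $\genvweak'' = \closescopem\var\cons\genvweak\cons\openscopem\var$ and $\genvtrunk'' = \openscopem\vartwo\cons\genv'$; then applying the weak-extension clause followed by the absorption fact above reduces the left-hand side to $\decodecompat{\vartwo\cons\skframe'}{\openscopem\vartwo\cons\genv'}$. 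Alternatively, and perhaps more cleanly, this case can be folded into the weak-extension case by observing that $\closescopem\var\cons\genvweak\cons\openscopem\var$ is a weak prefix and invoking clause (2) of compatibility directly via Lemma (Properties of Compatibility, item on weak structures commuting).

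\textbf{Main obstacle.} I expect the only real subtlety is bookkeeping: making precise that prepending $\closescopem\var\cons\genvweak\cons\openscopem\var$ to a well-formed environment preserves well-formedness \emph{and} does not alter the weak/trunk decomposition except by extending the weak prefix — in other words, that the whole block stays on the "weak side" of the unique decomposition $\genv = \genvweak\cons\genvtrunk$. Once that structural fact is pinned down, the equality of decodings is forced by the single clause $\decode{\closescopem\var\cons\genvweak\cons\openscopem\var\cons\genvweaktwo} \defeq \decode{\genvweaktwo}$ and a trivial induction. No case genuinely needs the compatibility hypothesis beyond guaranteeing that $\genv$ (and hence $\closescopem\var\cons\genvweak\cons\openscopem\var\cons\genv$) decomposes as required, which is Lemma (Properties of Compatibility).
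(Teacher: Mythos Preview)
Your proposal is correct and rests on exactly the same observation the paper uses: the block $\closescopem\var\cons\genvweak\cons\openscopem\var$ is a weak environment, so it sits entirely in the weak prefix of the decomposition and is erased by the clause $\decode{\closescopem\var\cons\genvweak\cons\openscopem\var\cons\genvweaktwo} = \decode{\genvweaktwo}$.

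The difference is only in presentation. You set things up as an induction on the derivation of $\skframe \compatible \genv$, then notice in the weak-extension case that the induction hypothesis is never used, and in the abstraction case that it reduces to the weak-extension case anyway. The paper skips the induction entirely: it invokes the factorization property (Lemma on Properties of Compatibility) once to write $\skframe = \skframeweak\cons\skframetrunk$ and $\genv = \genvweaktwo\cons\genvtrunk$, observes that prepending the closed block yields the weak prefix $\closescopem\var\cons\genvweak\cons\openscopem\var\cons\genvweaktwo$, and finishes by a three-line chain of equalities using the decoding definitions. Your ``main obstacle'' paragraph is precisely this argument; you could have stopped there and dispensed with the case analysis.
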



\begin{lemma}[\lo Decoding Invariant]
\label{l:lo-decoding-invariant} 
Let $\state = \skamstate{\skphase}{\skframe}{\codetwo}{\stack}{\genv}$ be a
reachable state. Then $\decodecompat{\skframe}{\genv}$ and $\stctx\state$ are \lo contexts.
%
%
%
\end{lemma}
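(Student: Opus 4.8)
The plan is to prove this by induction on the number of transitions leading to the reachable state $\state$, exactly like the proof of the invariants in Lemma~\ref{l:invariants}, and in fact I would prove it alongside (or immediately after) those invariants, since it will rely heavily on them — most notably on the Compatibility invariant (to even make sense of the decoding), on the Normal Form invariant (Backtracking Code and Frame), and on the Backtracking Free Variables invariant. Observe first that it suffices to prove that $\stctx\state = \decodecompat{\skframe}{\genv}\ctxholep{\decode\stack}$ is \lo: indeed $\decodecompat{\skframe}{\genv}$ is a prefix of $\stctx\state$ obtained by plugging a context into its hole, and by inspection of Definition~\ref{def:LOCtx} every prefix of a \lo context is again \lo (each of the three conditions is a statement about \emph{every} decomposition $\ctx = \ctxtwop{\cdots}$, hence is inherited by prefixes). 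So I focus on $\stctx\state$.

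For the base case, the initial state $\skamstate{\skeval}{\stempty}{\code}{\stempty}{\stempty}$ decodes to $\ctxhole$, which is trivially \lo. For the inductive step I would go through the eight transitions of Fig.~\ref{fig:SMAM} one by one, in each case comparing $\stctx\state$ with $\stctx\statetwo$ and checking the three \lo-context conditions (Right Application: the term left of a hole-containing application is \quiet; Left Application: the context left of an application-with-hole is not $\sctxp{\la\var\ctxfour}$; Substitution: no left-free-variable capture by an ES). The commutative transitions $\tomachasubone$ (push argument on stack) and $\tomachasubsix$ (enter an argument, pushing a dump frame) extend the context by an application node $\ctxhole\codetwo$ resp. $\code\ctxhole$; for $\tomachasubsix$ I need $\code$ to be \quiet, which is precisely what the Normal Form invariant (Backtracking Code, since we are in phase $\skback$ with a non-empty stack) guarantees, and for the Left-Application condition when we later build $\code\codetwo$ I need $\code$ not to be an answer, again from normality. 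Transition $\tomachasubtwo$ extends the decoded frame by $\l\var.\ctxhole$ and adds $\openscopem\var$ to the environment, preserving compatibility and leaving the \lo conditions untouched. The multiplicative transition $\tomachm$ and the exponential $\tomache$ fire a redex at the hole and, modulo $\eqstruct$, leave the surrounding context essentially unchanged; here the Substitution condition needs the Name invariant / Backtracking Free Variables to ensure the freshly created ES $\esub\var\codetwo$ does not capture a left free variable of the context above it. The backtracking transitions $\tomachasubthree$ (just a phase switch, context unchanged), $\tomachasubfour$ (turn $\var\cons\skframe$, $\code$ into $\l\var.\code$, adding $\closescopem\var$: by Lemma~\ref{l:closed-scopes-disap} the closed scope disappears in the decoding, so $\stctx\state$ literally does not change), and $\tomachasubfive$ (pop a dump frame $\skap\code\stack$, rebuilding $\code\codetwo$ in the stack): for $\tomachasubfive$ the new context has $\code\ctxhole$-shaped and $\ctxhole$-under-$\code$ pieces, so I need $\code$ \quiet, which is the Frame clause of the Normal Form invariant.

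The main obstacle I expect is the Substitution clause of Definition~\ref{def:LOCtx}, i.e.\ showing $\var\notin\lfv\ctxthree$ whenever the decoded context contains $\ctxthree\esub\var\tmtwo$. Environment entries $\esub\var\codetwo$ are threaded through $\decodecompat{\skframe}{\genv}$ in a somewhat intricate way (interleaved with the frame structure and with the open/closed scope markers), so identifying exactly which subcontext sits to the left of the hole of such an ES, and then invoking the right freshness/closure invariant to rule out capture, is the delicate bookkeeping step. A secondary subtlety is handling the fragments $\closescopem\var\cons\genvweak\cons\openscopem\var$ correctly — but here Lemma~\ref{l:closed-scopes-disap} does the work, letting me simply erase them before checking the \lo conditions. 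Everything else is a routine, if lengthy, case analysis driven by the invariants of Lemma~\ref{l:invariants}.
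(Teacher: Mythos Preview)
Your plan is essentially the paper's: induction on the length of the execution, case analysis on the last transition, relying on the invariants of Lemma~\ref{l:invariants}. Two points are worth flagging.

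First, the paper works in the opposite direction: it proves by induction that $\decodecompat{\skframe}{\genv}$ is \lo, and then remarks that $\stctx\state = \decodecompat{\skframe}{\genv}\ctxholep{\decode\stack}$ is \lo\ as an easy consequence (the stack decodes to a nested left-application context $\ctxhole\,\codetwo_1\cdots\codetwo_n$, so only the Left Application clause is at stake). This trims the induction to the transitions that touch $\skframe$ or $\genv$. Your prefix argument for the converse implication is fine.

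Second, and more substantively, you are missing an auxiliary lemma that the paper isolates and uses: $\decodecompat{\skframe}{\genv}$ (and $\decode\skframeweak$, $\decode\genvweak$) is never \emph{applicative}, i.e.\ never of the form $\ctxp{\sctx\,\tm}$ with the hole sitting (through a substitution context) on the left of an application. This is exactly what the $\tomachasubtwo$ case needs: when you plug $\l\var.\ctxhole$ into $\decodecompat{\skframe}{\genv}$, the Left Application clause could fail if the surrounding context were $\ctxtwop{\sctxp{\ctxhole}\,\tm}$; your claim that this transition ``leav[es] the \lo\ conditions untouched'' glosses over precisely this check. The same lemma is used in $\tomachasubsix$. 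It is an easy structural induction on compatible pairs, but it does need to be stated and proved.

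A small correction: in $\tomachasubfour$ the state context does not stay ``literally'' unchanged. Writing $\genv = \genvweak\cons\openscopem\var\cons\genvtwo$ via factorization, the source context is $\decodecompat{\skframe}{\genvtwo}\ctxholep{\l\var.\decode{\genvweak}}$ while the target is, by Lemma~\ref{l:closed-scopes-disap}, just $\decodecompat{\skframe}{\genvtwo}$. So the context strictly shrinks; your prefix observation still yields the conclusion.
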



\begin{lemma}[Decoding and Structural Equivalence $\eqstruct$]
\label{l:eqstruct_structural_frames}
\begin{enumerate}
\item \label{p:eqstruct_structural_frames-stack} \sloppy{\emph{Stacks and Substitutions Commute}: if $\var$ does not occur free in $\stack$ then $\decodep\stack{\tm\esub\var\tmtwo} \eqstruct \decodep\stack\tm\esub\var\tmtwo$;}

\item \label{p:eqstruct_structural_frames-frame}\emph{Compatible Pairs Absorb Substitutions}: if $\var$ does not occur free in $\skframe$ then  $\decodecompat\skframe\genv\ctxholep{\tm\esub\var\tmtwo} \eqstruct \decodecompat\skframe{ (\esub\var\tmtwo\cons\genv) }\ctxholep\tm$.

\end{enumerate}
\end{lemma}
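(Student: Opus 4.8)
The plan is to prove the two points in order, the second relying on the first together with an auxiliary claim about weak frames. \emph{First point.} I would argue by induction on the stack $\stack$. If $\stack = \stempty$ the two sides, $\decodep\stempty{\tm\esub\var\tmtwo}$ and $\decodep\stempty{\tm}\esub\var\tmtwo$, are literally the same term. If $\stack = \codetwo\cons\stack'$, then from $\var \notin \fv\stack$ we get $\var\notin\fv\codetwo$ and $\var\notin\fv{\stack'}$, and since $\decodep{\codetwo\cons\stack'}{\tmthree} = \decodep{\stack'}{\tmthree\,\codetwo}$ for every term $\tmthree$, it suffices to rewrite $\decodep{\stack'}{(\tm\esub\var\tmtwo)\,\codetwo}$ into $\decodep{\stack'}{(\tm\,\codetwo)\esub\var\tmtwo}$ via the axiom $\tostructapl$ placed under the context $\decode{\stack'}$ — licit because $\var\notin\fv\codetwo$ — and then to invoke the induction hypothesis on $\stack'$ with $\tm\,\codetwo$ in place of $\tm$.

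\emph{Auxiliary claim.} The heart of the argument is the statement: if $\var\notin\fv\skframeweak$, then $\decode\skframeweak\ctxholep{\tm\esub\var\tmtwo}\eqstruct(\decode\skframeweak\ctxholep{\tm})\esub\var\tmtwo$, i.e.\ an ES can be hoisted out of the (purely applicative) context decoding a weak frame. I would prove this by induction on $\skframeweak$: for $\skframeweak = \skap\code\stacktwo\cons\skframeweak'$ one uses $\decode\skframeweak = \decode{\skframeweak'}\ctxholep{\decodep\stacktwo{\code\,\ctxhole}}$, and $\var$ is free in none of $\code,\stacktwo,\skframeweak'$; the ES is moved past the head application with $\code$ by $\tostructapr$ (using $\var\notin\fv\code$), past the applications contributed by $\stacktwo$ by the first point applied to $\stacktwo$, and past the rest by the induction hypothesis on $\skframeweak'$, each time under a suitable context.

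\emph{Second point.} Take the (unique) decompositions $\skframe = \skframeweak\cons\skframetrunk$ and $\genv = \genvweak\cons\genvtrunk$; since $\esub\var\tmtwo\cons\genvweak$ is again a weak environment, the weak-extension clause of compatibility gives $\skframe\compatible(\esub\var\tmtwo\cons\genv)$, so both sides are well defined. Unfolding the decoding, $\decodecompat\skframe\genv$ and $\decodecompat\skframe{(\esub\var\tmtwo\cons\genv)}$ coincide except in their innermost block: using $\decode{\esub\var\tmtwo\cons\genvweak} = \decode\genvweak\ctxholep{\ctxhole\esub\var\tmtwo}$, the new entry $\esub\var\tmtwo$ ends up wrapped \emph{around} the weak-frame block $\decode\skframeweak$, whereas in $\decodecompat\skframe\genv\ctxholep{\tm\esub\var\tmtwo}$ it sits \emph{inside} $\decode\skframeweak$, directly around $\tm$. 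The auxiliary claim — applicable since $\var\notin\fv\skframe$ entails $\var\notin\fv\skframeweak$ — bridges exactly this gap, and contextual closure of $\eqstruct$ under the shared outer context $\decodecompat\skframetrunk\genvtrunk\ctxholep{\decode\genvweak\ctxholep\ctxhole}$ yields the statement.

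\emph{Main obstacle.} I expect no conceptual difficulty; the work is all in the bookkeeping of nested context-pluggings — in particular pinpointing that the frontmost environment entry, after decoding, lands around the innermost weak-frame block rather than around the plugged term — and in checking that every use of $\tostructapl$ or $\tostructapr$ satisfies its free-variable side condition, which is guaranteed throughout by the hypotheses $\var\notin\fv\stack$ and $\var\notin\fv\skframe$.
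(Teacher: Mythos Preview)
Your proposal is correct and mirrors the paper's proof almost exactly: the paper also proves the stack case by induction using $\tostructapl$, isolates the weak-frame commutation as an intermediate statement (proved by induction using $\tostructapr$ together with the stack case), and then derives the compatible-pair case by factorizing $\skframe=\skframeweak\cons\skframetrunk$, $\genv=\genvweak\cons\genvtrunk$ and unfolding the decoding so that the new entry $\esub\var\tmtwo$ wraps $\decode\skframeweak$. The only cosmetic difference is that the paper states your auxiliary claim as an explicit numbered point of a more general lemma in the appendix, whereas you inline it.
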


The next theorem is our first main result. By the abstract approach presented in \refsect{machines} (\reftm{GenSim}), it implies that the \skammachine is a correct and complete implementation of linear \lo evaluation to normal form.

\begin{theorem}[Distillation] 
	\label{thm:strong-mam-distillation}
	 $(\mbox{\skammachine}, \tolo, \eqstruct,\decodefun)$ is an explicit and reflective distillery. In particular:
 \begin{enumerate}
 	\item \emph{Projection of Principal Transitions}:
	\begin{enumerate}
		\item \emph{Multiplicative}: if $\state\tomachm\statetwo$ then $\decode\state\tom\eqstruct\decode\statetwo$;
		\item \emph{Exponential}: if $\state\tomache\statetwo$ then $\decode\state\toe\decode\statetwo$, duplicating the same subterm.
	\end{enumerate}
	
	 \item \emph{Distillation of Commutative Transitions}: 
 \begin{enumerate}
		\item \emph{Garbage Collection of Weak Environments}: if $\state\tomachcp{_{4}}\statetwo$ then $\decode\state \tostructgc \decode\statetwo$;

 	\item \emph{Equality Cases}: \label{p:strong-mam-distillation-most-comm} if $\state\tomachcp{_{1,2,3,5,6}}\statetwo$ then $\decode\state = \decode\statetwo$.

	\end{enumerate}
	\end{enumerate}
\end{theorem}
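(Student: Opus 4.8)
The statement bundles together several facts, so the plan is to dispatch each transition of the \skammachine\ separately, in each case computing $\decode\state$ and $\decode\statetwo$ explicitly from the decoding definition and comparing them. Throughout I would freely invoke the invariants of \reflem{invariants} (in particular compatibility, so that $\decode{}$ is even defined on reachable states, and the closure and name invariants, which guarantee the freshness side-conditions needed by $\eqstruct$) and the decoding lemmas, especially \reflem{closed-scopes-disap} (closed scopes disappear), \reflem{lo-decoding-invariant} (so that the contexts produced are genuinely \lo, which is needed to argue the projected step is a $\tolo$ step and not merely a $\to$ step), and \reflem{eqstruct_structural_frames} (to push explicit substitutions through stacks and compatible pairs).

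First the commutative ``equality cases'' ($\tomachasubone,\tomachasubtwo,\tomachasubthree,\tomachasubfive,\tomachasubsix$). For $\tomachasubone$ and $\tomachasubfive$ (moving an argument between code and stack) and $\tomachasubsix$ (pushing $\skap\code\stack$ onto the frame when backtracking reaches an argument), one just unfolds $\decode{\codetwo\cons\stack} = \decodep\stack{\ctxholep{\ctxhole\codetwo}}$ and the weak-frame clause $\decode{\skap\codetwo\stack\cons\skframeweak} = \decodep{\skframeweak}{\decodep\stack{\codetwo\ctxhole}}$ and checks the two sides coincide as terms; these are pure bookkeeping identities. For $\tomachasubtwo$ (entering an abstraction, in the $\skeval$ phase) the frame gains $\var$ and the environment gains $\openscopem\var$; the compatible-pair clause $\decodecompat{\var\cons\skframe}{\openscopem\var\cons\genv} = \decodecompat\skframe\genv\ctxholep{\la\var.\ctxhole}$ shows the decoded state context acquires exactly the $\la\var.\ctxhole$ that was removed from the code, so $\decode\state = \decode\statetwo$. $\tomachasubthree$ only changes the phase flag, which the decoding ignores, so equality is immediate.

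Next the garbage-collection case $\tomachasubfour$: backtracking past an abstraction turns the frame $\var\cons\skframe$ into $\skframe$ and the environment $\genv$ into $\closescopem\var\cons\genvweak\cons\openscopem\var\cons\genv$ — precisely the closed-scope shape of \reflem{closed-scopes-disap}/\reflem{closed-scopes-disap}, except that the $\la\var.\code$ reconstructed in the code now carries, via the compatible pair, the extra $\esub{}{}$'s that were collected into $\genvweak$. Here one uses the backtracking free-variable invariant (\refpointmute{invariants-backtracking-code}) to see $\var\notin\fv{}$ of the relevant subterm and \reflem{eqstruct_structural_frames}.\refpointmute{eqstruct_structural_frames-frame} to absorb those substitutions, landing a $\tostructgc$-step as claimed. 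Finally the principal cases: for $\tomachm$, $\decode\state = \stctxp\state{(\sctxp{\la\var.\code})\codetwo}$ for an appropriate substitution context $\sctx$ coming from the weak environment, and firing $\rtom$ gives $\sctxp{\code\esub\var\codetwo}$, which by \reflem{eqstruct_structural_frames} equals $\decode\statetwo$ up to $\eqstruct$; \reflem{lo-decoding-invariant} certifies the position is the \lo\ redex. For $\tomache$, $\genv(\var) = \code$ means $\genv$ threads through a $\esub\var\code$, so $\stctx\state$ has compact-form shape $\ctxthreep\var$ with $\ctxthree = \ctxtwop{\ctx\esub\var\code}$, and replacing $\var$ by $\rename\code$ is exactly a $\toe$ step duplicating $\code$; the on-the-fly $\alpha$-renaming $\rename{}$ matches the $\alpha$-equivalence built into $\toe$, so equality holds on the nose (no $\eqstruct$ needed).

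The main obstacle is the $\tomachm$ case together with \reflem{lo-decoding-invariant}: one must check that the explicit substitution created by the machine lands inside the decoded term in a position that $\eqstruct$ can reconcile with the LSC's $\rtom$ rule (which puts the ES immediately around the abstraction body, whereas the machine's $\genv$ may already carry a substitution context around the abstraction), and simultaneously that the surrounding context is \lo\ so that the step is a $\tolo$-step rather than an arbitrary $\to$-step — this is where the \quiet/left-free-variable conditions of \refdef{LOCtx}, propagated by the normal-form and backtracking invariants, are genuinely used. The reflectivity claims (termination of $\tomachasub{}$ and progress) are comparatively routine: termination follows from a size/phase measure on states, and progress from the closure invariant (\refpointmute{invariants-closure-state}), which guarantees that a final state has no redex left to decode to, hence $\decode\state$ is $\tolo$-normal.
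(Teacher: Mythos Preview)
Your overall architecture matches the paper's: dispatch each transition, unfold the decoding, and invoke the invariants and the decoding lemmas. The equality cases and the exponential case are handled just as in the paper. Three points deserve correction, though.

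\textbf{Multiplicative case.} Your description ``$\decode\state = \stctxp\state{(\sctxp{\la\var.\code})\codetwo}$ for an appropriate substitution context $\sctx$ coming from the weak environment'' is off: in the decoding the weak environment sits \emph{outside} the stack, so the redex is simply $(\la\var.\code)\codetwo$ with $\sctx = \ctxhole$. After firing $\rtom$ one gets $\decodecompat\skframe\genv\ctxholep{\decodep\stack{\code\esub\var\codetwo}}$, and the real work is to push the newly created $\esub\var\codetwo$ outward through $\stack$ (\reflemmap{eqstruct_structural_frames}{stack}) and then absorb it into the environment (\reflemmap{eqstruct_structural_frames}{frame}), using the name/closure invariants to ensure $\var$ is fresh for $\stack$ and $\skframe$.

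\textbf{The $\tomachasubfour$ case.} Citing \reflemma{eqstruct_structural_frames} here is the wrong tool: that lemma moves an ES \emph{into} the environment, whereas what you need is to \emph{discard} the weak prefix $\genvweak$ of $\genv = \genvweak\cons\openscopem\var\cons\genvtwo$. The paper argues directly that $\decodep{\genvweak}{\code} \tostructgc^\ast \code$: by the backtracking free-variable invariant $\fv\code \subseteq \sklp{\var\cons\skframe}$, by open-scopes-matching this equals $\sklp\genv$, and since weak environments have no open scopes (\reflemma{weak-envs-closed-scopes}) none of the binders of $\genvweak$ occur in $\code$. Then \reflemma{closed-scopes-disap} identifies the target decoding.

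\textbf{Progress.} The closure invariant alone does not show that a final state decodes to a normal form; it only guarantees the machine never gets stuck in the $\skeval$ phase. To conclude that the decoded term is $\tolo$-normal you also need the \emph{normal form} invariant (\reflemmapp{invariants}{nf}{code}), which says the backtracking code is normal, together with the backtracking free-variable invariant to ensure no exponential redex is created by the surrounding environment. Termination of commutative transitions is, as you suggest, by a size measure; the paper defers this to the bilinearity theorem.
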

\begin{proof}
Recall, the decoding is defined as $\decode{\mac{\skphase}{\skframe}{\code}{\stack}{\genv}} \defeq \decodecompat{\skframe}{\genv}\ctxholep{\dstackp{\code}}$. Determinism of the machine follows by the name invariant (\reflemmap{invariants}{name}), and that of the strategy follows from the totality of the \lo order (\reflemma{lefttor-basic}). We list all cases but the simple equality ones, which may be found in \refapp{comm-distil}.
\begin{itemize}
\item \case{$\state = \mac{ \skeval }{ \skframe }{ \l\var.\code }{ \codetwo\cons\stack }{ \genv }
             \tomachm
             \mac{ \skeval }{ \skframe }{ \code }{ \stack }{ \esub{\var}{\codetwo}\cons\genv } = \statetwo$} 
             Note that $\stctx\statetwo = \decodecompat\skframe\genv \ctxholep{\decode\stack}$ is \lo by the \lo decoding invariant (\reflemma{lo-decoding-invariant}). Moreover by the closure invariant (\reflemmap{invariants}{closure}) $\var$ does not occur in $\skframe$ nor $\stack$, justifying the use of \reflemma{eqstruct_structural_frames} in:
     \[\begin{array}{lllllllll}    
    \decode{\mac{ \skeval }{ \skframe }{ \l\var.\code }{ \codetwo\cons\stack }{ \genv }}
    &=&
    \decodecompat\skframe{\genv}\ctxholep{\decode{\codetwo\cons\stack}\ctxholep{\l\var.\code}}\\
    &=&
    \decodecompat\skframe\genv\ctxholep{\dstackp{(\l\var.\code)\codetwo}}
    \\
    &\tom&
     \decodecompat\skframe{\genv}\ctxholep{\dstackp{\code\esub\var\codetwo}}\\

        & \eqstruct_{\reflemmaeqp{eqstruct_structural_frames}{stack}} &    
     \decodecompat\skframe{\genv}\ctxholep{\dstackp\code \esub\var\codetwo} \\
        & \eqstruct_{\reflemmaeqp{eqstruct_structural_frames}{frame}} &    
    \decodecompat\skframe{(\esub\var\codetwo\cons\genv)}\ctxholep{\dstackp\code } 
        & = &
    \decode{\mac{ \skeval }{ \skframe }{ \code }{ \stack }{ \esub{\var}{\codetwo}\cons\genv }}
    \end{array}\]

\item \case{$\state = \mac{ \skeval }{ \skframe }{ \var }{ \stack }{ \genv }
             \tomache
             \mac{ \skeval }{ \skframe }{ \rename{\code} }{ \stack }{ \genv } = \statetwo$ with $\genv(\var) = \code$} As before, $\stctx\state$ is \lo by \reflemma{lo-decoding-invariant}. Moreover,  $\genv(\var) = \code$ guarantees that $\genv$, and thus 
    $\stctx\state$, have a substitution
    binding $\var$ to $\code$. Finally, $\stctx\state = \stctx\statetwo$. Then
    \[\begin{array}{lllllllll}
    &
    \decode{\state}
    &=&
    \stctx\state\ctxholep{\var}
    &\toe&
    \stctx\state\ctxholep{\rename\code}
    &=&
    \decode{\statetwo}
    \end{array}\]

\item \case{$\state = \mac{ \skback }{ \var\cons\skframe }{ \code }{ \stempty }{ \genv }
             \tomachasubfour
             \mac{ \skback }{ \skframe }{ \l\var.\code }{ \stempty }{ \closescopem\var\cons\genv } = \statetwo$}
	By \reflemmap{invariants}{comp} $\var\cons\skframe \compatible \genv$, and by \reflemmap{comp-properties}{fact} $\genv = \genvweak\cons\openscopem\var\cons\genvtwo$. Then
	    \[\begin{array}{lllllllll}
   \decodecompat{(\var\cons\skframe)}{\genv}
     &=&
\decodecompat{(\var\cons\skframe)}{(\genvweak\cons\openscopem\var\cons\genvtwo)}
     &=&
\decodecompat{(\var\cons\skframe)}{(\openscopem\var\cons\genvtwo)}\ctxholep{\decode\genvweak}
     \end{array}\]

    Since we are in a backtracking phase ($\skback$), the backtracking free variables invariant (\reflemmapp{invariants}{backtracking}{code}) and the open scopes matching property (\reflemmap{comp-properties}{open-scopes}) give
    $\fv{\code} \subseteq_{\reflemmaeqp{invariants}{backtracking-code}} \sklp\skframe =_{\reflemmaeqp{comp-properties}{open-scopes}}  \sklp{\genvweak\cons\openscopem\var\cons\genvtwo} =_{\reflemmaeq{weak-envs-closed-scopes}} \sklp{\openscopem\var\cons\genvtwo}$, \ie $\decode{\genvweak}$ does not bind any variable in $\fv{\code}$.
Then $\decode{\genvweak}\ctxholep{\code} \tostructgc^* \code$, and
    \[\begin{array}{lllllllll}
     
     \decode{\mac{ \skback }{ \var\cons\skframe }{ \code }{ \stempty }{ \genv }}
     & = &
     \decodecompat{ (\var\cons\skframe) }{\genv}\ctxholep{\code}     
     \\
     & = &
     \decodecompat{ (\var\cons\skframe) }{(\genvweak\cons\openscopem\var\cons\genvtwo)}\ctxholep{\code}     
     \\
     & = &
     \decodecompat{ (\var\cons\skframe) }{ (\openscopem\var\cons\genvtwo) }\ctxholep{\decodep\genvweak\code}\\
     &\tostructgc^*&
     \decodecompat{ (\var\cons\skframe) }{ (\openscopem\var\cons\genvtwo)}\ctxholep\code\\

     & = &
     \decodecompat{ \skframe }{\genvtwo}\ctxholep{\la\var\code}\\
     & =_{\reflemmaeq{closed-scopes-disap}} &
     \decodecompat\skframe{(\closescopem\var\cons\genvweak\cons\openscopem\var\cons\genvtwo)}\ctxholep{\l\var.\code}\\
     & = &
     \decodecompat\skframe{ (\closescopem\var\cons\genv) }\ctxholep{\l\var.\code}
     
     &\!\!\!=&
     \!\decode{\mac{ \skback }{ \skframe }{ \l\var.\code }{ \stempty }{ \closescopem\var\cons\genv }}

     \end{array}\]
\end{itemize}

For what concerns reflectiveness, \emph{termination} of commutative transitions is subsumed by bilinearity (\reftm{bilinearity} below). For \emph{progress}, note that 
\begin{enumerate}
 \item \emph{the machine cannot get stuck during the evaluation phase}: for applications and abstractions it is evident and for variables one among $\tomache$ and $\tomachasubthree$ always applies, because of the closure invariant (\reflemmap{invariants}{closure}).
 \item \emph{final states have the form $\mac\skback\stempty\tm\stempty\genv$}, because 
 \begin{enumerate}
  \item by the previous consideration they are in a backtracking phase,
  \item if the stack is non-empty then $\tomachasubsix$ applies, 
  \item otherwise if the frame is not empty then either $\tomachasubfour$ or $\tomachasubfive$ applies.
 \end{enumerate}
 \item \emph{final states decode to normal terms}: a final state $\state = \mac\skback\stempty\tm\stempty\genv$ decodes to $\decode\state = \decodep\genv\tm$ which is normal and closed by the normal form (\reflemmapp{invariants}{nf}{code}) and backtracking free variables (\reflemmapp{invariants}{backtracking}{code}) invariants.\qed
\end{enumerate}

\end{proof}
\section{Complexity Analysis}
\label{sect:complexity}
The complexity analysis requires a further invariant, bounding the size of the duplicated subterms. For us, $\codetwo$ is a subterm of $\code$ if it does so up to variable names, both free and bound. More precisely: define $\tm^-$ as $\tm$ in which all variables (including those appearing in binders) are replaced by a fixed symbol $\ast$. Then, we will consider $\tmtwo$ to be a subterm of $\tm$ whenever $\tmtwo^-$ is a subterm of $\tm^-$ in the usual sense. The key property ensured by this definition is that the size $\size\codetwo$ of $\codetwo$ is bounded by $\size\code$.

\begin{lemma}[Subterm Invariant]
\label{l:subterm-invariant} 
Let $\exec$ be an execution from an initial code $\code$.
Every code duplicated along $\exec$ using $\tomache$ is a subterm of $\code$.
 \end{lemma}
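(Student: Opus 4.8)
The plan is to prove a strengthened statement by induction on the length of the execution $\exec$ from the initial code $\code$, since the bare statement is not obviously inductive: the code that gets duplicated by a $\tomache$ transition is some $\genv(\var)$ sitting in the environment, so we need to know that \emph{every} code appearing anywhere in a reachable state — in the current code position, in the stack, in the frame, and in the environment — is a subterm (in the $\tm^-$ sense) of the initial code $\code$. Concretely, I would prove: for every state $\state = \skamstate{\skphase}{\skframe}{\codetwo}{\stack}{\genv}$ reachable from $\code$, the code $\codetwo$, every code in $\stack$, every code $\codethree$ in a pair $\skap{\codethree}{\stacktwo}$ of $\skframe$, and every code $\codethree$ in an entry $\esub\vartwo\codethree$ of $\genv$, is a subterm of $\code$. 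The lemma then follows immediately, because a $\tomache$ step duplicates $\rename\code$ where $\genv(\var) = \code$, i.e.\ a renaming of an environment entry, and renaming does not change the $(-)^-$ image.

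Next I would check the base case and each transition. For the initial state $\skamstate{\skeval}{\stempty}{\code}{\stempty}{\stempty}$ the invariant holds trivially: the only code present is $\code$ itself, and the stack, frame, and environment are empty. For the inductive step, I go transition by transition using \reffig{SMAM}. The commutative transitions $\tomachasubone$, $\tomachasubtwo$, $\tomachasubthree$, $\tomachasubfour$, $\tomachasubfive$, $\tomachasubsix$ only \emph{move} codes between components, decompose a code $\code\codetwo$ into its immediate subterms $\code$ and $\codetwo$ (transitions $1$ and $5$), or pass under an abstraction $\l\var.\code \leadsto \code$ (transition $2$), or add a scope marker to the environment (no code added); in every case the codes in the new state are among the codes of the old state or immediate subterms thereof, so they remain subterms of $\code$ by transitivity of the subterm relation and the induction hypothesis. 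The multiplicative transition $\tomachm$ takes $\l\var.\code'$ off the code position, puts $\code'$ in the code position and adds the entry $\esub\var\codetwo$ to the environment where $\codetwo$ was the top of the stack; both $\code'$ (a subterm of $\l\var.\code'$) and $\codetwo$ (a stack code) are subterms of $\code$ by the induction hypothesis. Finally the exponential transition $\tomache$ replaces the code position $\var$ by $\rename{\code'}$ where $\genv(\var) = \code'$: by the induction hypothesis $\code'$ is a subterm of $\code$, and since $(\rename{\code'})^- = (\code')^-$, so is $\rename{\code'}$; no other component changes.

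I expect the only mildly delicate point — and hence the main obstacle, though a minor one — to be handling the definition of $\genv(\var)$ through the scope markers: $\genv(\var)$ reads through fragments $\closescopem\vartwo\cons\genvweak\cons\openscopem\vartwo$ and skips over unrelated entries, but in all cases it returns an entry $\esub\var{\code'}$ actually occurring in $\genv$, so the invariant on environment entries is exactly what is needed; one just has to observe that the auxiliary definition of $\genv(\cdot)$ never fabricates a new code. One should also be careful that the invariant must quantify over \emph{all} environment entries, not merely the ``accessible'' ones, since $\tomachasubtwo$ and $\tomachasubfour$ insert markers that can temporarily bury entries which later become accessible again — but since we never remove entries and never create codes out of thin air, the universally-quantified formulation above handles this uniformly. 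No appeal to the other invariants of \reflemma{invariants} is needed, so this proof is self-contained modulo the structure of the transition system.
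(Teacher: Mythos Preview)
Your strengthened invariant is too strong and is \emph{not} preserved by the machine; the inductive step fails for the backtracking transitions. In the $\skback$ phase, transitions $\tomachasubfour$ and $\tomachasubfive$ do not ``decompose'' or merely ``move'' codes as you claim: $\tomachasubfour$ rebuilds $\l\var.\code$ from $\code$, and $\tomachasubfive$ rebuilds $\code\codetwo$ from the frame entry $\skap{\code}{\stack}$ and the current backtracking code $\codetwo$. These reconstructed codes are (pieces of) the \emph{normal form} being assembled and are in general \emph{not} subterms of the initial code. For a concrete failure, run the machine on $(\l x.\l y.\,y\,x)(\l z.z)$: after the $\tomachasubfive$ step you reach a $\skback$ state whose current code is $y\,(\l z'.z')$, whose $(-)^-$ image is $\ast\,(\l\ast.\ast)$, which is not a subterm of the initial code in your sense. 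The same phenomenon contaminates the first component $\codethree$ of frame pairs $\skap{\codethree}{\stacktwo}$ once $\tomachasubsix$ pushes such a reconstructed code onto the frame.

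The fix, which is exactly what the paper does, is to weaken two clauses of your invariant: require the current code to be a subterm only when $\skphase = \skeval$, and for frame pairs $\skap{\codethree}{\stacktwo}$ track only the codes occurring in the \emph{stack} component $\stacktwo$, not the head $\codethree$. With these two adjustments the induction goes through transition by transition precisely as you outline (and your observations about $\genv(\cdot)$ never fabricating codes, and about renaming preserving $(-)^-$, are correct and needed). The conclusion of the lemma then follows from the environment clause alone, since $\tomache$ duplicates an environment entry.
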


Via the distillation theorem (\refthm{strong-mam-distillation}), the invariant provides a new proof of the subterm property of linear \lo reduction (first proved in \cite{DBLP:conf/csl/AccattoliL14}).

\begin{lemma}[Subterm Property for $\tolo$]
\label{l:subterm-property-for-tolo}
 Let $\deriv$ be a $\tolo$-derivation from an initial term $\tm$. Every term duplicated along $\deriv$ using $\toe$ is a subterm of $\tm$.
\end{lemma}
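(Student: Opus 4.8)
The plan is to derive Lemma~\ref{l:subterm-property-for-tolo} as a direct corollary of the Subterm Invariant (Lemma~\ref{l:subterm-invariant}) for the \skammachine, using the Distillation Theorem (Theorem~\ref{thm:strong-mam-distillation}) together with the reverse-simulation result (Theorem~\ref{tm:GenSim}.\ref{p:GenSim-two}) to transport the statement from the machine to the calculus. The key observation is that the Distillation Theorem already records that each exponential transition $\tomache$ projects onto an exponential step $\decode\state\toe\decode\statetwo$ \emph{duplicating the same subterm}; this is exactly the bridge we need between ``code duplicated by $\tomache$'' and ``term duplicated by $\toe$''.

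First I would take a $\tolo$-derivation $\deriv:\tm\tolo^*\tmtwo$ from an initial term $\tm$ and fix a single exponential step $e$ in $\deriv$, say $\tmthree\toe\tmfour$, occurring after a prefix $\deriv':\tm\tolo^*\tmthree$; I want to show the subterm duplicated by $e$ is a subterm of $\tm$ in the sense defined just before Lemma~\ref{l:subterm-invariant} (i.e.\ up to variable names, comparing $(\cdot)^-$ erasures). Let $\state_0$ be the initial state associated to $\tm$. Applying the reverse simulation (Theorem~\ref{tm:GenSim}.\ref{p:GenSim-two}) to the prefix $\deriv'$ extended by the step $e$ yields an execution $\exec:\state_0\tomach^*\statetwo$ with $\sizem\exec=\sizem{(\deriv' e)}$ and $\sizee\exec=\sizee{(\deriv' e)}$; in particular $\exec$ performs exactly as many $\tomache$ transitions as $\deriv'e$ performs $\toe$ steps, and by the simulation's step-by-step nature (as spelled out in the proof of Theorem~\ref{tm:GenSim}, which builds $\exec$ by iterating the reverse of the projection clauses) the last $\toe$ step $e$ corresponds to the last $\tomache$ transition of $\exec$. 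By the ``duplicating the same subterm'' clause of the Distillation Theorem, the term duplicated by $e$ is (up to the $\alpha$-renaming $\rename{(\cdot)}$ in $\tomache$, hence up to variable names, hence equal after $(\cdot)^-$) the code duplicated by that $\tomache$ transition. By the Subterm Invariant, that code is a subterm of the initial code $\code$ of $\state_0$, and since $\code$ is just $\tm$ viewed as a well-named code, $\code^-=\tm^-$, so the duplicated term is a subterm of $\tm$. Ranging over all exponential steps of $\deriv$ gives the statement.

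The main obstacle — really the only non-routine point — is making precise the claim that the reverse simulation preserves the \emph{identity} of the duplicated subterm and not merely the \emph{number} of exponential steps. Theorem~\ref{tm:GenSim} as stated only promises the counts $\sizem,\sizee$ match; to conclude I need that the $i$-th exponential step on the calculus side corresponds to the $i$-th $\tomache$ transition on the machine side and that the Distillation Theorem's ``same subterm'' annotation applies to it. This is true because the execution $\exec$ is constructed by following the unique machine run and the projection is functional (the machine is deterministic, and the strategy $\tolo$ is deterministic by totality of $\leftout$, Lemma~\ref{l:lefttor-basic}), so the correspondence between steps is forced; I would state this as a short remark, citing the determinism already established in the proof of Theorem~\ref{thm:strong-mam-distillation}. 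Everything else — the erasure-based notion of subterm, the fact that $\alpha$-renaming commutes with $(\cdot)^-$, the fact that $\code^-=\tm^-$ for the initial code — is immediate from the definitions. I would therefore keep the proof to a few lines:

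\begin{proof}
Let $\deriv:\tm\tolo^*\tmtwo$ and consider any exponential step $\tmthree\toe\tmfour$ in $\deriv$, with prefix $\deriv':\tm\tolo^*\tmthree$. Let $\state_0$ be the initial state of $\tm$, with initial code $\code$ (so $\code^- = \tm^-$). By the reverse simulation (\reftm{GenSim}.\refpointmute{GenSim-two}) applied to $\deriv'$ followed by this step, there is an execution $\exec:\state_0\tomach^*\statetwo$ with $\sizee\exec = \sizee{\deriv'}+1$; since both the machine and $\tolo$ are deterministic (determinism of the machine is by \reflemmap{invariants}{name}, that of $\tolo$ by \reflemma{lefttor-basic}), $\exec$ is the unique machine run over $\state_0$ and its last $\tomache$ transition projects, by the exponential clause of \refthm{strong-mam-distillation}, onto the step $\tmthree\toe\tmfour$, duplicating the same subterm (up to the on-the-fly $\alpha$-renaming $\rename{(\cdot)}$, hence up to variable names). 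By the Subterm Invariant (\reflemma{subterm-invariant}) that code is a subterm of $\code$, hence of $\tm$ since $\code^- = \tm^-$. As the exponential step was arbitrary, the claim follows.\qed
\end{proof}
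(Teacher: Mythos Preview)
Your proposal is correct and follows essentially the same approach as the paper, which simply states that the lemma is an immediate consequence of the Subterm Invariant (\reflemma{subterm-invariant}) together with the exponential case of the Distillation Theorem (\refthm{strong-mam-distillation}). Your write-up spells out in detail what the paper leaves implicit---in particular the use of reverse simulation and determinism to align the $i$-th $\toe$ step of $\deriv$ with the $i$-th $\tomache$ transition of the unique execution from $\state_0$---but the underlying argument is the same.
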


The next theorem is our second main result, from which the low-level implementation theorem (\reftm{low-level}) follows. Let us stress that, despite the simplicity of the reasoning, the analysis is subtle as the length of backtracking phases (Point 2) can be bound only \emph{globally}, by the whole previous evaluation work.

\begin{theorem}[Bilinearity]
\label{tm:bilinearity}
The \skammachine is bilinear, \ie\ given an execution $\exec:\state\tomach^*\state'$ from an initial state of code 
$\tm$ then:
\begin{enumerate}
 \item \emph{Commutative Evaluation Steps are Bilinear}: $\sizecomev\exec \leq (1+\sizee\exec)\cdot\size\tm$.
 \item \emph{Commutative Evaluation Bounds Backtracking}: $\sizecombt\exec \leq 2\cdot\sizecomev\exec$.
 \item \emph{Commutative Steps are Bilinear}: $\sizecom\exec \leq 3\cdot(1+\sizee\exec)\cdot\size\tm$.
\end{enumerate}
\end{theorem}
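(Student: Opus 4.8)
The proof is by accounting, and I would organize it around the three points of the statement, proving Point 1, then Point 2, then deriving Point 3 as an immediate arithmetic consequence of the first two. The key tools are the $\skammachine$ invariants (\reflemma{invariants}), the subterm invariant (\reflemma{subterm-invariant}), and a careful separation of commutative transitions into those occurring in the evaluation phase ($\tomachasubone$, $\tomachasubtwo$, $\tomachasubthree$, $\tomachasubsix$) and those in the backtracking phase ($\tomachasubfour$, $\tomachasubfive$). I write $\sizecomev\exec$ for the number of commutative steps performed while in phase $\skeval$, and $\sizecombt\exec$ for those in phase $\skback$, so that $\sizecom\exec = \sizecomev\exec + \sizecombt\exec$.

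**Point 1: commutative evaluation steps are bilinear.** The idea is a size/potential argument. Each commutative $\skeval$-transition consumes one constructor of the current code: $\tomachasubone$ peels an application node, $\tomachasubtwo$ peels an abstraction, and $\tomachasubthree$/$\tomachasubsix$ move to an immediate subterm or to a stacked argument. By the subterm invariant (\reflemma{subterm-invariant}), every code ever loaded into the ``Code'' position is a subterm of something put there by a previous $\tomache$, hence of size at most $\size\tm$; moreover a fresh maximal-size code can only enter the Code slot via the initial state or via an exponential transition $\tomache$. So I would define a measure on states — essentially $\size{\code}$ plus a contribution for the codes waiting on the stack and in the weak part of the frame — show that each commutative $\skeval$-step strictly decreases it by at least one, that $\tomachm$, $\tomachasubfour$, $\tomachasubfive$ do not increase it, and that $\tomache$ increases it by at most $\size\tm$ (it replaces a variable by a subterm of the initial term). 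Since the measure starts at $\size\tm$ and is nonnegative, the total number of commutative $\skeval$-steps is bounded by $\size\tm + \sizee\exec\cdot\size\tm = (1+\sizee\exec)\cdot\size\tm$.

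**Point 2: commutative evaluation bounds backtracking.** Here the intuition is that backtracking merely retraces, in reverse, structure that was built during evaluation: each $\tomachasubfour$ undoes an earlier $\tomachasubtwo$ (it pops a variable $\var$ from the frame that was pushed when the machine went under $\l\var$), and each $\tomachasubfive$ undoes an earlier $\tomachasubone$ followed by the $\tomachasubsix$ that stacked the argument frame — i.e.\ it reconstructs an application whose two halves were separated by commutative $\skeval$-work. To make this precise I would again argue by a potential/counting argument, this time charging each backtracking step to a distinct earlier evaluation step: introduce a measure counting the combined size of the frame and the pending stacks, observe that every commutative $\skback$-step strictly decreases it, that it is increased only by commutative $\skeval$-steps ($\tomachasubone$ grows the stack, $\tomachasubtwo$ grows the frame, $\tomachasubsix$ pushes an $\skap\code\stack$ entry), each by a bounded amount (at most $1$, or at most $2$ in the $\tomachasubsix$ case, which both pushes a frame entry and is itself an $\skeval$-commutative step), while $\tomachm$, $\tomache$, $\tomachasubthree$ do not increase it. Summing, $\sizecombt\exec \leq 2\cdot\sizecomev\exec$. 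The factor $2$ is exactly the worst-case charge of a $\tomachasubsix$ step.

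**Point 3 and the main obstacle.** Point 3 is then immediate: $\sizecom\exec = \sizecomev\exec + \sizecombt\exec \leq \sizecomev\exec + 2\sizecomev\exec = 3\sizecomev\exec \leq 3(1+\sizee\exec)\cdot\size\tm$ by Points 2 and 1. The main difficulty I anticipate is getting the measure in Point 1 exactly right so that $\tomache$ is the \emph{only} transition able to increase it and its increase is genuinely bounded by $\size\tm$ rather than by the (a priori unbounded) current code size — this is precisely where the subterm invariant is indispensable, and where one must be careful that codes migrating from the stack or frame back into the Code slot (via $\tomachasubfive$, $\tomachasubsix$) are already accounted for in the measure and so cause no jump. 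The bookkeeping in Point 2 — matching each backtracking step to the evaluation step it reverses while respecting the phase discipline enforced by $\tomachasubthree$ and $\tomachasubsix$ — is the other delicate spot, but it is a routine (if fiddly) amortized-analysis argument once the right measure is fixed.
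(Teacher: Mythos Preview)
Your overall plan matches the paper's: a size measure for Point~1, a counting argument for Point~2, arithmetic for Point~3. But there is one concrete error and one important omission.

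\textbf{The error.} You misclassify $\tomachasubsix$: its source state is in phase $\skback$, not $\skeval$ (it is the transition that \emph{ends} a backtracking phase). So the commutative evaluation transitions are $\tomachasubone,\tomachasubtwo,\tomachasubthree$ and the commutative backtracking ones are $\tomachasubfour,\tomachasubfive,\tomachasubsix$. This matters for Point~1 because under any reasonable size measure $\tomachasubsix$ does \emph{not} strictly decrease: it merely shifts the top of the stack into the frame, preserving total size. Your Point~1 argument as written therefore fails on $\tomachasubsix$.

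\textbf{The omission.} For Point~1 the measure must be phase-dependent, which you do not say. With a phase-independent measure, $\tomachasubthree$ (which only flips the phase) causes no decrease, and worse, $\tomachasubfour$ (which rebuilds $\l\var.\code$ from $\code$) would \emph{increase} it. The paper counts $\size\code$ only when the state is in phase $\skeval$; then $\tomachasubthree$ decreases the measure by~$1$, and all three backtracking transitions leave it unchanged.

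For Point~2 the paper's argument is also simpler than your potential approach: it is pure push/pop matching. One has $\polsize\exec\commfour\leq\polsize\exec\commtwo$ (variables pushed on the frame only by $\tomachasubtwo$), $\polsize\exec\commfive\leq\polsize\exec\commsix$ (pairs pushed only by $\tomachasubsix$), and crucially $\polsize\exec\commsix\leq\polsize\exec\commthree$ (since $\tomachasubsix$ is the unique exit from a backtracking phase and $\tomachasubthree$ the unique entry). Summing gives $\sizecombt\exec\leq\polsize\exec\commtwo+2\polsize\exec\commthree\leq 2\sizecomev\exec$. This avoids designing a second measure and immediately handles $\tomachasubsix$, which your version could not because your Point~2 measure is preserved, not decreased, by that transition.
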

\begin{proof}
\begin{enumerate}
 \item We prove a slightly stronger statement, namely $\sizecomev\exec+\sizem\exec \leq (1+\sizee\exec)\cdot\size\tm$, by means of the following notion of size for stacks/frames/states:
\[\begin{array}{rcl@{\hspace{2em}}rcl}
\rmeasure{\stempty} & \defeq & 0 
&
\rmeasure{\var\cons\skframe} & \defeq & \rmeasure{\skframe}
\\

\rmeasure{\code\cons\stack} & \defeq & \size{\code} + \rmeasure{\stack} 
&
\rmeasure{\skap{\code}{\stack}\cons\skframe} & \defeq & \rmeasure{\stack} + \rmeasure{\skframe}\\

 \rmeasure{\mac{\skeval}{\skframe}{\code}{\stack}{\genv}} & \defeq & \rmeasure{\skframe} + \rmeasure{\stack} + \size{\code} 
 &
\rmeasure{\mac{\skback}{\skframe}{\code}{\stack}{\genv}} & \defeq & \rmeasure{\skframe} + \rmeasure{\stack}
   \end{array}\]
By direct inspection of the rules of the machine it can be checked that:
\begin{itemize}
\item \emph{Exponentials Increase the Size}: if $\state \tomache \statetwo$ is an exponential transition,
      then $\rmeasure{\statetwo} \leq \rmeasure{\state} + \size\tm$
      where $\size\tm$ is the size of the initial term;
      this is a consequence of the fact that exponential steps
      retrieve a piece of code from the environment, which is a subterm of the initial term by 
      \reflemma{subterm-invariant};
\item \emph{Non-Exponential Evaluation Transitions Decrease the Size}: if $\state \tomachhole{a} \statetwo$ with $a\in \set{\msym,\skeval\admsym_1,\skeval\admsym_2,\skeval\admsym_3}$ then
      $\rmeasure{\statetwo} < \rmeasure{\state}$;
\item \emph{Backtracking Transitions do not Change the Size}: if $\state \tomachhole{a} \statetwo$ with
      $a\in \set{\commfour,\commfive,\commsix}$ then $\rmeasure{\statetwo} = \rmeasure{\state}$.
\end{itemize}
Then a straightforward induction on $\size\exec$ shows that
\[\rmeasure{\statetwo} \leq \rmeasure{\state} + \sizee\exec \cdot \size\tm - \sizecomev\exec-\sizem\exec\]
\ie\ that $ \sizecomev\exec + \sizem\exec \leq \rmeasure{\state}  + \sizee\exec \cdot \size\tm - \rmeasure{\statetwo}$.

Now note that $\rmeasure{\cdot}$ is always non-negative and that since $\state$ is initial we have $\rmeasure{\state} = \size\tm$. We can then conclude with
\[\begin{array}{llllllllll}
   \sizecomev\exec + \sizem\exec & \leq &  \rmeasure{\state}  + \sizee\exec \cdot \size\tm - \rmeasure{\statetwo}\\
   & \leq & \rmeasure{\state}  + \sizee\exec \cdot \size\tm 
   & = & \size\tm  + \sizee\exec \cdot \size\tm & = & (1+\sizee\exec)\cdot\size\tm
  \end{array}\]

\item We have to estimate $\sizecombt\exec = \polsize\exec\commfour + \polsize\exec\commfive + \polsize\exec\commsix$. Note that
\begin{enumerate}
\item $\polsize\exec\commfour \leq \polsize\exec\commtwo$, as $\tomachasubfour$ pops variables from $\skframe$, pushed only by $\tomachasubtwo$;

\item $\polsize\exec\commfive \leq \polsize\exec\commsix$, as $\tomachasubfive$ pops pairs $\skap{\code}{\stack}$ from $\skframe$, pushed only by $\tomachasubsix$;

\item $\polsize\exec\commsix \leq \polsize\exec\commthree$, as $\tomachasubsix$ ends backtracking phases, started only by $\tomachasubthree$.
\end{enumerate}

Then $\sizecombt\exec \leq \polsize\exec\commtwo + 2\polsize\exec\commthree \leq 2\sizecomev\exec
$.

\item We have $\sizecom\exec = \sizecomev\exec + \sizecombt\exec \leq_{P. 2} \sizecomev\exec + 2\sizecomev\exec =_{P.1} 3\cdot(1+\sizee\exec)\cdot\size\tm $.
\end{enumerate}
Last, every transition but $\tomache$ takes a constant time on a RAM.
 The renaming in a $\tomache$ step is instead linear in $\size\code$, by the subterm invariant (\reflemma{subterm-invariant}). \qed

\end{proof}

\subsubsection*{Acknowledgments.}
\footnotesize This work was partially supported by projects \textsc{Logoi} ANR-2010-BLAN-0213-02, \textsc{Coquas} ANR-12-JS02-006-01, \textsc{Elica} ANR-14-CE25-0005,
the Saint-Exup\'ery program funded by the French embassy and the Ministry of Education in Argentina, the French--Argentinian laboratory in Computer Science INFINIS, and the French Argentinian project ECOS-Sud A12E04.
\normalsize

\bibliographystyle{splncs03}
\bibliography{\macrospath/biblio}

\newpage
\appendix

\section{Proofs Omitted from \refsect{lo-reduction}\\ (Linear Leftmost-Outermost Reduction)}
The proofs omitted from \refsect{lo-reduction} are:
\begin{enumerate}
 \item \reflemma{lefttor-basic}, stating the totality of the $\leftout$ order. The proof is a trivial induction on $\tm$.
 \item \reflemma{LO-characts}, stating the equivalence of \lo contexts and \lo\ reduction. It is proved in the next subsection.
 \item \refprop{bisimulation}, stating that structural equivalence $\eqstruct$ is a strong bisimulation. The very long and tedious proof is postponed to the last section of the appendix, at page \pageref{app:bisimulation}.
\end{enumerate}

\subsection{Proof of the Equivalence of Definitions for \lo Contexts (\reflemma{LO-characts})}
\label{app:LO-characts}
\begin{proof}
\hfill
		\begin{enumerate}
		\item[] $\Rightarrow$) There are three cases:
		\begin{enumerate}
			\item \emph{Left application}: if $\ctx = \ctxtwop{\ctxthree\tm}$ then clearly $\ctxthree\neq\sctxp{\la\var\ctxfour}$, otherwise $\ctx$ is not the position of the \lo\ redex.
			\item \emph{Right Application}: let $\ctx = \ctxtwop{\tmthree\ctxthree}$, and note $\tmthree$ is \quiet otherwise $\ctx$ is not the position of the \lo\ redex.
			\item \emph{Substitution}: if $\ctx = \ctxtwop{\ctxthree\esub\var\tmtwo}$ then $\var\notin\lfv\ctxthree$ otherwise there is an exponential redex of position $\leftout\ctx$, which would be absurd.
		\end{enumerate}
		
		\item[] $\Leftarrow$) Let $\ctxtwo$ the position of the $\tolo$ step in $\tm$ and suppose, for the sake of absurdity, that $\ctxtwo \neq \ctx$. By definition $\ctxtwo\leftout\ctx$. We have two cases:
		\begin{enumerate}
			\item $\ctxtwo\outin\ctx$. Then necessarily $\ctxtwo$ identifies a $\tom$-redex and we have $\ctx = \ctxtwop{\sctxp{\la\var\ctxthree} \tmthree}$. It follows that  $\ctx$ is not a \lo\ context, because the left application clause is contradicted, absurd.
			\item $\ctxtwo\leftright\ctx$. Then there is a decomposition $\ctx = \ctxthreep{\tmthree \ctxfour}$ with the hole of $\ctxtwo$ falling in $\tmthree$. By hypothesis $\tmthree$ is \quiet. Then $\tmthree = \ctx_0\ctxholep\var$ and the $\tolo$ step is a $\toe$-step substituting on $\var$ from a substitution in $\ctxthree$, \ie\ $\ctxthree = \ctx^\bullet\ctxholep{\ctx^\circ\esub\var\tm}$ for some contexts $\ctx^\bullet$ and $\ctx^\circ$. Then $\ctx =  \ctx^\bullet\ctxholep{\ctx^\circ\ctxholep{\tmthree \ctxfour}\esub\var\tm}$ and $\var\in\lfv{\ctx^\circ\ctxholep{\tmthree \ctxfour}}$, which contradicts the substitution clause in the hypothesis that $\ctx$ is a \lo\ context.\qed
		\end{enumerate}
	\end{enumerate}
\end{proof}

\section{Proofs Omitted from \refsect{machines}\\ (Distilleries)}
\label{app:machines}

The proof of \reflemma{postponement}, stating that a strong bisimulation $\eqstruct$ can be postponed, is a straightforward induction on the number of rewriting steps in $\tm\mathrel{(\calculus\cup\eqstruct)^*}\tmtwo$.\ms

The proof of \reftm{GenSim}, stating the correctness and completeness of the implementation for a reflective distillery, follows. The simulation is a simple proof by induction using the postponement lemma, while the reverse simulation is a similar induction following from the properties of a reflective distillery and by determinism of $\togen$.

\begin{proof}[of \reftm{GenSim}]\hfill
\begin{enumerate}
 \item \emph{Strong Simulation}: by induction on the length of $\exec$. If $\exec$ is empty then the empty derivation satisfies the statement. If $\exec$ is given by $\exectwo:\state\tomach^{k-1}\statethree$ followed by $\statethree\tomach\statetwo$. By \ih there exists $\derivtwo:\decode\state\togen^*\tostruct\decode\statethree$ s.t. $\sizep\exectwo=\size\derivtwo$. Cases of $\statethree\tomach\statetwo$:
 \begin{enumerate}
 \item \emph{Principal}: 
 by definition of a distillery, $\decode\statethree \togen\tostruct \decode\statetwo $, and so $\decode\state\togen^*\tostruct\decode\statethree \togen\tostruct \decode\statetwo$. By the postponement lemma (\reflemma{postponement}) the use of $\tostruct$ between $\togen^*$ and $\togen$ can be postponed, obtaining a term $\tmtwo$ and a derivation $\deriv$ s. t. $\deriv:\decode\state\togen^*\tmtwo\togen\tostruct \decode\statetwo$ with $\size\deriv = \size\derivtwo + 1 =_{\ih} \sizep\exectwo +1 = \sizep\exec$.
 \item \emph{Commutative}:
 by definition of a distillery, $\decode\statethree \tostruct \decode\statetwo $ , and so $\deriv:\decode\state\togen^*\tostruct\decode\statethree \tostruct \decode\statetwo$  verifies $\size\deriv = \size\derivtwo =_{\ih} \sizep\exectwo = \sizep\exec$. 
 \end{enumerate}
 
 \item \emph{Reverse Strong Simulation}: we use $\admnf\state$ 
 to denote the commutative normal form of $\state$, that exists and is unique because by hypothesis $\tomachc$ terminates and the machine is deterministic. The proof is by induction on the length of $\deriv$. If $\deriv$ is empty then the empty execution satisfies the statement.
 
If $\deriv$ is given by $\derivtwo:\decode\state\togen^*\tmtwo$ followed by $\tmtwo \togen \tm$ then by \ih there is an execution $\exectwo:\state\tomach^*\statethree$ s.t. $\tmtwo\tostruct\decode\statethree$ and $\sizep\exectwo = \size\derivtwo$. Note that since commutative transitions are distilled away, $\exectwo$ can be extended as
 $\exectwo':\state\tomach^*\statethree\tomachc^*\admnf{\statethree}$ with 
  $\tmtwo\tostruct\decode{\admnf{\statethree}}$ and $\sizep{\exectwo'}=\size\derivtwo$. Now, if $\tmtwo\togen\tm$ then 
  $\admnf{\statethree}$ cannot be a final state, otherwise there would be a contradiction with the progress hypothesis for a reflective distillery. 
  Then $\admnf{\statethree} \tomachp \statetwo$ 
  (the transition cannot be commutative because $\admnf{\statethree}$ is a commutative normal form). Now, by definition of distillery there exists $\tmthree$ s.t.\ $\decode{\admnf{\statethree}} \togen \tmthree \eqstruct \decode\statetwo$. But $\tmtwo\tostruct\decode{\admnf{\statethree}}\togen \tmthree$, so by \reflemma{postponement} there exists $\tm'$ s.t.\ $\tmtwo\togen\tm'\tostruct \tmthree\tostruct\decode\statetwo$. Now the determinism of $\togen$ implies $\tm'=\tm$, allowing us to conclude.
 \qed
 \end{enumerate}
\end{proof}

\section{Proofs Omitted from \refsect{SMAM}\\ (The Strong Milner Abstract Machine)}
\label{app:SMAM}

First of all, \reflemma{weak-envs-closed-scopes} (namely: \emph{If $\genvweak$ is a weak environment then $\sklp\genvweak = \emptyset$}) is proved by a straightforward induction on the definition of weak environment $\genvweak$.\ms

Then we prove the properties of compatibility (next subsection), and the invariants (\reflemma{invariants}). The proof of every invariant is studied separately, to stress the dependencies wrt to other invariants.

\subsection{Proof of the Properties of Compatibility (\reflemma{comp-properties})}
\label{app:props-of-comp}

\begin{proof}
 The first three points (well-formed environments, factorization, open scopes) are by induction on the definition of compatible pair, and well-formed environments is omitted because it is evident. The fourth case is rather a corollary of factorization, and will be treated after the induction. The base case of the inductive reasoning is immediate for both factorization and open scopes. Two inductive cases:
 \begin{enumerate}
  \item \emph{Weak Extension}:
  \begin{enumerate}   
   \item \emph{Factorization}: the decomposition is immediate, and the correspondence about the first variable name follows from the \ih.
   \item \emph{Open Scopes}: by \ih, $\sklp\skframetrunk = \sklp\genvtrunk$. By \reflemma{weak-envs-closed-scopes}, $\sklp\genvweak = \emptyset$, and by definition $\sklp\skframeweak = \emptyset$. Then $\sklp\skframe = \sklp\skframeweak \cup \sklp\skframetrunk = \sklp\skframetrunk = \sklp\genvtrunk = \sklp\genvweak \cup \sklp\genvtrunk = \sklp\genv$.   
  \end{enumerate}

  \item \emph{Abstraction}
  \begin{enumerate}
   \item \emph{Factorization}: by definition $\var\cons\skframe$ and $\openscopem\var\cons\genv$ are a trunk frame $\skframetrunk$ and a trunk environment $\genvtrunk$, respectively.  given that $\cons$ is overloaded with composition, and weak trunk and environments can be empty we have $\skframetrunk = \empty\cons\skframetrunk$, and similarly for $\genvtrunk$, proving the decomposition property. The correspondence about the first variable name is evident.
   \item \emph{Open Scopes}: $\sklp{\var\cons\skframe} = \set\var \cup \sklp\skframe =_{\ih} \set\var \cup  \sklp\genv =  \sklp{\var\cons\genv}$.   
  \end{enumerate}
 \end{enumerate}
 
 \emph{Compatibility and Weak Structures Commute}: 
 \begin{enumerate}
  \item $\Rightarrow$) By factorization (\refpoint{comp-properties-fact}), $\skframe = \skframeweaktwo\cons\skframetrunk$ and $\genv = \genvweaktwo\cons\genvtrunk$. By definition of compatibility, if $\skframe \compatible \genv$ is derivable then $\skframetrunk \compatible \genvtrunk$ is also derivable. Now $\skframeweak\cons\skframeweaktwo$ and $\genvweak \cons\genvweaktwo$ are weak structures and so by the weak extension rule $\skframeweak\cons\skframe = \skframeweak\cons\skframeweaktwo\cons\skframetrunk \compatible \genvweak \cons \genvweaktwo\cons\genvtrunk = \genvweak \cons\genv$.
  \item $\Leftarrow$) By definition of compatibility, if $\skframeweak\cons\skframe = \skframeweak\cons\skframeweaktwo\cons\skframetrunk \compatible \genvweak \cons \genvweaktwo\cons\genvtrunk = \genvweak \cons\genv$ is derivable then $\skframetrunk \compatible \genvtrunk$ is also derivable, and $\skframe = \skframeweaktwo\cons\skframetrunk \compatible = \genvweaktwo\cons\genvtrunk = \genv$ by applying the weak extension rule.\qed
  \end{enumerate}
\end{proof}

\subsection{Proof of the Compatibility Invariant (\reflemmap{invariants}{comp})}
\label{app:comp-invariant}
\begin{proof}
By induction on the length of the number of transitions to reach $\state$. The invariant trivially holds for an initial state. For a non-empty evaluation sequence we list the cases for the last transitions. We only deal with those that act on the frame or on the environment, as the others immediately follows from the \ih.

  \begin{itemize}

  \item \case{$
          \mac{\skeval}{\skframe}{\l\var.\code}{\codetwo\cons\stack}{\genv}
          \tomachm
          \mac{\skeval}{\skframe}{\code}{\stack}{\esub\var\codetwo\cons\genv}
        $} 
        By \ih $\skframe$ and $\genv$ are compatible, \ie $\skframe = (\skframeweak\cons\skframetrunk) \compatible (\genvweak\cons\genvtrunk) = \genv$ with $\skframetrunk\compatible\genvtrunk$. Since $\esub\var\codetwo\cons\genvweak$ is still a weak environment, we have $(\skframeweak\cons\skframetrunk) \compatible (\esub\var\codetwo\cons\genvweak\cons\genvtrunk)$, \ie $\skframe \compatible (\esub\var\codetwo\cons\genv)$. 

  \item \case{$
          \mac{\skeval}{\skframe}{\l\var.\code}{\stempty}{\genv}
          \tomachasubtwo
          \mac{\skeval}{\var\cons\skframe}{\code}{\stempty}{\openscopem\var\cons\genv}
        $}
        By \ih $\skframe \compatible \genv$. By definition of compatibility we obtain 
      $(\var\cons\skframe) \compatible (\openscopem\var\cons\genv)$. 


  \item \case{$
          \mac{\skback}{\var\cons\skframe}{\code}{\stempty}{\genv}
          \tomachasubfour
          \mac{\skback}{\skframe}{\l\var.\code}{\stempty}{\closescopem\var\cons\genv}
        $}
        By \ih, $(\var\cons\skframe) \compatible \genv$. By the factorization property of compatible pairs (\reflemmap{comp-properties}{fact}) $\genv = \genvweak\cons\openscopem\var\cons\genvtwo$ with $\skframe \compatible \genvtwo$. Now $\closescopem\var\cons\genv = \closescopem\var\cons\genvweak\cons\openscopem\var\cons\genvtwo = \genvweaktwo\cons\genvtwo$. Then, from $\skframe \compatible \genvtwo$ by definition $\skframe \compatible (\genvweaktwo\cons\genvtwo)$, \ie  $\skframe \compatible (\closescopem\var\cons\genv)$. 

%

  \item \case{$
          \mac{\skback}{\skap{\code}{\stack}\cons\skframe}{\codetwo}{\stempty}{\genv}
          \tomachasubfive
          \mac{\skback}{\skframe}{\code\codetwo}{\stack}{\genv}
        $}
        By \ih, $(\skap{\code}{\stack}\cons\skframe) \compatible \genv$, so $\skframe \compatible \genv$ by \reflemmap{comp-properties}{weak-comm}. 

  \item \case{$
          \mac{\skback}{\skframe}{\code}{\codetwo\cons\stack}{\genv}
          \tomachasubsix
          \mac{\skeval}{\skap{\code}{\stack}\cons\skframe}{\codetwo}{\stempty}{\genv}
        $}
        By \ih, we have that $\skframe \compatible \genv$ which implies
      $(\skap{\code}{\stack}\cons\skframe) \compatible \genv$ by \reflemmap{comp-properties}{weak-comm}.
\qed
  \end{itemize}
 \end{proof}

\subsection{Proof of the Normal Form Invariant
(\reflemmap{invariants}{nf})}
\begin{proof}
  The invariant trivially holds for an initial state
  $\skamstate{\skeval}{\stempty}{\code}{\stempty}{\stempty}$.
For a non-empty evaluation sequence we list the cases for the last transitions. We only consider the cases for backtracking phases ($\skback$) or when the frame changes, the others ($\tomachasubone,\tomachm,\tomache$) are omitted because they follow immediately from the \ih.
  \begin{itemize}


  \item \case{$
          \mac{\skeval}{\skframe}{\l\var.\code}{\stempty}{\genv}
          \tomachasubtwo
          \mac{\skeval}{\var\cons\skframe}{\code}{\stempty}{\openscopem\var\cons\genv}
        $}
        \begin{enumerate}
      \item Trivial since $\skphase \neq \skback$.      
      \item Suppose $\var\cons\skframe$ can be written as $\var\cons\skframetwo\cons\skap{\codetwo}{\stacktwo}\cons\skframethree$.
            Then by \ih\ $\codetwo$ is a \quiet term.
      \end{enumerate}

  \item \case{$
          \mac{\skeval}{\skframe}{\var}{\stack}{\genv}
          \tomachasubthree
          \mac{\skback}{\skframe}{\var}{\stack}{\genv}
        $ with $\genv(\var) = \skboundvar$}
    Note that $\var \in \skl(\genv)$, because $\genv(\var) = \skboundvar$. 
    \begin{enumerate}
      \item $\var$ is a normal and \quiet term.      
      \item It follows from the \ih, as $\skframe$ is unchanged.
      \end{enumerate}

  \item \case{$
          \mac{\skback}{\var\cons\skframe}{\code}{\stempty}{\genv}
          \tomachasubfour
          \mac{\skback}{\skframe}{\l\var.\code}{\stempty}{\closescopem\var\cons\genv}
        $}
        	
      \begin{enumerate}
      \item By \ih\ we know that $\code$ is a normal form. Then $\l\var.\code$ is a normal form. the stack is empty, so we conclude.
      \item It follows from the \ih.
      \end{enumerate}

  \item \case{$
          \mac{\skback}{\skap{\code}{\stack}\cons\skframe}{\codetwo}{\stempty}{\genv}
          \tomachasubfive
          \mac{\skback}{\skframe}{\code\codetwo}{\stack}{\genv}
        $}
      \begin{enumerate}
      \item By \ih\ we have that $\codetwo$ is a normal term while by \refpoint{invariants-nf-frame} of the \ih $\code$ is \quiet. Therefore $\code\codetwo$ is a \quiet term.
      \item It follows from the \ih.
      \end{enumerate}

  \item \case{$
          \mac{\skback}{\skframe}{\code}{\codetwo\cons\stack}{\genv}
          \tomachasubsix
          \mac{\skeval}{\skap{\code}{\stack}\cons\skframe}{\codetwo}{\stempty}{\genv}
        $}
      \begin{enumerate}
      \item Trivial since $\skphase \neq \skback$.      
      \item $\code$ is a \quiet term by \refpoint{invariants-nf-code} of the \ih.\qed
      \end{enumerate} 
  \end{itemize}
\end{proof}

\subsection{Proof of the Backtracking Free Variables Invariant (\reflemmap{invariants}{backtracking})}
\begin{proof}
  The invariant trivially holds for an initial state
  $\skamstate{\skeval}{\stempty}{\code_0}{\stempty}{\stempty}$
  if $\code_0$ is closed and well-named.
For a non-empty evaluation sequence we list the cases for the last transitions. We omit the transitions involving only states in evaluating phase, as for them everything follows immediately from the \ih.
  \begin{itemize}

  \item \case{$
          \mac{\skeval}{\skframe}{\vartwo}{\stack}{\genv}
          \tomachasubthree
          \mac{\skback}{\skframe}{\vartwo}{\stack}{\genv}
        $ with $\genv(\vartwo) = \openscope$}
     \begin{enumerate}

      \item \emph{Backtracking Code}: by hypothesis $\genv(\vartwo) = \openscope$, and so $\vartwo\in\sklp\genv =_{\reflemmaeqp{comp-properties}{open-scopes}} \sklp\skframe$.

      \item \emph{Pairs in the Frame}: it follows from the \ih.
    \end{enumerate}

  \item \case{$
          \mac{\skback}{\vartwo\cons\skframe}{\codethree}{\stempty}{\genv}
          \tomachasubfour
          \mac{\skback}{\skframe}{\l\vartwo.\codethree}{\stempty}{\closescopem\vartwo\cons\genv}
        $}

     \begin{enumerate}

      \item \emph{Backtracking Code}: by \ih $\fv\codethree\subseteq \sklp{\vartwo\cons\skframe}$ and so $\fv{\la\vartwo\codethree} = \fv{\codethree} \setminus \set\var = \sklp\skframe$.

      \item \emph{Pairs in the Frame}: it follows from the \ih.
    \end{enumerate}

  \item \case{$
          \mac{\skback}{\skap{\codethree}{\stack}\cons\skframe}{\codefour}{\stempty}{\genv}
          \tomachasubfive
          \mac{\skback}{\skframe}{\codethree\codefour}{\stack}{\genv}
        $}
     \begin{enumerate}
      \item \emph{Backtracking Code}: by \ih $\fv\codefour \subseteq \sklp{\skap{\codethree}{\stack}\cons\skframe} = \sklp{\skframe}$ and by \refpoint{invariants-backtracking-frame} of the \ih $\fv\codethree \subseteq \sklp{\skframe}$, and so $\fv{\codethree\codefour} \subseteq \sklp{\skframe}$.

      \item \emph{Pairs in the Frame}: it follows from the \ih.
    \end{enumerate}

  \item \case{$
          \mac{\skback}{\skframe}{\codethree}{\codefour\cons\stack}{\genv}
          \tomachasubsix
          \mac{\skeval}{\skap{\codethree}{\stack}\cons\skframe}{\codefour}{\stempty}{\genv}
        $}
 \begin{enumerate}
      \item \emph{Backtracking Code}: nothing to prove.

      \item \emph{Pairs in the Frame}: by \refpoint{invariants-backtracking-code} of the \ih $\fv\codethree \subseteq \sklp{\skframe}$, the rest follows from the \ih. \qed
    \end{enumerate}

  \end{itemize}

\end{proof}

\subsection{Proof of the Name Invariant (\reflemmap{invariants}{name})}
\begin{proof}
  The invariant trivially holds for an initial state
  $\skamstate{\skeval}{\stempty}{\codethree_0}{\stempty}{\stempty}$
  if $\codethree_0$ is closed and well-named.
For a non-empty evaluation sequence we list the cases for the last transitions:
  \begin{itemize}

  \item \case{$
          \mac{\skeval}{\skframe}{\codethree\codefour}{\stack}{\genv}
          \tomachasubone
          \mac{\skeval}{\skframe}{\codethree}{\codefour\cons\stack}{\genv}
        $}
		Every point follows from its \ih.

  \item \case{$
          \mac{\skeval}{\skframe}{\l\vartwo.\codethree}{\codefour\cons\stack}{\genv}
          \tomachm
          \mac{\skeval}{\skframe}{\codethree}{\stack}{\esub{\vartwo}{\codefour}\cons\genv}
        $}
    \begin{enumerate}
    \item \emph{Substitutions}: for $\esub{\vartwo}{\codefour}$ it follows from \refpoint{invariants-name-abs} of the \ih, for $\genv$ it follows from the \ih. 

    \item \emph{Markers}: note that by \refpoint{invariants-name-abs} of the \ih $\vartwo$ simply cannot occur in $\skframe$, the rest follows from the \ih.
    
    \item \emph{Abstractions}: it follows from the \ih.
    \end{enumerate} 

  \item \case{$
          \mac{\skeval}{\skframe}{\l\vartwo.\codethree}{\stempty}{\genv}
          \tomachasubtwo
          \mac{\skeval}{\vartwo\cons\skframe}{\codethree}{\stempty}{\openscopem\vartwo\cons\genv}
        $}

    \begin{enumerate}
    \item \emph{Substitutions}: it follows from the \ih.

    \item \emph{Markers}: for $\vartwo$ it follows from \refpoint{invariants-name-abs} of the \ih, the rest follows from the \ih.

    \item \emph{Abstractions}: it follows from the \ih.
    \end{enumerate} 

  \item \case{$
          \mac{\skeval}{\skframe}{\vartwo}{\stack}{\genv}
          \tomache
          \mac{\skeval}{\skframe}{\rename{\codethree}}{\stack}{\genv}
        $}
     It follows by the \ih and the fact that in $\rename{\codethree}$ the abstracted variables are renamed (wrt $\codethree$) with fresh names.

  \item \case{$
          \mac{\skeval}{\skframe}{\vartwo}{\stack}{\genv}
          \tomachasubthree
          \mac{\skback}{\skframe}{\vartwo}{\stack}{\genv}
        $}
         Every point follows from its \ih.

  \item \case{$
          \mac{\skback}{\vartwo\cons\skframe}{\codethree}{\stempty}{\genv}
          \tomachasubfour
          \mac{\skback}{\skframe}{\l\vartwo.\codethree}{\stempty}{\closescopem\vartwo\cons\genv}
        $}
By the compatibility invariant (\reflemmap{invariants}{comp}) $ (\vartwo\cons\skframe) \compatible \genv$, and by the factorization property of compatible pairs (\reflemmap{comp-properties}{fact}) $\genv = \genvweak\cons\openscopem\vartwo\cons\genvtwo$.

\begin{enumerate}
    \item \emph{Substitutions}: it follows from the \ih.

    \item \emph{Markers}: it follows from the \ih.

    \item \emph{Abstractions}: for $\la\vartwo\codethree$ it holds because by \refpoint{invariants-name-marker} of the \ih $\vartwo$ does not appear in $\skframe$ nor in $\genvtrunk$ (it may however occur in $\genvweak$, but this is taken into account by the statement). For the other abstractions \refpoint{invariants-name-marker} follows from the \ih.
    \end{enumerate}


  \item \case{$
          \mac{\skback}{\skap{\codethree}{\stack}\cons\skframe}{\codefour}{\stempty}{\genv}
          \tomachasubfive
          \mac{\skback}{\skframe}{\codethree\codefour}{\stack}{\genv}
        $}
             Every point follows from its \ih.

  \item \case{$
          \mac{\skback}{\skframe}{\codethree}{\codefour\cons\stack}{\genv}
          \tomachasubsix
          \mac{\skeval}{\skap{\codethree}{\stack}\cons\skframe}{\codefour}{\stempty}{\genv}
        $}
             Every point follows from its \ih.\qed

  \end{itemize}

\end{proof}

\subsection{Proof of the Closure Invariant (\reflemmap{invariants}{closure})}
\begin{proof}
  The invariant trivially holds for an initial state
  $\skamstate{\skeval}{\stempty}{\code_0}{\stempty}{\stempty}$
  if $\code_0$ is closed and well-named.
For a non-empty evaluation sequence we list the cases for the last transitions:
  \begin{itemize}

  \item \case{$
          \mac{\skeval}{\skframe}{\codethree\codefour}{\stack}{\genv}
          \tomachasubone
          \mac{\skeval}{\skframe}{\codethree}{\codefour\cons\stack}{\genv}
        $}
		Every point follows from its \ih.

  \item \case{$
          \mac{\skeval}{\skframe}{\l\vartwo.\codethree}{\codefour\cons\stack}{\genv}
          \tomachm
          \mac{\skeval}{\skframe}{\codethree}{\stack}{\esub{\vartwo}{\codefour}\cons\genv}
        $}
        \begin{enumerate}
      \item \emph{Environment}: for $\esub{\vartwo}{\codefour}$ it follows from \refpoint{invariants-closure-state} of the \ih, for the rest it follows from the \ih. 
 
      \item \emph{Code, Stack, and Frame}: for $\vartwo$ is evident, as $\esub{\vartwo}{\codefour}\cons\genv$ is clearly defined on $\vartwo$, for the rest it follows from the \ih.  

    \end{enumerate}

  \item \case{$
          \mac{\skeval}{\skframe}{\l\vartwo.\codethree}{\stempty}{\genv}
          \tomachasubtwo
          \mac{\skeval}{\vartwo\cons\skframe}{\codethree}{\stempty}{\openscopem\vartwo\cons\genv}
        $}
    \begin{enumerate}
      \item \emph{Environment}: it follows from the \ih.
 
      \item \emph{Code, Stack, and Frame}: for $\vartwo$ is evident, as $\openscopem\vartwo\cons\genv$ is clearly defined on $\vartwo$, for the rest it follows from the \ih. 

    \end{enumerate}

  \item \case{$
          \mac{\skeval}{\skframe}{\vartwo}{\stack}{\genv}
          \tomache
          \mac{\skeval}{\skframe}{\rename{\codethree}}{\stack}{\genv}
        $}
    \begin{enumerate}
      \item \emph{Environment}: it follows from the \ih.
 
      \item \emph{Code, Stack, and Frame}: for $\rename{\codethree}$ it follows from \refpoint{invariants-closure-env} of the \ih, as $\codethree$ appears in the environment out of all closed scopes (otherwise the transition would not take place). The rest follows from the \ih.
      
    \end{enumerate}

  \item \case{$
          \mac{\skeval}{\skframe}{\vartwo}{\stack}{\genv}
          \tomachasubthree
          \mac{\skback}{\skframe}{\vartwo}{\stack}{\genv}
        $ with $\genv(\vartwo) = \openscope$}
     \begin{enumerate}
      \item \emph{Environment}: it follows from the \ih.
 
      \item \emph{Code, Stack, and Frame}: it follows from the \ih.
      
    \end{enumerate}

  \item \case{$
          \mac{\skback}{\vartwo\cons\skframe}{\codethree}{\stempty}{\genv}
          \tomachasubfour
          \mac{\skback}{\skframe}{\l\vartwo.\codethree}{\stempty}{\closescopem\vartwo\cons\genv}
        $}
By the compatibility invariant (\reflemmap{invariants}{comp}) $(\vartwo\cons\skframe) \compatible \genv$, and by the factorization property of compatible pairs (\reflemmap{comp-properties}{fact}) $\genv = \genvweak\cons\openscopem\vartwo\cons\genvtwo$.

     \begin{enumerate}
      \item \emph{Environment}: it follows from the \ih.
 
      \item \emph{Code, Stack, and Frame}: note that
      \begin{enumerate}
      \item $\genvweak$ does not bind any variable occurring free in $\codethree$ by \reflemmapp{invariants}{backtracking}{code},
      \item $\genvweak$ does not bind any variable occurring free in $\skframe$ by \reflemmapp{invariants}{name}{marker}, and
      \item the stack is empty by hypothesis.
      \end{enumerate}
      Then $\genvweak$ does not bind any free variable in the code, in the stack, nor in the frame, and we conclude using the \ih, because $\closescopem\var\genvweak\cons\openscopem\var\cons\genvtwo$ by definition is defined on a variable $\varthree$ iff $\genvtwo$ is.
      
    \end{enumerate}

  \item \case{$
          \mac{\skback}{\skap{\codethree}{\stack}\cons\skframe}{\codefour}{\stempty}{\genv}
          \tomachasubfive
          \mac{\skback}{\skframe}{\codethree\codefour}{\stack}{\genv}
        $}
     \begin{enumerate}
      \item \emph{Environment}: it follows from the \ih.
 
      \item \emph{Code, Stack, and Frame}: it follows from the \ih.
      
    \end{enumerate}

  \item \case{$
          \mac{\skback}{\skframe}{\codethree}{\codefour\cons\stack}{\genv}
          \tomachasubsix
          \mac{\skeval}{\skap{\codethree}{\stack}\cons\skframe}{\codefour}{\stempty}{\genv}
        $}
 \begin{enumerate}
      \item \emph{Environment}: it follows from the \ih.
 
      \item \emph{Code, Stack, and Frame}: it follows from the \ih.\qed
      
    \end{enumerate}

  \end{itemize}

\end{proof}

\section{Proofs Omitted from \refsect{distillation}\\ (Distilling the Strong MAM)}

\subsection{Proof of Closed Scopes Disappear (\reflemma{closed-scopes-disap})}
\begin{proof}
Essentially it follows from $\decode{\closescopem\var\cons\genvweak\cons\openscopem\var\cons\genv} = \decode\genv$. Precisely, by \reflemmap{comp-properties}{fact} $\skframe$ and $\genv$ have, respectively, the forms $\skframeweak\cons\skframetrunk$ and $\genvweaktwo\cons\genvtrunk$. Now, 
  
  \[\begin{array}{cclcccccc}
    \decodecompat\skframe{ (\closescopem\var\cons\genvweak\cons\openscopem\var\cons\genv) } & = &
    \decodecompat{ (\skframeweak\cons\skframetrunk) }{ (\closescopem\var\cons\genvweak\cons\openscopem\var\cons\genvweaktwo\cons\genvtrunk) }
    \\
     & = & 
    \decodecompat\skframetrunk\genvtrunk \ctxholep{\decodep{\closescopem\var\cons\genvweak\cons\openscopem\var\cons\genvweaktwo}{\decode\skframeweak}}
    \\
    
     & = & 
    \decodecompat\skframetrunk\genvtrunk \ctxholep{\decodep\genvweaktwo{\decode\skframeweak}}
    \\
    
    & = &
    \decodecompat{ (\skframeweak\cons\skframetrunk) }{ (\genvweaktwo\cons\genvtrunk) }
    & = &
    \decodecompat\skframe\genv   
  \end{array}\]\qed
\end{proof}

\subsection{Proof of the Leftmost-Outermost Invariant (\reflemma{lo-decoding-invariant})}
\label{app:LO-invariant}

For the invariant we need the following lemma.

\begin{lemma}[Compatible Pairs Decode to Non-Applicative Contexts]
\label{l:decodecompat-not-ap} 
Let $\skframeweak$ be a weak frame, $\genvweak$ a weak environment, and $\skframe \compatible \genv$ a compatible pair. 
Then $\decode\skframeweak$, $\decode\genvweak$, and $\decodecompat{\skframe}{\genv}$ are contexts that are not applicative, \ie not of the form $\ctxp{\sctx\tm}$. 
\end{lemma}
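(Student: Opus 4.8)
The plan is to establish the three assertions simultaneously, by induction on the grammars of weak frames $\skframeweak$, weak environments $\genvweak$, and compatible pairs $\skframe\compatible\genv$ (the decoding of a compatible pair refers to those of its weak frame and weak environment parts, so the induction is mutual). The convenient reformulation of ``not applicative'' is: reading the path from the hole of a context towards the root and skipping every explicit-substitution node of the form $\ctxhole\esub\var\tm$, one never immediately reaches an application node having the hole in \emph{function} position --- indeed a decomposition of the form $\ctxp{\sctx\tm}$ is exactly a witness of the opposite, with $\sctx$ the absorbed prefix of substitution nodes. Hence, to refute applicativeness of a context it suffices to exhibit, just above the hole after absorbing the substitution nodes, either the root, or a $\l$-abstraction node, or an application node with the hole in \emph{argument} position. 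From this reformulation two closure facts follow at once and will be used repeatedly: (i) if $\ctx$ is not applicative then neither are $\ctx\ctxholep{\l\var.\ctxhole}$, $\ctx\ctxholep{\tm\ctxhole}$, nor $\ctx\ctxholep{\ctxhole\esub\var\tm}$; and (ii) if $\ctx$ and $\ctxtwo$ are not applicative then $\ctx\ctxholep\ctxtwo$ is not applicative --- because, following the path from the hole of $\ctx\ctxholep\ctxtwo$, one either meets a decisive non-substitution node already inside $\ctxtwo$ (not an application on the function side, as $\ctxtwo$ is not applicative), or traverses all of $\ctxtwo$, which is then a substitution context, and the matter is settled by the non-applicativeness of $\ctx$.

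The second ingredient is the remark that every weak environment decodes to a \emph{substitution context}: an immediate induction on $\genvweak$, since $\decode\stempty=\ctxhole$, the clause for $\esub\var\codetwo$ gives $\decode{\esub\var\codetwo\cons\genvweak}=\decodep\genvweak{\ctxhole\esub\var\codetwo}$ and plugging a substitution context into a substitution context yields one, and the clause for a closed scope simply erases it. As a substitution context is trivially not applicative --- its hole-to-root path consists of substitution nodes only --- this settles the weak-environment part of the statement.

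For weak frames, $\decode\stempty=\ctxhole$ is not applicative, and in the case $\skap\codetwo\stack\cons\skframeweak$ we have $\decode{\skap\codetwo\stack\cons\skframeweak}=\decodep{\skframeweak}{\decodep\stack{\codetwo\ctxhole}}$; the inner context $\decodep\stack{\codetwo\ctxhole}$ carries the application $\codetwo\ctxhole$ immediately above its hole, with the hole in argument position, hence it is not applicative, and plugging it inside $\decode\skframeweak$ --- not applicative by the induction hypothesis --- preserves this by closure fact (ii). For compatible pairs, $\decodecompat\stempty\stempty=\ctxhole$ is immediate; in the abstraction case, $\decodecompat{(\var\cons\skframe)}{(\openscopem\var\cons\genv)}=\decodecompat\skframe\genv\ctxholep{\l\var.\ctxhole}$ is not applicative because the $\l$-node is immediately above the hole (closure fact (i)), whatever $\decodecompat\skframe\genv$ is; and in the weak-extension case, $\decodecompat{(\skframeweak\cons\skframetrunk)}{(\genvweak\cons\genvtrunk)}=\decodecompat{\skframetrunk}{\genvtrunk}\ctxholep{\decodep{\genvweak}{\decode\skframeweak}}$, one assembles the pieces already in hand: $\decode\skframeweak$ is not applicative (weak-frame case), $\decode\genvweak$ is a substitution context, so $\decodep{\genvweak}{\decode\skframeweak}$ is not applicative by (ii), and plugging this inside $\decodecompat{\skframetrunk}{\genvtrunk}$ --- not applicative by the induction hypothesis on the structurally smaller trunk pair --- is again handled by (ii).

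The only mildly delicate point, and the one I would check most carefully, is the degenerate subcase $\skframeweak=\stempty$ of the weak-extension clause: there $\decodep{\genvweak}{\decode\skframeweak}$ collapses to the bare substitution context $\decode\genvweak$, and the whole weight of non-applicativeness rests on the induction hypothesis for $\decodecompat{\skframetrunk}{\genvtrunk}$, so one must make sure the substitution nodes contributed by $\genvweak$ are duly ``absorbed'' before the decisive node of $\decodecompat{\skframetrunk}{\genvtrunk}$ is inspected. This is precisely what the reformulation of non-applicativeness (absorb substitution nodes, then look) is designed to make transparent, so beyond this piece of bookkeeping I do not expect any real obstacle.
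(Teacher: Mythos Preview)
Your proof is correct and follows essentially the same approach as the paper: an induction on the structure of weak frames, weak environments, and compatible pairs, with the same case split (base, weak extension, abstraction). The paper's proof is much terser --- it simply says the weak-frame and weak-environment cases are ``immediate'' and dispatches the compatible-pair cases in one line each --- whereas you spell out explicitly the closure facts (i) and (ii) and the path-based reformulation of ``not applicative'' that the paper leaves implicit. The only cosmetic remark is that your closure fact (i) is stated with the hypothesis ``$\ctx$ not applicative'' for all three constructors, but for $\ctx\ctxholep{\l\var.\ctxhole}$ and $\ctx\ctxholep{\tm\ctxhole}$ you correctly observe later that the conclusion holds regardless of $\ctx$; this is harmless and matches how the paper handles the abstraction case without invoking the induction hypothesis.
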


\begin{proof} The fact that $\decode\skframeweak$ and $\decode\genvweak$ are not applicative is an immediate induction over their structure. For $\decodecompat\skframe\genv$ we reason by induction on the compatibility of $\skframe$ and $\genv$. The base case $\decodecompat\stempty\stempty = \ctxhole$ is evident. Inductive cases:

  \begin{enumerate}
    \item \emph{Weak Extension}, \ie $(\skframeweak\cons\skframetrunk) \compatible (\genvweak\cons\genvtrunk) $ with $\skframetrunk \compatible \genvtrunk$. By \ih $\decodecompat\skframetrunk\genvtrunk$ is not applicative and both $\decode\skframeweak$ and $\decode\genvweak$ are not applicative. By definition, $\decodecompat{ (\skframeweak\cons\skframetrunk) }{ (\genvweak\cons\genvtrunk) } = \decodecompat{\skframetrunk}{\genvtrunk}\ctxholep{\decodep\genvweak{\decode\skframeweak}}$, which is then not applicative.
  
    \item \emph{Abstraction}, \ie $ (\var\cons\skframe) \compatible (\openscopem\var\cons\genv) $ with $\skframe \compatible \genv$. Immediate, as $\decodecompat{\skframe}{\genv}\ctxholep{\l\var.\ctxhole}$ is not applicative.\qed
  \end{enumerate}
\end{proof}

We can now prove that the decoding of the data-structures of a reachable state is a \lo context.

\begin{proof}[Leftmost-Outermost Invariant, \reflemma{lo-decoding-invariant}]

We prove that $\decodecompat{\skframe}{\genv}$ is a \lo context, the fact that $\stctx\state$ is a \lo contexts then easily follows, as $\stctx\state \defeq \decodecompat{\skframe}{\genv}\ctxholep{\decode\stack}$.

The invariant trivially holds for an initial state
  $\skamstate{\skeval}{\stempty}{\code_0}{\stempty}{\stempty}$.
For a non-empty evaluation sequence we list the cases for the last transitions. We omit the cases for which the environment and the frame do not change (\ie $ \tomachasubone,\tomache,\tomachasubthree$), as for them the statement follows from the \ih.
  \begin{itemize}

  \item \case{$
          \mac{\skeval}{\skframe}{\l\var.\code}{\codetwo\cons\stack}{\genv}
          \tomachm
          \mac{\skeval}{\skframe}{\code}{\stack}{\esub{\var}{\codetwo}\cons\genv}
        $}
        By \ih\ $\decodecompat{\skframe}{\genv}$ is \lo. Let $\skframe = \skframeweak\cons\skframetrunk$, so that $\decodecompat{\skframe}{\genv}=\decodecompat{\skframetrunk}{\genv}\ctxholep{\decode{\skframeweak}}$.
      Note that, by the name invariant (\reflemmapp{invariants}{name}{abs}), the eventual occurrences of $\var$ are all in $\code$ and so $\var \not\in \fv{\decode{\skframeweak}}$, and in particular $\var \not\in \lfv{\decode{\skframeweak}}$. Then, $\decodecompat{\skframetrunk}{\genv}\ctxholep{\decode{\skframeweak}\esub{\var}{\codetwo}}$ is \lo: the conditions of \refdefp{LOCtx}{internal} are satisfied either because $\decodecompat{\skframe}{\genv}=\decodecompat{\skframetrunk}{\genv}\ctxholep{\decode{\skframeweak}}$ is \lo or because $\var \not\in \lfv{\decode{\skframeweak}}$.

  \item \case{$
          \mac{\skeval}{\skframe}{\l\var.\code}{\stempty}{\genv}
          \tomachasubtwo
          \mac{\skeval}{\var\cons\skframe}{\code}{\stempty}{\openscopem\var\cons\genv}
        $}
        By \ih\ we have $\decodecompat{\skframe}{\genv}$ is \lo and by \reflemma{decodecompat-not-ap} $\decodecompat{\skframe}{\genv}$ is not applicative, so
      $\decodecompat{ (\var\cons\skframe) }{ (\openscopem\var\cons\genv) } = \decodecompat{\skframe}{\genv}\ctxholep{\l\var.\ctxhole}$
      is \lo (it satisfies the conditions of \refdefp{LOCtx}{internal} because $\decodecompat{\skframe}{\genv}$ does).

  \item \case{$
          \mac{\skback}{\var\cons\skframe}{\code}{\stempty}{\genv}
          \tomachasubfour
          \mac{\skback}{\skframe}{\l\var.\code}{\stempty}{\closescopem\var\cons\genv}
        $}
        	By the compatibility invariant (\reflemmap{invariants}{comp}) $ (\var\cons\skframe) \compatible \genv$, and by the factorization property of compatible pairs (\reflemmap{comp-properties}{fact}) $\genv = \genvweak\cons\openscopem\var\cons\genvtwo$. By definition
      $$
          \decodecompat{ (\var\cons\skframe) }{ (\genvweak\cons\openscopem\var\cons\genvtrunk) }
        = \decodecompat{\skframe}{\genvtrunk}\ctxholep{\l\var.\decode{\genvweak}}
      $$
      that by \ih is \lo. Now, $\decodecompat{\skframe}{\genvtrunk}$ is \lo, as it satisfies the conditions of \refdefp{LOCtx}{internal} because $\decodecompat{\skframe}{\genv}$ does. We conclude by noticing that the compatible pair of the target state satisfies $\decodecompat{\skframe}{ (\closescopem\var\cons\genv) } =
      \decodecompat{\skframe}{(\closescopem\var\cons\genvweak\cons\openscopem\var\cons\genvtrunk)} =_{\reflemmaeq{closed-scopes-disap}} \decodecompat{\skframe}{\genvtrunk}$.

  \item \case{$
          \mac{\skback}{\skap{\code}{\stack}\cons\skframe}{\codetwo}{\stempty}{\genv}
          \tomachasubfive
          \mac{\skback}{\skframe}{\code\codetwo}{\stack}{\genv}
        $}
        By \ih\ we have that
      $\decodecompat{ (\skap{\code}{\stack}\cons\skframe) }{\genv}$ is \lo and by \emph{frame} part of the backtracking normal form invariant (\reflemmapp{invariants}{backtracking}{frame}) $\code$ is \quiet. By definition, 
      $\decodecompat{ (\skap{\code}{\stack}\cons\skframe) }{\genv} = \decodecompat{\skframe}{\genv}\ctxholep{\dstackp{\code\ctxhole}}$,
      Then, $\decodecompat{\skframe}{\genv}$---being a prefix of $\decodecompat{ (\skap{\code}{\stack}\cons\skframe) }{\genv}$---verifies the conditions of \refdefp{LOCtx}{internal} and is \lo.

  \item \case{$
          \mac{\skback}{\skframe}{\code}{\codetwo\cons\stack}{\genv}
          \tomachasubsix
          \mac{\skeval}{\skap{\code}{\stack}\cons\skframe}{\codetwo}{\stempty}{\genv}
        $}
        Note that 
      \begin{enumerate}
       \item  $\decodecompat{\skframe}{\genv}$ is \lo by \ih,
       \item  $\decodecompat{\skframe}{\genv}$ is not applicative by \reflemma{decodecompat-not-ap},
       \item \label{p:three} $\fv\code \subseteq \skl(\skframe)$ by the backtracking free variables invariant (\reflemmapp{invariants}{backtracking}{code}).
       \item $\code$ is a \quiet term by the normal form invariant (\reflemmapp{invariants}{nf}{code}), because the stack at the left-hand side is not empty.
      \end{enumerate}
      Note that \refpoint{three} guarantees that $\var\notin\fv\code$, and so in particular $\var\notin\lfv\code$, for any ES $\esub\var\codethree$ in $\genv$ (and so in $\decodecompat{\skframe}{\genv}$). Then $\decodecompat{\skframe}{\genv}\ctxholep{\dstackp{\code\ctxhole}}$ is \lo (because it verifies the conditions of \refdefp{LOCtx}{internal}, by the listed points),
      that is to say $\decodecompat{ (\skap{\code}{\stack}\cons\skframe) }{\genv}$ is \lo.\qed

  \end{itemize}
\end{proof}

\subsection{Proof of the Properties of the Decoding wrt Structural Equivalence $\eqstruct$ (\reflemma{eqstruct_structural_frames})}

We here present a more general statement than the one in the paper. The reason is that the proof of the second point of the lemma (\emph{Compatible Pairs Absorb Substitutions}) actually requires a further lemma (\emph{Weak Frames and Substitutions Commute} below) that is omitted from the statement in the paper because it is not used anywhere else.

\begin{lemma}[Decoding and Structural Equivalence $\eqstruct$]
\label{l:eqstruct_structural_frames-app}
\begin{enumerate}
\item \label{p:eqstruct_structural_frames-app-stack} \emph{Stacks and Substitutions Commute}: if $\var$ does not occur free in $\stack$ then $\decodep\stack{\tm\esub\var\tmtwo} \eqstruct \decodep\stack\tm\esub\var\tmtwo$;

\item \label{p:eqstruct_structural_frames-app-frame-weak} \emph{Weak Frames and Substitutions Commute}: if $\var$ does not occur free in $\skframeweak$ then $\decodep\skframeweak{\tm\esub\var\tmtwo} \eqstruct \decodep\skframeweak\tm\esub\var\tmtwo$;

\item \label{p:eqstruct_structural_frames-app-pairs}\emph{Compatible Pairs Absorb Substitutions}: if $\var$ does not occur free in $\skframe$ then  $\decodecompat\skframe\genv\ctxholep{\tm\esub\var\tmtwo} \eqstruct \decodecompat\skframe{ (\esub\var\tmtwo\cons\genv) }\ctxholep\tm$.

\end{enumerate}
\end{lemma}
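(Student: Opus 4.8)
The plan is to prove the three items in the order in which they are stated, each one reducing to the previous together with a single axiom of the structural equivalence $\eqstruct$; conceptually the whole argument only records that a stack decodes to a chain of right applications, a weak frame to a nested such chain, and the head of an environment to an outermost explicit substitution, so that a substitution can always be commuted outward by $\tostructapl$, $\tostructapr$ and contextuality of $\eqstruct$. \emph{Stacks.} First I would argue by induction on $\stack$. The base case $\stack=\stempty$ is an equality. For $\stack=\codetwo\cons\stacktwo$ the decoding unfolds to $\decodep\stack{\tm\esub\var\tmtwo}=\decodep{\stacktwo}{(\tm\esub\var\tmtwo)\,\codetwo}$; as $\var\notin\fv\stack$ entails $\var\notin\fv\codetwo$ (and $\var\notin\fv\stacktwo$), the axiom $\tostructapl$ gives $(\tm\,\codetwo)\esub\var\tmtwo\eqstruct(\tm\esub\var\tmtwo)\,\codetwo$, hence by contextuality $\decodep{\stacktwo}{(\tm\esub\var\tmtwo)\,\codetwo}\eqstruct\decodep{\stacktwo}{(\tm\,\codetwo)\esub\var\tmtwo}$, and the \ih turns the right-hand side into $\decodep{\stacktwo}{\tm\,\codetwo}\esub\var\tmtwo=\decodep\stack\tm\esub\var\tmtwo$.

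\emph{Weak frames.} Next I would induct on $\skframeweak$, the base case being immediate. For $\skframeweak=\skap\codetwo\stack\cons\skframeweak'$ the decoding gives $\decodep\skframeweak{\tm\esub\var\tmtwo}=\decodep{\skframeweak'}{\decodep\stack{\codetwo\,(\tm\esub\var\tmtwo)}}$, and since weak frames contain no binders the hypothesis $\var\notin\fv\skframeweak$ descends to $\codetwo$, $\stack$ and $\skframeweak'$. Then, always up to $\eqstruct$ and using its contextuality, I would apply $\tostructapr$ (with $\var\notin\fv\codetwo$) to rewrite $\codetwo\,(\tm\esub\var\tmtwo)$ as $(\codetwo\,\tm)\esub\var\tmtwo$, then the first item (with $\var\notin\fv\stack$) to push the substitution out of $\decodep\stack{\cdot}$, and finally the \ih (with $\var\notin\fv{\skframeweak'}$) to push it out of $\decodep{\skframeweak'}{\cdot}$, landing on $\decodep{\skframeweak'}{\decodep\stack{\codetwo\,\tm}}\esub\var\tmtwo=\decodep\skframeweak\tm\esub\var\tmtwo$.

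\emph{Compatible pairs.} Here no induction is needed. Using the factorization of compatible pairs (\reflemmap{comp-properties}{fact}) I would write $\skframe=\skframeweak\cons\skframetrunk$ and $\genv=\genvweak\cons\genvtrunk$; then $\esub\var\tmtwo\cons\genv$ factors as $(\esub\var\tmtwo\cons\genvweak)\cons\genvtrunk$, it is still compatible with $\skframe$ by \reflemmap{comp-properties}{weak-comm}, and its new head explicit substitution decodes to the outermost one, so unfolding the decoding of compatible pairs on the two sides of the claim yields
\[
\decodecompat\skframe\genv\ctxholep{\tm\esub\var\tmtwo}=\decodecompat{\skframetrunk}{\genvtrunk}\ctxholep{\decodep{\genvweak}{\decodep{\skframeweak}{\tm\esub\var\tmtwo}}},\qquad
\decodecompat\skframe{(\esub\var\tmtwo\cons\genv)}\ctxholep{\tm}=\decodecompat{\skframetrunk}{\genvtrunk}\ctxholep{\decodep{\genvweak}{\decodep{\skframeweak}{\tm}\esub\var\tmtwo}}.
\]
Since $\var\notin\fv\skframe$ gives $\var\notin\fv\skframeweak$, the second item supplies $\decodep{\skframeweak}{\tm\esub\var\tmtwo}\eqstruct\decodep{\skframeweak}{\tm}\esub\var\tmtwo$, and two applications of contextuality of $\eqstruct$ --- first through $\decodep{\genvweak}{\cdot}$, then through $\decodecompat{\skframetrunk}{\genvtrunk}\ctxholep{\cdot}$ --- lift this to the two displayed terms, closing the argument.

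I do not expect a genuine obstacle: the reasoning is routine once the conventions are fixed. The two points that require care are the overloading of `$\cons$' with list composition in the weak/trunk grammars, which is what makes $\esub\var\tmtwo\cons\genv$ be read correctly as inserting the new explicit substitution inside the weak part of $\genv$ (and silently discards empty frames/environments in degenerate cases), and the verification that the freeness side conditions, stated globally on $\stack$, $\skframeweak$ and $\skframe$, indeed descend to the individual codes on which $\tostructapl$ and $\tostructapr$ are invoked --- transparent for stacks and weak frames, which carry no binders, and handled by the factorization in the general case.
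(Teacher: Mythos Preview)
Your proposal is correct and follows essentially the same route as the paper: induction on $\stack$ using $\tostructapl$ for Part~1, induction on $\skframeweak$ using $\tostructapr$ together with Part~1 for Part~2, and a direct argument via the weak/trunk factorization and Part~2 for Part~3. The only cosmetic difference is that in Part~1 you apply the $\tostructapl$ axiom first and then the \ih, whereas the paper orders these two steps the other way around; the substance is identical.
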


\begin{proof}
\hfill
 \begin{enumerate}
  \item \emph{Stacks and Substitutions Commute}: by induction on $\stack$. Cases:
    \begin{enumerate}
     \item \emph{Empty Stack}, \ie $\stack = \stempty$. Then $\decodep\stempty{\tm\esub\var\tmtwo} = \tm\esub\var\tmtwo = \decodep\stempty\tm\esub\var\tmtwo$.
     \item \emph{Non-Empty Stack}, \ie $\stack = \codethree\cons\stacktwo$. Then 
     \[\begin{array}{llllllllll}
     \decodep{\codethree\cons\stacktwo}{\tm\esub\var\tmtwo} 
     & = & 
     \decodep{\stacktwo}{\tm\esub\var\tmtwo}\codethree \\
     
     & \eqstruct_{\ih} & 
     \decodep{\stacktwo}{\tm}\esub\var\tmtwo\codethree 
     \\
     
     & \tostructapr & \decodep{\stacktwo}{\tm}\codethree\esub\var\tmtwo 
     & = & 
     \decodep{\codethree\cons\stacktwo}\tm\esub\var\tmtwo
      \end{array}\]
    \end{enumerate}
   Note that the proof uses only $\tostructapl$.

   \item \emph{Weak Frames and Substitutions Commute}: by induction on $\skframeweak$. Cases:
    \begin{enumerate}
     \item \emph{Empty Weak Frame}, \ie $\skframeweak = \stempty$. Then $\decodep\stempty{\tm\esub\var\tmtwo} = \tm\esub\var\tmtwo = \decodep\stempty\tm\esub\var\tmtwo$.
     \item \emph{Non-Empty Weak Frame}, \ie $\skframeweak = \skap\codethree\stack \cons\skframeweaktwo$. Then 
     \[\begin{array}{llllllllll}
     \decodep{\skap\codethree\stack \cons\skframeweaktwo}{\tm\esub\var\tmtwo} 
     & = & 
     \decodep\skframeweaktwo{\decodep\stack{\codethree (\tm\esub\var\tmtwo)}}\\
     
     & \tostructapr & 
     \decodep\skframeweaktwo{\decodep\stack{(\codethree \tm)\esub\var\tmtwo}}\\
     
     & \tostruct_{\refpointeq{eqstruct_structural_frames-app-stack}} & 
     \decodep\skframeweaktwo{\decodep\stack{(\codethree \tm)}\esub\var\tmtwo}\\
     
     & \eqstruct_{\ih} & 
     \decodep\skframeweaktwo{\decodep\stack{\codethree \tm}}\esub\var\tmtwo
     & = & 
     \decodep{\skap\codethree\stack \cons\skframeweaktwo}{\tm} \esub\var\tmtwo
      \end{array}\]
    \end{enumerate}
   Note that the proof uses only $\tostructapr$ and $\tostructapl$ (because of the previous point). 
   
   \item \emph{Compatible Pairs Absorb Substitutions}: By \reflemmap{comp-properties}{fact} we can decompose $\skframe$ and $\genv$ in their weak and trunk parts, obtaining:
    \[\begin{array}{llllllllll}
      \decodecompat\skframe\genv\ctxholep{\tm\esub\var\tmtwo}
      & = & 
      \decodecompat{ (\skframeweak\cons\skframetrunk) }{ (\genvweak\cons\genvtrunk}\ctxholep{\tm\esub\var\tmtwo) }\\
      & = & 
      \decodecompat{\skframetrunk}{\genvtrunk}\ctxholep{\decodep\genvweak{\decodep\skframeweak{\tm\esub\var\tmtwo}}}\\
     
      & =_{\refpointeq{eqstruct_structural_frames-app-frame-weak}} &
      \decodecompat{\skframetrunk}{\genvtrunk}\ctxholep{\decodep\genvweak{\decodep\skframeweak{\tm}\esub\var\tmtwo}}\\
      & = & 
      \decodecompat{\skframetrunk}{\genvtrunk}\ctxholep{\decodep{\esub\var\tmtwo\cons\genvweak}{\decodep\skframeweak{\tm}}}\\
      & = & 
      \decodecompat{( \skframeweak\cons\skframetrunk) }{( \esub\var\tmtwo\cons\genvweak\cons\genvtrunk) }\ctxholep{\tm}   
      & = & 
      \decodecompat\skframe{ (\esub\var\tmtwo\cons\genv) }\ctxholep\tm
      \end{array}\]

  \end{enumerate}  \qed
\end{proof}

\subsection{Cases Omitted from the Proof of the Distillation Theorem (\refthm{strong-mam-distillation})}
\label{app:comm-distil}
\begin{proof}
 
We list here the equality cases omitted from the main proof in the paper.

\begin{itemize}
\item \case{$\mac{ \skeval }{ \skframe }{ \code\codetwo }{ \stack }{ \genv }
             \tomachasubone
             \mac{ \skeval }{ \skframe }{ \code }{ \codetwo\cons\stack }{ \genv }$}
     \[\begin{array}{lllllllll}
     &
     \decode{\mac{ \skeval }{ \skframe }{ \code\codetwo }{ \stack }{ \genv } }
     &=&
     \decodecompat{\skframe}{\genv}\ctxholep{\dstackp{\code\codetwo}}
     &=&
     \decodecompat{\skframe}{\genv}\ctxholep{\decodep{\codetwo\cons\stack}{\code}}
     &=&
     \decode{ \mac{ \skeval }{ \skframe }{ \code }{ \codetwo\cons\stack }{ \genv } }
 \end{array}\]

\item \case{$\mac{ \skeval }{ \skframe }{ \l\var.\code }{ \stempty }{ \genv }
             \tomachasubtwo
             \mac{ \skeval }{ \var\cons\skframe }{ \code }{ \stempty }{ \openscopem\var\cons\genv }$}
    \[\begin{array}{lllllllll}
    &
    \decode{\mac{ \skeval }{ \skframe }{ \l\var.\code }{ \stempty }{ \genv }}
    &=&
    \decodecompat{\skframe}{\genv}\ctxholep{\l\var.\code}
    \\
    &&=&
    \decodecompat{ (\var\cons\skframe) }{ (\openscopem\var\cons\genv) }\ctxholep{\code}
    &=&
    \decode{\mac{ \skeval }{ \var\cons\skframe }{ \code }{ \stempty }{ \openscopem\var\cons\genv }}
    \end{array}\]

\item \case{$\mac{ \skeval }{ \skframe }{ \var }{ \stack }{ \genv }
             \tomachasubthree
             \mac{ \skback }{ \skframe }{ \var }{ \stack }{ \genv }$}
    \[\begin{array}{lllllllll}
     &
     \decode{\mac{ \skeval }{ \skframe }{ \var }{ \stack }{ \genv }}
     &=&
     \decodecompat{\skframe}{\genv}\ctxholep{\dstackp{\var}}
     &=&
     \decode{\mac{ \skback }{ \skframe }{ \var }{ \stack }{ \genv }}
     \end{array}\]

\item \case{$\mac{ \skback }{ \skap{\code}{\stack}\cons\skframe }{ \codetwo }{ \stempty }{ \genv }
             \tomachasubfive
             \mac{ \skback }{ \skframe }{ \code\codetwo }{ \stack }{ \genv }$}
    $$
     \decode{\mac{ \skback }{ \skap{\code}{\stack}\cons\skframe }{ \codetwo }{ \stempty }{ \genv }}
     =
     \decodecompat{\skap{\code}{\stack}\cons\skframe}{\genv}\ctxholep{\codetwo}
     =
     \decodecompat{\skframe}{\genv}\ctxholep{\dstackp{\code\,\codetwo}}
     =
     \decode{\mac{ \skback }{ \skframe }{ \code\codetwo }{ \stack }{ \genv }}
    $$
\item \case{$\mac{ \skback }{ \skframe }{ \code }{ \codetwo\cons\stack }{ \genv }
             \tomachasubsix
             \mac{ \skeval }{ \skap{\code}{\stack}\cons\skframe }{ \codetwo }{ \stempty }{ \genv }$}
    \[\begin{array}{lllllllll}
     
     \decode{\mac{ \skback }{ \skframe }{ \code }{ \codetwo\cons\stack }{ \genv }}
     &=&
     \decodecompat{\skframe}{\genv}\ctxholep{\decode{\codetwo\cons\stack}\ctxholep{\code}}\\
     &=&
     \decodecompat{\skframe}{\genv}\ctxholep{\decode{\stack}\ctxholep{\code\,\codetwo}}
     \\
     &=&
     \decodecompat{ (\skap{\code}{\stack}\cons\skframe) }{\genv}\ctxholep{\codetwo}
     &=&
     \decode{\mac{ \skeval }{ \skap{\code}{\stack}\cons\skframe }{ \codetwo }{ \stempty }{ \genv }}
     \end{array}\]\qed
\end{itemize}
\end{proof}

\section{Proofs Omitted from \refsect{complexity}\\ (Complexity Analysis)}

The proof of the subterm invariant (\reflemma{subterm-invariant}) for the machine is in the next subsection, and it is obtained as a corollary of a more general invariant. The subterm property for $\tolo$ (\reflemma{subterm-property-for-tolo}) is an immediate consequence of \reflemma{subterm-invariant} and the case of exponential transition in the distillation theorem (\refthm{strong-mam-distillation}).

\subsection{Proof of the Subterm Invariant (\reflemma{subterm-invariant})}
\label{app:subterm_invariant}
The subterm invariant as formulated in the paper is a consequence of the last point of the following more general invariant, because $\tomache$ duplicates codes from the environment, here proved to be subterms of the initial term.

\begin{lemma}[Subterm Invariant]
\label{l:subterm-invariant-app} 
Let $\state = \skamstate{\skphase}{\skframe}{\codetwo}{\stack}{\genv}$ be a
state reachable from the initial code $\code$.
Then
  \begin{enumerate}
  \item \emph{Evaluating Code}: \label{p:subterm-invariant-one}
  if $\skphase = \skeval$, then $\codetwo$ is a subterm of $\code$;
  
  \item \emph{Stack}:  \label{p:subterm-invariant-two}
  any code  in the stack $\stack$ is a subterm of $\code$;
  
  \item \emph{Frame}: \label{p:subterm-invariant-three}
  if $\skframe = \skframetwo\cons\skap{\codethree}{\stacktwo}\cons\skframethree$, then any code in $\stacktwo$ is a subterm of $\code$;
  
  \item \emph{Global Environment}: \label{p:subterm-invariant-four}
  if $\genv = \genvtwo\cons\esub{\var}{\codethree}\cons\genvthree$, then $\codethree$ is a subterm of $\code$;
  \end{enumerate}
\end{lemma}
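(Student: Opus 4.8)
The plan is to prove the strengthened four-point statement of \reflemma{subterm-invariant-app} by induction on the number $k$ of transitions in the execution reaching $\state$, and then read off \reflemma{subterm-invariant} as a corollary, since a $\tomache$ transition duplicates exactly a code $\genv(\var)$, which is a subterm of the initial code by point~4. Throughout I use two elementary facts about the subterm relation as defined via the erasure $\tm^-$: it is transitive, and it is stable under renaming of variables, hence preserved by the on-the-fly $\alpha$-conversion $\rename{(\cdot)}$ occurring in $\tomache$. For the base case $k=0$ the state is $\skamstate{\skeval}{\stempty}{\code}{\stempty}{\stempty}$: point~1 holds because $\code$ is a subterm of itself, and points 2--4 are vacuous.

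For the inductive step I would inspect the last transition and match each component of the target state to a point of the induction hypothesis (\ih). The evaluation transitions only decompose or relocate codes. Transition $\tomachasubone$ turns $\code\codetwo$ into $\code$ and pushes $\codetwo$, both of which are subterms of $\code\codetwo$, itself a subterm of the initial code by point~1. Transition $\tomachm$ makes $\code$ the new code, with $\l\var.\code$ a subterm by point~1, pushes the entry $\esub\var\codetwo$ with $\codetwo$ a subterm by point~2, and shrinks the stack. Transition $\tomachasubtwo$ makes $\code$ the new code (a subterm of $\l\var.\code$, point~1) and extends the frame and the environment only with a variable cell and an $\openscopem\var$ marker, so that points 3 and 4 are untouched. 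Transition $\tomachasubthree$ only changes the phase, so point~1 becomes vacuous and the rest is unchanged. The delicate case is $\tomache$: from $\genv(\var) = \code$ with $\code$ an actual code, unfolding the recursive definition of $\genv(\cdot)$ shows that $\code$ must be recorded by an explicit-substitution cell in $\genv$, i.e.\ $\genv = \genvtwo\cons\esub\var\code\cons\genvthree$ for some $\genvtwo,\genvthree$; hence $\code$, and therefore $\rename\code$, is a subterm of the initial code by point~4.

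The backtracking transitions are handled analogously, the simplification being that point~1 is vacuous while the phase is $\skback$. Transition $\tomachasubfour$ pops a variable cell from the frame and prepends a $\closescopem\var$ marker to the environment, so points 2--4 are inherited from the \ih. Transition $\tomachasubfive$ pops the frame cell $\skap\code\stack$ and makes $\stack$ the new stack, whose codes are subterms of the initial code by point~3 applied to that very cell. Transition $\tomachasubsix$ is the only one leaving the $\skback$ phase: the new evaluating code $\codetwo$ is the former top of the stack, hence a subterm by point~2, which gives point~1 for the target; and the new frame cell $\skap\code\stack$ carries a stack whose codes are subterms by point~2, which gives point~3; the environment is unchanged.

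I do not expect a real obstacle here: the content is a disciplined bookkeeping over the transitions. The only steps needing care are both inside the $\tomache$ case, namely (i) checking, by unfolding $\genv(\var)$, that a value other than $\bot$ and $\skboundvar$ is necessarily stored by an explicit-substitution cell somewhere in $\genv$, so that the environment induction hypothesis applies, and (ii) using that the subterm relation, being taken up to variable names, is not disturbed by the renaming $\rename{(\cdot)}$. Everything else reduces to the slogan that a code is only ever split into its immediate subterms, moved between components, or fetched from the environment.
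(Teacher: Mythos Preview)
Your proposal is correct and follows essentially the same approach as the paper: an induction on the length of the execution, with a case analysis on the last transition, using the four mutually supporting invariants exactly as the paper does. Your handling of the $\tomache$ case---unfolding $\genv(\var)$ to locate an actual substitution cell and then invoking point~4, plus the observation that $\rename{(\cdot)}$ preserves the subterm relation defined via $(\cdot)^-$---is precisely what the paper relies on implicitly.
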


\begin{proof}
Let us use $\code_0$ for the initial term.
The invariant trivially holds for the initial state
  $\skamstate{\skeval}{\stempty}{\code_0}{\stempty}{\stempty}$.
  In the inductive case we look at the last transition:
  \begin{itemize}

  \item \case{$
          \mac{\skeval}{\skframe}{\code\codetwo}{\stack}{\genv}
          \tomachasubone
          \mac{\skeval}{\skframe}{\code}{\codetwo\cons\stack}{\genv}
        $}

      \begin{enumerate}
      \item \emph{Evaluating Code}: By \ih, $\code\codetwo$ is a subterm of $\code_0$, so $\code$ is also a subterm of $\code_0$.
      \item \emph{Stack}: by \ih, $\code\codetwo$ is a subterm of $\code_0$, so $\codetwo$ is also a subterm of $\code_0$.
            Moreover, any piece of code in $\stack$ is a subterm of $\code_0$ by \ih.
      \item \emph{Frame}: it follows from the \ih, since the frame $\skframe$ is unchanged.
      \item \emph{Environment}: it follows from the \ih, since the environment $\genv$ is unchanged.
      \end{enumerate}

  \item \case{$
          \mac{\skeval}{\skframe}{\l\var.\code}{\codetwo\cons\stack}{\genv}
          \tomachm
          \mac{\skeval}{\skframe}{\code}{\stack}{\esub{\var}{\codetwo}\cons\genv}
        $}
      \begin{enumerate}
      \item \emph{Evaluating Code}: note that $\code$ is a subterm of $\l\var.\code$.
      \item \emph{Stack}: note that any piece code in $\stack$ is also in $\codetwo\cons\stack$.
      \item \emph{Frame}: it follows from the \ih, since $\skframe$ is not modified.
      \item \emph{Environment}: the new environment is of the form $\esub{\var}{\codetwo}\cons\genv$.
            Pieces of code in $\genv$ are subterms of $\code_0$ by \ih.
            Moreover $\codetwo$ is the top of the stack $\codetwo\cons\stack$ so it is also
            a subterm of $\code_0$.
      \end{enumerate}

  \item \case{$
          \mac{\skeval}{\skframe}{\l\var.\code}{\stempty}{\genv}
          \tomachasubtwo
          \mac{\skeval}{\var\cons\skframe}{\code}{\stempty}{\openscopem\var\cons\genv}
        $}
    
      \begin{enumerate}
      \item \emph{Evaluating Code}: note that $\code$ is a subterm of $\l\var.\code$ which is in turn a subterm of $\code_0$ by \ih.
      \item \emph{Stack}: trivial since the stack $\stack$ is empty.
      \item \emph{Frame}: any pair of the form $\skap{\codetwo}{\stacktwo}$ in the frame $\var\cons\skframe$ is also
            already present in $\skframe$, so by \ih\ any piece of code in $\stacktwo$ is a subterm of $\code_0$.
      \item \emph{Environment}: it follows from the \ih, since the environment $\genv$ is unchanged.
      \end{enumerate}

  \item \case{$
          \mac{\skeval}{\skframe}{\var}{\stack}{\genv}
          \tomache
          \mac{\skeval}{\skframe}{\rename{\code}}{\stack}{\genv}
        $}
      \begin{enumerate}
      \item \emph{Evaluating Code}: note that $\code$ is bound by $\genv$. By \ih, it is a subterm of $\code_0$.
            So $\rename{\code}$ is also a subterm of $\code_0$.
      \item \emph{Stack}: it follows from the \ih, since the stack $\stack$ is unchanged.
      \item \emph{Frame}: it follows from the \ih, since the frame $\skframe$ is unchanged.
      \item \emph{Environment}: it follows from the \ih, since the environment $\genv$ is unchanged.
      \end{enumerate}

  \item \case{$
          \mac{\skeval}{\skframe}{\var}{\stack}{\genv}
          \tomachasubthree
          \mac{\skback}{\skframe}{\var}{\stack}{\genv}
        $}
      \begin{enumerate}
      \item \emph{Evaluating Code}: trivial since $\skphase \neq \skeval$.
      \item \emph{Stack}: it follows from the \ih, since the stack $\stack$ is unchanged.
      \item \emph{Frame}: it follows from the \ih, since the frame $\skframe$ is unchanged.
      \item \emph{Environment}: it follows from the \ih, since the environment $\genv$ is unchanged.
      \end{enumerate}

  \item \case{$
          \mac{\skback}{\var\cons\skframe}{\code}{\stempty}{\genv}
          \tomachasubfour
          \mac{\skback}{\skframe}{\l\var.\code}{\stempty}{\closescopem\var\cons\genv}
        $}

      \begin{enumerate}
      \item \emph{Evaluating Code}: trivial since $\skphase \neq \skeval$.
      \item \emph{Stack}: trivial since the stack is empty.
      \item \emph{Frame}: any pair of the form $\skap{\codetwo}{\stack}$ in the frame $\skframe$ is also in the frame $\var\cons\skframe$,
            so any piece of code in $\stack$ is a subterm of $\code_0$ by \ih.
      \item \emph{Environment}: any substitution of the form $\esub{\vartwo}{\codetwo}$ in the environment $\closescopem\var\cons\genv$
            is also in the environment $\genv$, so $\codetwo$ is a subterm of $\code_0$ by \ih.
      \end{enumerate}

  \item \case{$
          \mac{\skback}{\skap{\code}{\stack}\cons\skframe}{\codetwo}{\stempty}{\genv}
          \tomachasubfive
          \mac{\skback}{\skframe}{\code\codetwo}{\stack}{\genv}
        $}

      \begin{enumerate}
      \item \emph{Evaluating Code}: trivial since $\skphase \neq \skeval$.
      \item \emph{Stack}: the stack $\stack$ occurs at the left-hand side in the frame $\skap{\code}{\stack}\cons\skframe$,
            so by \ih\ we know that any piece of code in $\stack$ is a subterm of $\code_0$.
      \item \emph{Frame}: any pair $\skap{\codethree}{\stack}$ in the frame $\skframe$ is also
            in the frame $\skap{\code}{\stack}\cons\skframe$, so any piece of code
            in $\stack$ must be a subterm of $\code_0$.
      \item \emph{Environment}: it follows from the \ih, since the environment $\genv$ is unchanged.
      \end{enumerate}

  \item \case{$
          \mac{\skback}{\skframe}{\code}{\codetwo\cons\stack}{\genv}
          \tomachasubsix
          \mac{\skeval}{\skap{\code}{\stack}\cons\skframe}{\codetwo}{\stempty}{\genv}
        $}
      \begin{enumerate}
      \item \emph{Evaluating Code}: note that $\codetwo$ is an element of the stack at the left-hand side of the transition,
            so by \ih\ $\codetwo$ is a subterm of $\code_0$.
      \item \emph{Stack}: trivial since the stack is empty.
      \item \emph{Frame}: any pair in the frame $\skap{\code}{\stack}\cons\skframe$ is also in the frame $\skframe$
            except for $\skap{\code}{\stack}$.
            Consider a piece of code $\codefour$ in the stack $\stack$. It is trivially also a piece of
            code in the stack $\codetwo\cons\stack$, so by \ih\ we have that $\codefour$ is
            a subterm of $\code_0$.
      \item \emph{Environment}: it follows from the \ih, since the environment $\genv$ is unchanged.\qed
      \end{enumerate} 
  \end{itemize}
\end{proof}

\section{Proof that Structural equivalence is a Strong Bisimulation (\refprop{bisimulation})}
\label{app:bisimulation}
\withproofs{

We first need an auxiliary lemma (\reflemma{eqstruct_commute_ilo}), which uses an alternative, inductive definition of \lo\ contexts:
\begin{definition}[\ilo\ Contexts]
  A context $\ctx$ is \emph{inductively \lo} (or \ilo) if a judgment about it can be derived using the following inductive rules:    
  \begin{center}
$\begin{array}{cccccc}
	\AxiomC{}
	\RightLabel{(ax-\ilo)}
	\UnaryInfC{$\ctxhole$ is \ilo}
	\DisplayProof
	&
	\AxiomC{$\ctx$ is \ilo}
	\AxiomC{$\ctx\neq\sctxp{\la\var\ctxtwo}$}
	\RightLabel{(@l-\ilo)}
	\BinaryInfC{$\ctx \tm$ is \ilo}
	\DisplayProof \\\\
	
		\AxiomC{$\ctx$ is \ilo}
	\RightLabel{($\l$-\ilo)}
	\UnaryInfC{$\la\var\ctx$ is \ilo}
	\DisplayProof 
	&
		\AxiomC{$\tm$ is \quiet}
		\AxiomC{$\ctx$ is \ilo}
		\RightLabel{(@r-\ilo)}
		\BinaryInfC{$\tm \ctx$ is \ilo}
		\DisplayProof 
		 \\\\
	\multicolumn{2}{c}{	  
	\AxiomC{$\ctx$ is \ilo}
	\AxiomC{$\var\notin\lfv\ctx$}
	\RightLabel{(ES-\ilo)}
	\BinaryInfC{$\ctx \esub\var\tm$ is \ilo}
	\DisplayProof
	}
\end{array}$
\end{center}
\end{definition}

\begin{lemma}
\label{l:LO-characts-ind}
A context $\ctx$ is \ilo\ iff it is \lo.
\end{lemma}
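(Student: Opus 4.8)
The plan is to prove the two implications separately: \lo $\Rightarrow$ \ilo by structural induction on the context $\ctx$, and \ilo $\Rightarrow$ \lo by induction on the derivation of the \ilo judgement. Both directions rest on one elementary observation about how a decomposition of a context interacts with its outermost constructor. Write $\ctx = \mathit{con}(\ctx')$ when $\ctx$ is one of $\la\var{\ctx'}$, $\ctx'\tm$, $\tm\ctx'$, $\ctx'\esub\var\tm$, $\tm\esub\var{\ctx'}$. I would first check that every decomposition $\ctx = \ctxtwop{\ctxthree\,\tmtwo}$, $\ctx = \ctxtwop{\tmtwo\,\ctxthree}$ or $\ctx = \ctxtwop{\ctxthree\esub\vartwo\tmtwo}$ is either the \emph{top} one (namely $\ctxtwo = \ctxhole$, which is only possible when $\mathit{con}$ is the displayed application/substitution) or else is the image under $\mathit{con}$ of such a decomposition of $\ctx'$, using the identities $(\la\var{\ctxtwo})\ctxholep{\cdot} = \la\var{\ctxtwo\ctxholep{\cdot}}$, $(\ctxtwo\tm)\ctxholep{\cdot} = \ctxtwo\ctxholep{\cdot}\,\tm$, and so on. Consequently $\ctx$ and $\ctx'$ share all their decompositions below the outermost layer, the only genuinely new one being the top decomposition when $\mathit{con}$ is an application or a substitution; and since the hole of $\ctx$ always sits below that layer, a top decomposition as an application or substitution forces $\mathit{con}$ to be exactly that application or substitution (with $\ctxthree = \ctx'$).

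For \lo $\Rightarrow$ \ilo I would argue by cases on $\ctx$. The case $\ctxhole$ is the axiom rule. If $\ctx = \la\var{\ctx'}$, the observation shows $\ctx'$ is again \lo, hence \ilo by induction hypothesis, and the $\lambda$-rule concludes. If $\ctx = \ctx'\tm$, instantiating the Left Application clause of \refdefp{LOCtx}{internal} with $\ctxtwo = \ctxhole$ gives $\ctx'\neq\sctxp{\la\var{\ctxfour}}$; moreover $\ctx'$ is \lo, hence \ilo by induction hypothesis, and the left-application rule applies. If $\ctx = \tm\ctx'$, the Right Application clause with $\ctxtwo=\ctxhole$ yields that $\tm$ is \quiet, $\ctx'$ is \lo hence \ilo, and the right-application rule applies. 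If $\ctx = \ctx'\esub\var\tm$, the Substitution clause with $\ctxtwo=\ctxhole$ yields $\var\notin\lfv{\ctx'}$, and the ES-rule applies. Finally, the case $\ctx = \tm\esub\var{\ctx'}$ does not arise, because no \lo context has its hole inside the argument of an explicit substitution (this is built into \refdefp{LOCtx}{internal}) — which is exactly why the \ilo rules have no corresponding production.

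For \ilo $\Rightarrow$ \lo I would induct on the \ilo derivation and verify the clauses of \refdefp{LOCtx}{internal} for the conclusion context; the axiom case is vacuous. In each remaining case, by the observation every relevant decomposition of the conclusion context is either strictly inside the premise context $\ctx'$ — in which case the clause follows from the induction hypothesis, $\ctx'$ being \ilo hence \lo — or it is the top decomposition, and then: the Left Application clause of the left-application rule holds because its side condition $\ctx'\neq\sctxp{\la\var{\ctxtwo}}$ is precisely the rule's premise; the Right Application clause of the right-application rule holds because the premise states $\tm$ is \quiet; the Substitution clause of the ES-rule holds because the premise is $\var\notin\lfv{\ctx'}$; and for the $\lambda$-rule there is no top decomposition of application or substitution shape at all, since the conclusion starts with an abstraction.

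The main obstacle is not conceptual but bookkeeping: making the ``decompositions lift through the outermost constructor'' observation precise enough to be reused uniformly, in particular tracking that the hole stays below the outermost constructor so that a top application/substitution decomposition is forced. The one subtle point, if the statement of \refdefp{LOCtx}{internal} in the excerpt is read with only its three displayed clauses, is securing the impossibility of the $\ctx = \tm\esub\var{\ctx'}$ case in the \lo $\Rightarrow$ \ilo direction: it encodes the asymmetry between the body and the argument of an explicit substitution — reflected by the missing \ilo production — and it is precisely the property of \lo contexts that must be ensured by the definition, so I would take care to invoke it explicitly there.
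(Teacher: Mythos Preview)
Your approach is exactly the paper's---the paper's entire proof is the one-liner ``an immediate induction on $\ctx$,'' and you have spelled that induction out in full, with the bidirectional split and the observation that decompositions of $\mathit{con}(\ctx')$ are either the top one or lift from decompositions of $\ctx'$.

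Your flag on the $\tm\esub\var{\ctx'}$ case is not excessive caution but a genuine observation the paper glosses over: under a literal reading of the three clauses of Definition~\ref{def:LOCtx}, the context $\tm\esub\var{\ctxhole}$ satisfies all of them vacuously (no sub-decomposition of application or body-of-ES shape lies on the path to the hole), yet it has no \ilo\ derivation. So the equivalence as literally stated fails in that direction. The paper nevertheless uses freely that a \lo\ context ``cannot go inside substitutions'' (\eg\ in the bisimulation proof in the appendix), so the intended reading of Definition~\ref{def:LOCtx} clearly includes a fourth, implicit clause forbidding decompositions $\ctxtwop{\tm\esub\var{\ctxthree}}$. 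With that convention your argument goes through verbatim; without it, the lemma is false and no proof can close the gap. You are right to call this out.
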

\begin{proof}
	An immediate induction on $\ctx$.\qed
\end{proof}

\begin{lemma}
\label{l:eqstruct_commute_ilo}
If $\ctx$ is a \lo\ context and $\ctx$ does not bind any of the variables in $\fv{\tmtwo}$,
then $\ctxp{\tm\esub{\var}{\tmtwo}} \eqstruct \ctxp{\tm}\esub{\var}{\tmtwo}$.
\end{lemma}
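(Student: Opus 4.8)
The plan is to reduce to the inductive presentation of \lo\ contexts and then run a routine structural induction. By \reflemma{LO-characts-ind} the context $\ctx$ is \ilo, so it is enough to prove, by induction on the derivation of ``$\ctx$ is \ilo'', that $\ctxp{\tm\esub\var\tmtwo}\eqstruct\ctxp{\tm}\esub\var\tmtwo$ whenever $\ctx$ binds none of the variables in $\fv{\tmtwo}$. The first step is to invoke the $\alpha$-convention: since $\esub\var\tmtwo$ binds $\var$, we may assume without loss of generality that $\var$ occurs (free or bound) nowhere in $\ctx$. This innocuous-looking normalization is what makes everything go through, because it guarantees $\var\notin\fv{\tmthree}$ for every term $\tmthree$ appearing inside $\ctx$, and these are precisely the side conditions required by the axioms of $\eqstruct$ that push $\esub\var\tmtwo$ outward.

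The induction then has one case per \ilo\ rule, and in each case the argument is the same two-move pattern: apply the induction hypothesis to the immediate \ilo\ subcontext $\ctxtwo$, and then let $\esub\var\tmtwo$ cross the outermost constructor by a single $\eqstruct$-step. For $\ctxhole$ the two sides are literally equal. For $\ctxtwo\,\tmthree$ the crossing step is $\tostructapl$, with side condition $\var\notin\fv{\tmthree}$ supplied by freshness. For $\tmthree\,\ctxtwo$ it is $\tostructapr$, again with $\var\notin\fv{\tmthree}$ by freshness (the fact that $\tmthree$ is \quiet is not needed here). For $\la\vartwo\ctxtwo$ it is $\tostructlam$, whose side condition $\vartwo\notin\fv{\tmtwo}$ is exactly the hypothesis that $\ctx$ does not bind a free variable of $\tmtwo$. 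For $\ctxtwo\esub\vartwo\tmthree$ it is $\tostructcom$, whose two side conditions $\vartwo\notin\fv{\tmtwo}$ and $\var\notin\fv{\tmthree}$ come respectively from the non-capture hypothesis and from freshness. In each non-trivial case one first checks that the lemma's hypotheses transfer to $\ctxtwo$: it is \ilo\ by inversion, and since the variables $\ctx$ binds into its hole are those bound by $\ctxtwo$ together with at most one further binder (the $\vartwo$ of the abstraction or ES case, which by hypothesis is not free in $\tmtwo$), the subcontext $\ctxtwo$ still binds none of $\fv{\tmtwo}$, so the induction hypothesis is available.

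I do not anticipate a genuine obstacle: this is a routine structural induction once the \ilo\ reformulation and the $\alpha$-convention are in place. The only things that require a little care are the bookkeeping of which variables a context binds into its hole (needed to legitimately invoke the induction hypothesis on $\ctxtwo$) and matching each structural axiom of $\eqstruct$ with the correct orientation and side condition. It is worth noting that the clauses which make $\ctx$ genuinely \lo\ rather than merely a context of the right shape --- namely $\ctxtwo\neq\sctxp{\la\var\ctxfour}$ in the left-application rule and $\vartwo\notin\lfv{\ctxtwo}$ in the ES rule --- are never used in the argument; nonetheless, carrying out the induction over \ilo\ contexts remains the most economical route, as it provides exactly the structural case analysis one needs.
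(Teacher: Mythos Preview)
Your proposal is correct and follows exactly the approach the paper uses: invoke \reflemma{LO-characts-ind} to pass to the \ilo\ presentation and then induct on the \ilo\ derivation. The paper's proof is a one-line sketch of precisely this argument, so your detailed case analysis (including the freshness bookkeeping via the $\alpha$-convention and the observation that the genuinely \lo\ side conditions are unused) simply fleshes out what the paper leaves implicit.
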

\begin{proof}
A context is \lo\ iff it is \ilo\ (\reflemma{LO-characts-ind}).
The property is then proved by induction on the derivation that $\ctx$ is an \ilo\ context.\qed
\end{proof}

\begin{proof}[Structural Equivalence $\eqstruct$ is a Strong Bisimulation, \refprop{bisimulation}]

Let $\tostructsym$ be the symmetric and contextual closure of the axioms by which
$\eqstruct$ is defined, \ie 
$$
\begin{array}{rll@{\hspace{1em}}l}
    \tm\esub{\var}{\tmtwo}                         & \tostructgc  & \tm                                            & \text{if $\var \not\in \fv{\tm}$} \\
    \tm\esub{\var}{\tmtwo}\esub{\vartwo}{\tmthree} & \tostructcom & \tm\esub{\vartwo}{\tmthree}\esub{\var}{\tmtwo} & \text{if $\vartwo \not\in \fv{\tmtwo}$ and $\var \not\in \fv{\tmthree}$} \\
    \tm\esub{\var}{\tmtwo}\esub{\vartwo}{\tmthree} & \tostructes  & \tm\esub{\var}{\tmtwo\esub{\vartwo}{\tmthree}} & \text{if $\vartwo \not\in \fv{\tm}$} \\
    \tm\esub{\var}{\tmtwo}                         & \tostructdup & \varsplit{\tm}{\var}{\vartwo}\esub{\var}{\tmtwo}\esub{\vartwo}{\tmtwo}   \\
    (\l\var.\tm)\esub{\vartwo}{\tmtwo}             & \tostructlam & \l\var.\tm\esub{\vartwo}{\tmtwo}               & \text{if $\var \not\in \fv{\tmtwo}$} \\
    (\tm\,\tmtwo)\esub{\var}{\tmthree}             & \tostructapl & \tm\esub{\var}{\tmthree}\,\tmtwo               & \text{if $\var \not\in \fv{\tmtwo}$} \\
    (\tm\,\tmtwo)\esub{\var}{\tmthree}             & \tostructapr & \tm\,\tmtwo\esub{\var}{\tmthree}               & \text{if $\var \not\in \fv{\tm}$} \\
\end{array}
$$
Note that $\eqstruct$ is the reflexive-transitive closure
of $\tostructsym$. It suffices to show that $\tostructsym\tolo\ \subseteq\ \tolo\eqstruct$, preserving the kind of step (multiplicative/exponential).
The fact that $\tostructsym^*$ is a bisimulation then follows by induction on the number
of $\tostructsym$ steps.

Let $\tmthree \tostructsym \tm \tolo \tmtwo$. The proof of $\tmthree \tolo \eqstruct \tmtwo$
goes by induction on the context under which the step $\tm \tolo \tmtwo$ takes place. In the following proof note that:
\begin{enumerate}
\item $\tom$ steps are sent to $\tom$ steps,  
\item $\toe$ steps are sent to $\toe$ steps, and 
\item no step is ever duplicated.
\end{enumerate}

Cases:
\begin{enumerate}

\item \casealt{Base case 1: multiplicative root step, $\tm = \sctxp{\l\var.\tmp}\tmtwop \rtodb \sctxp{\tmp\esub{\var}{\tmtwop}}$}
  If the $\tostructsym$ step is internal to $\tmp$, internal to $\tmtwop$, or internal to the argument of one of the substitutions in $\sctx$,
  then the pattern of the $\tostructsym$ redex does not overlap with the $\rtodb$ step, and the proof
  is immediate, as the two steps commute.
  Otherwise, we consider every possible case of $\tostructsym$:
  \begin{enumerate}
  \item \caselight{Garbage collection, $\tostructgc$.}
    The garbage collected substitution must be one of the substitutions in
    $\sctx$, \ie\ $\sctx$ must be of the form $\sctxtwop{\sctxthree\esub{\vartwo}{\tmthreep}}$.
    Then:
    $$
    \commutesdbEEABCD{\tostructgc}{\tostructgc}{
       \sctxtwop{\sctxthreep{ \l\var.\tmp }\esub{\vartwo}{\varthree}} \tmtwop
    }{
       \sctxtwop{\sctxthreep{ \tmp\esub{\var}{ \tmtwop } }\esub{\vartwo}{\varthree}}
    }{
       \sctxtwop{\sctxthreep{ \l\var.\tmp }} \tmtwop
    }{
       \sctxtwop{\sctxthreep{ \tmp\esub{\var}{ \tmtwop } }}
    }
    $$
  \item \caselight{Commutation of independent substitutions, $\tostructcom$.}
    The substitutions that are commuted must be both in $\sctx$,
    \ie\ $\sctx$ must be of the form $\sctxtwop{\sctxthree\esub{\vartwo}{\tmthreep}\esub{\varthree}{\tmfourp}}$.
    Then:
    $$
    \commutesdbEEABCD{\tostructcom}{\tostructcom}{
       \sctxtwop{\sctxthreep{ \l\var.\tmp }\esub{\vartwo}{\tmthreep}\esub{\varthree}{\tmfourp}} \tmtwop
    }{
       \sctxtwop{\sctxthreep{ \tmp\esub{\var}{ \tmtwop } }\esub{\vartwo}{\tmthreep}\esub{\varthree}{\tmfourp}}
    }{
       \sctxtwop{\sctxthreep{ \l\var.\tmp }\esub{\varthree}{\tmfourp}\esub{\vartwo}{\tmthreep}} \tmtwop
    }{
       \sctxtwop{\sctxthreep{ \tmp\esub{\var}{ \tmtwop } }\esub{\varthree}{\tmfourp}\esub{\vartwo}{\tmthreep}}
    }
    $$
  \item \caselight{Composition of substitutions, $\tostructes$.}
    The substitutions that are composed must be both in $\sctx$,
    \ie\ $\sctx$ must be of the form $\sctxtwop{\sctxthree\esub{\vartwo}{\tmthreep}\esub{\varthree}{\tmfourp}}$.
    Then:
    $$
    \commutesdbEEABCD{\tostructes}{\tostructes}{
       \sctxtwop{\sctxthreep{ \l\var.\tmp }\esub{\vartwo}{\tmthreep}\esub{\varthree}{\tmfourp}} \tmtwop
    }{
       \sctxtwop{\sctxthreep{ \tmp\esub{\var}{ \tmtwop } }\esub{\vartwo}{\tmthreep}\esub{\varthree}{\tmfourp}}
    }{
       \sctxtwop{\sctxthreep{ \l\var.\tmp }\esub{\vartwo}{\tmthreep\esub{\varthree}{\tmfourp}}} \tmtwop
    }{
       \sctxtwop{\sctxthreep{ \tmp\esub{\var}{ \tmtwop } }\esub{\vartwo}{\tmthreep\esub{\varthree}{\tmfourp}}}
    }
    $$
  \item \caselight{Duplication, $\tostructdup$.}
    The duplicated substitution must be one of the substitutions in $\sctx$,
    \ie\ $\sctx$ must be of the form $\sctxtwop{\sctxthree\esub{\vartwo}{\tmthreep}}$.
    Then:
    $$
    \commutesdbEEABCD{\tostructdup}{\tostructdup}{
       \sctxtwop{\sctxthreep{ \l\var.\tmp }\esub{\vartwo}{\tmthreep}} \tmtwop
    }{
       \sctxtwop{\sctxthreep{ \tmp\esub{\var}{ \tmtwop } }\esub{\vartwo}{\tmthreep}}
    }{
       \sctxtwop{\varsplit{(\sctxthreep{ \l\var.\tmp })}{\vartwo}{\varthree}\esub{\vartwo}{\tmthreep}\esub{\varthree}{\tmthreep}} \tmtwop
    }{
       \sctxtwop{\varsplit{(\sctxthreep{ \tmp\esub{\var}{ \tmtwop } })}{\vartwo}{\varthree}\esub{\vartwo}{\tmthreep}\esub{\varthree}{\tmthreep}}
    }
    $$
  \item \caselight{Commutation with abstraction, $\tostructlam$.}
    The commuted substitution must be the innermost substitution in $\sctx$,
    \ie\ $\sctx$ must be of the form $\sctxtwop{\esub{\vartwo}{\tmthreep}}$, and:
      $$
      \commuteslsEEABCD{\tostructlam}{\tostructcom}{
         \sctxtwop{ (\l\var.\tmp)\esub{\vartwo}{\tmthreep} } \tmtwop
      }{
         \sctxtwop{ \tmp\esub{\var}{\tmtwop}\esub{\vartwo}{\tmthreep} }
      }{
         \sctxtwop{ \l\var.\tmp\esub{\vartwo}{\tmthreep} } \tmtwop
      }{
         \sctxtwop{ \tmp\esub{\vartwo}{\tmthreep}\esub{\var}{\tmtwop} }
      }
      $$
      Note that the diagram can be also read from the bottom-up for a
      reverse application of the $\tostructlam$ rule.
      In order to be able to apply $\tostructcom$,
      note that $\var \not\in \fv{\tmthreep}$ by application of the $\tostructlam$ rule,
      and that $\vartwo \not\in \fv{\tmtwop}$ by the bound variable convention.
  \item \caselight{Left commutation with application, $\tostructapl$.}
    The only possibility is that the outermost substitution of $\sctx$ commutes with
    the application taking part in the $\tom$ step.
    That is, $\sctx$ must be of the form $\sctxtwo\esub{\vartwo}{\tmthreep}$ and:
    $$
    \commutesdbEEABCD{\tostructapl}{=}{
       \sctxtwop{ \l\var.\tmp }\esub{\vartwo}{\tmthreep} \tmtwop
    }{
       \sctxtwop{ \tmp\esub{\var}{\tmtwop} }\esub{\vartwo}{\tmthreep} \tmtwop
    }{
       (\sctxtwop{ \l\var.\tmp }\,\tmtwop)\esub{\vartwo}{\tmthreep}
    }{
       \sctxtwop{ \tmp\esub{\var}{\tmtwop} }\esub{\vartwo}{\tmthreep}
    }
    $$
  \item \caselight{Right commutation with application, $\tostructapr$.}
    Note that every $\tostructapr$ (and ${\tostructapr}^{-1}$) redex in $(\l\var.\tmp)\sctx\,\tmtwop$
    must be internal to either $\tmp$, $\tmtwop$, or the argument of one of the substitutions
    in $\sctx$. We have already argued that in these cases the steps commute.
  \end{enumerate}

\item \casealt{Base case 2: exponential root step, $\tm = \ctxp{\var}\esub{\var}{\tmp} \rtols \ctxp{\tmp}\esub{\var}{\tmp}$}

  If the substitution that is contracted by the exponential step does not
  take part in the pattern of the $\tostructsym$ step, it is immediate to check
  that the property holds. More precisely, suppose that $\ctxp{\var}\esub{\var}{\tmp} \tostructsym \ctxtwop{\var}\esub{\var}{\tmpp}$,
  where $\ctxtwo$ and $\tmpp$ result respectively from $\ctx$ and $\tm$ by
  a single step of $\tostructsym$.
  Note that we have that either $\ctx \tostructsym \ctxtwo$ and $\tmp = \tmpp$
  or vice-versa. Then:
  $$
  \commuteslsEEABCD{\tostructsym}{\tostructsym^*}{
     \ctxp{\var}\esub{\var}{\tmp}
  }{
     \ctxp{\tmp}\esub{\var}{\tmp}
  }{
     \ctxtwop{\var}\esub{\var}{\tmpp}
  }{
     \ctxtwop{\tmpp}\esub{\var}{\tmpp}
  }
  $$
  Note that when commutation affects $\tmp$ (\ie\ if we are in the case in which
  $\ctx = \ctxtwo$ and $\tmp \tostructsym \tmpp$), then the right-hand side of the
  diagram must be closed by two $\tostructsym$ steps: one for each copy of $\tmp$.

  So we may assume that the substitution that is contracted by the exponential step does
  take part in the pattern of the $\tostructsym$ step. We consider every possible case of
  $\tostructsym$.

  \begin{enumerate}
  \item \caselight{Garbage collection, $\tostructgc$.}
    The garbage collected substitution cannot erase the contracted occurrence of
    $\var$, since $\ctx$ is a \lo\ context, and it cannot go inside substitutions.
    Two subcases, depending on the position of the hole of $\ctx$ with respect to the
    node of the garbage collected substitution:
    \begin{enumerate}
    \item If the hole of $\ctx$ lies inside the body of the garbage collected substitution,
          \ie\ $\ctx = \ctxtwop{\ctxthree\esub{\vartwo}{\tmtwop}}$ with $\vartwo \notin \fv{\ctxthreep{\var}}$,
          then:
      $$
      \commuteslsEEABCD{\tostructgc}{\tostructgc}{
         \ctxtwop{\ctxthreep{\var}\esub{\vartwo}{\tmtwop}}\esub{\var}{\tmp}
      }{
         \ctxtwop{\ctxthreep{\tmp}\esub{\vartwo}{\tmtwop}}\esub{\var}{\tmp}
      }{
         \ctxtwop{\ctxthreep{\var}}\esub{\var}{\tmp}
      }{
         \ctxtwop{\ctxthreep{\tmp}}\esub{\var}{\tmp}
      }
      $$
      Note that $\vartwo \notin \fv{\ctxthreep{\tmp}}$ since we may assume that
      $\vartwo \notin \fv{\tmp}$ by the bound variable convention.
    \item
      Otherwise, the hole of $\ctx$ must be disjoint from the node of the garbage collected
      substitution, \ie\ there must be a two-hole context $\ctxtwo$ such that:
      $$
        \ctx = \ctxtwop{\ctxhole,\tmtwop\esub{\vartwo}{\tmthreep}}
      $$
      where $\vartwo \not\in \fv{\tmtwop}$. Then:
      $$
      \commuteslsEEABCD{\tostructgc}{\tostructgc}{
         \ctxtwop{\var,\tmtwop\esub{\vartwo}{\tmthreep}}\esub{\var}{\tmp}
      }{
         \ctxtwop{\tmp,\tmtwop\esub{\vartwo}{\tmthreep}}\esub{\var}{\tmp}
      }{
         \ctxtwop{\var,\tmtwop}\esub{\var}{\tmp}
      }{
         \ctxtwop{\tmp,\tmtwop}\esub{\var}{\tmp}
      }
      $$
    \end{enumerate}
  \item \caselight{Commutation of independent substitutions, $\tostructcom$.}
    \label{l:smam_bisimulation_exponential_tostructcom}
    Note that the contracted occurrence of $\var$ cannot be inside the argument
    of any of the commuted substitutions, since $\ctx$ is a \lo\ context
    and it cannot go inside substitutions.
    Since the contracted substitution is commuted, we have that $\ctx$ must be of the form
    $\ctxtwo\esub{\vartwo}{\tmtwop}$ and the situation is:
      $$
      \commuteslsEEABCD{\tostructcom}{\tostructcom}{
         \ctxtwop{\var}\esub{\vartwo}{\tmtwop}\esub{\var}{\tmp}
      }{
         \ctxtwop{\tmp}\esub{\vartwo}{\tmtwop}\esub{\var}{\tmp}
      }{
         \ctxtwop{\var}\esub{\var}{\tmp}\esub{\vartwo}{\tmtwop}
      }{
         \ctxtwop{\tmp}\esub{\var}{\tmp}\esub{\vartwo}{\tmtwop}
      }
      $$
  \item \caselight{Composition of substitutions, $\tostructes$.}
    Note that the contracted occurrence of $\var$ cannot be inside the argument
    of any of the two substitutions that take part in the $\tostructes$ step,
    since $\ctx$ is a \lo\ context and it cannot go inside substitutions.
    We know that the contracted substitution takes part in the $\tostructes$ step.
    We consider two subcases, depending on whether the $\tostructes$ rule is applied
    from left to right or from right to left, since the situation is not symmetrical.
    \begin{enumerate}
    \item
      If the $\tostructes$ step is applied from left to right, then $\ctx$ must be of the
      form $\ctxtwo\esub{\vartwo}{\tmtwop}$ with $\var \not\in \fv{\ctxtwop{\var}}$.
      This is a contradiction, so this case is not actually possible.
    \item
      \label{l:smam_bisimulation_exponential_tostructes}
      If the $\tostructes$ step is applied from right to left,
      then $\tmp$ must be of the form $\tmpp\esub{\vartwo}{\tmtwop}$ and:
        $$
        \commuteslsEEABCD{\tostructes}{\eqstruct}{
           \ctxp{\var}\esub{\var}{\tmpp\esub{\vartwo}{\tmtwop}}
        }{
           \ctxp{\tmpp\esub{\vartwo}{\tmtwop}}\esub{\var}{\tmpp\esub{\vartwo}{\tmtwop}}
        }{
           \ctxp{\var}\esub{\var}{\tmpp}\esub{\vartwo}{\tmtwop}
        }{
           \ctxp{\tmpp}\esub{\var}{\tmpp}\esub{\vartwo}{\tmtwop}
        }
        $$
        To close the right-hand side of the diagram, we are left to show that:
        $$
           \ctxp{\tmpp\esub{\vartwo}{\tmtwop}}\esub{\var}{\tmpp\esub{\vartwo}{\tmtwop}}
           \eqstruct
           \ctxp{\tmpp}\esub{\var}{\tmpp}\esub{\vartwo}{\tmtwop}
        $$
        First note that $\ctx$ is a \lo\ context, and that, by the bound variable convention, $\ctx$ does
        not bind any of the variables in $\fv{\tmtwop}$. By resorting to Lemma~\ref{l:eqstruct_commute_ilo},
        this allows us to commute the substitution that:
        $$
           \begin{array}{lll}
               & \ctxp{\tmpp\esub{\vartwo}{\tmtwop}}\esub{\var}{\tmpp\esub{\vartwo}{\tmtwop}} \\
           \eqstruct
               & \ctxp{\tmpp}\esub{\vartwo}{\tmtwop}\esub{\var}{\tmpp\esub{\vartwo}{\tmtwop}}
               & \text{ by Lemma~\ref{l:eqstruct_commute_ilo}} \\
           \tostructes
               & \ctxp{\tmpp}\esub{\vartwo}{\tmtwop}\esub{\var}{\tmpp}\esub{\vartwo}{\tmtwop} \\
           =
               & \ctxp{\tmpp}\esub{\vartwo}{\tmtwop}\esub{\var}{\tmpp\isub{\vartwo}{\varthree}}\esub{\varthree}{\tmtwop}
               & \text{ renaming $\vartwo$ to $\varthree$} \\
           \tostructcom
               & \ctxp{\tmpp}\esub{\var}{\tmpp\isub{\vartwo}{\varthree}}\esub{\vartwo}{\tmtwop}\esub{\varthree}{\tmtwop} \\
           \tostructdup
               & \ctxp{\tmpp}\esub{\var}{\tmpp}\esub{\vartwo}{\tmtwop}
           \end{array}
        $$
    \end{enumerate}
  \item \caselight{Duplication, $\tostructdup$.}
    Note that the contracted occurrence of $\var$ cannot be inside the argument
    of any of the two substitutions that take part in the $\tostructdup$ step,
    since $\ctx$ is a \lo\ context and it cannot go inside substitutions.
    We consider two cases, depending on whether $\tostructdup$ is applied from left to right
    or from right to left:
    \begin{enumerate}
    \item
          From left to right: the contracted occurrence of $\var$ is either renamed to $\vartwo$
          or left untouched as $\var$. Let $\varthree$ denote $\var$ or $\vartwo$, correspondingly.
          In both cases we have:
          $$
          \commuteslsEEABCD{\tostructdup}{\tostructdup}{
             \ctxp{\var}\esub{\var}{\tmp}
          }{
             \ctxp{\tmp}\esub{\var}{\tmp}
          }{
             \varsplit{\ctx}{\var}{\vartwo}\ctxholep{\varthree}\esub{\var}{\tmp}\esub{\vartwo}{\tmp}
          }{
             \varsplit{\ctx}{\var}{\vartwo}\ctxholep{\tmp}\esub{\var}{\tmp}\esub{\vartwo}{\tmp}
          }
          $$
    \item
          From right to left:
          then $\ctx$ is of the form $\varsplit{\ctxtwo}{\vartwo}{\var}\esub{\vartwo}{\tmp}$,
          where $\ctxtwo$ has no occurrences of $\var$, and:
          $$
          \commuteslsEEABCD{\tostructdup}{\tostructdup}{
             \varsplit{\ctxtwo}{\vartwo}{\var}\ctxholep{\var}\esub{\vartwo}{\tmp}\esub{\var}{\tmp}
          }{
             \varsplit{\ctxtwo}{\vartwo}{\var}\ctxholep{\tmp}\esub{\vartwo}{\tmp}\esub{\var}{\tmp}
          }{
             \ctxtwo\ctxholep{\vartwo}\esub{\vartwo}{\tmp}
          }{
             \ctxtwo\ctxholep{\tmp}\esub{\vartwo}{\tmp}
          }
          $$
    \end{enumerate}
  \item \caselight{Commutation with abstraction, $\tostructlam$.}
    \label{l:smam_bisimulation_exponential_tostructlam}
    Then $\ctx$ is of the form $\l\vartwo.\ctxtwo$ and:
    $$
    \commuteslsEEABCD{\tostructlam}{\tostructlam}{
       (\l\vartwo.\ctxtwop{\var})\esub{\var}{\tmp}
    }{
       (\l\vartwo.\ctxtwop{\tmp})\esub{\var}{\tmp}
    }{
       \l\vartwo.\ctxtwop{\var}\esub{\var}{\tmp}
    }{
       \l\vartwo.\ctxtwop{\tmp}\esub{\var}{\tmp}
    }
    $$
  \item \caselight{Left commutation with application, $\tostructapl$.}
    \label{l:smam_bisimulation_exponential_tostructapl}
    Then $\ctx$ is of the form $\ctx\,\tmtwop$ and:
    $$
    \commuteslsEEABCD{\tostructapl}{\tostructapl}{
       (\ctxp{\var}\,\tmtwop)\esub{\var}{\tmp}
    }{
       (\ctxp{\tmp}\,\tmtwop)\esub{\var}{\tmp}
    }{
       \ctxp{\var}\esub{\var}{\tmp}\,\tmtwop
    }{
       \ctxp{\tmp}\esub{\var}{\tmp}\,\tmtwop
    }
    $$
  \item \caselight{Right commutation with application, $\tostructapr$.}
    \label{l:smam_bisimulation_exponential_tostructapr}
    Then $\ctx$ is of the form $\tmtwop\,\ctx$ and:
    $$
    \commuteslsEEABCD{\tostructapr}{\tostructapr}{
       (\tmtwop\,\ctxp{\var})\esub{\var}{\tmp}
    }{
       (\tmtwop\,\ctxp{\tmp})\esub{\var}{\tmp}
    }{
       \tmtwop\,\ctxp{\var}\esub{\var}{\tmp}
    }{
       \tmtwop\,\ctxp{\tmp}\esub{\var}{\tmp}
    }
    $$
  \end{enumerate}

\item \casealt{Inductive case 1: inside an abstraction}
  Suppose that $\tm = \l\var.\tmp \to \l\var.\tmtwop = \tmtwo$.
  We consider two subcases, depending on whether the $\tostructsym$ step
  is internal to the body of the abstraction, or involves the outermost
  abstraction:
  \begin{enumerate}
  \item
    \label{l:smam_bisimulation_inside_abstraction_internal}
    If the application of the $\tostructsym$ step is internal to $\tmp$, we
    have by \ih:
      $$
      \commutesredEEABCD{\eqstruct}{\eqstruct}{
         \tmp
      }{
         \tmtwop
      }{
         \tmthreep
      }{
         \tmfourp
      }
      $$
    so is immediate to conclude that:
      $$
      \commutesredEEABCD{\eqstruct}{\eqstruct}{
         \l\var.\tmp
      }{
         \l\var.\tmtwop
      }{
         \l\var.\tmthreep
      }{
         \l\var.\tmfourp
      }
      $$
  \item
    If the outermost abstraction takes part in the $\tostructsym$ step,
    then a $\tostructlam$ step must have been applied,
    so $\tmp$ must be of the form $\tmpp\esub{\vartwo}{\tmtwop}$.
    We consider two further subcases, depending on whether the commuted substitution
    is involved in the reduction step:
    \begin{enumerate}
    \item 
      If the reduction step $\tmpp\esub{\vartwo}{\tmtwop} \to \tmthreep$ is an exponential,
      and the commuted substitution $\esub{\vartwo}{\tmtwop}$
      is the one contracted by the exponential step, then the situation
      is exactly like in case \ref{l:smam_bisimulation_exponential_tostructlam}
      ({\em Commutation with abstraction} for exponential steps), by reading
      the diagram from the bottom up.
    \item
      Otherwise, note that there cannot be a multiplicative step at the root,
      and that the step cannot be internal to $\tmtwop$, as \lo\ contexts do
      not go inside substitutions. Therefore the reduction step must be internal
      to $\tmpp$ and the situation is:
      $$
      \commutesredEEABCD{\tostructlam}{\tostructlam}{
         \l\var.\tmpp\esub{\vartwo}{\tmtwop}
      }{
         \l\var.\tmtwopp\esub{\vartwo}{\tmtwop}
      }{
         (\l\var.\tmpp)\esub{\vartwo}{\tmtwop}
      }{
         (\l\var.\tmtwopp)\esub{\vartwo}{\tmtwop}
      }
      $$
    \end{enumerate}
  \end{enumerate}
\item \casealt{Inductive case 2: left of an application} 
  Suppose that $\tm = \tmp\,\tmfive \to \tmtwop\,\tmfive = \tmtwo$.
  If the application of the $\tostructsym$ step is internal to $\tmp$,
  we may immediately conclude by \ih (analogous to case \ref{l:smam_bisimulation_inside_abstraction_internal}).
  The interesting case is when the outermost application takes part in the $\tostructsym$ step.
  There are two possibilities, depending on whether a $\tostructapl$ step or a $\tostructapr$ step
  is applied:
  \begin{enumerate}
  \item \casealt{$\tostructapl$ step}
    Then $\tmp$ must be of the form $\tmpp\esub{\var}{\tmthreep}$.
    We consider two further subcases, depending on whether the commuted substitution
    is involved in the reduction step:
    \begin{enumerate}
    \item
      If the reduction step $\tmpp\esub{\var}{\tmthreep} \to \tmfourp$ is an exponential step
      and the commuted substitution $\esub{\var}{\tmthreep}$ is also the one contracted by the
      exponential step, then the situation is exactly like in case
      \ref{l:smam_bisimulation_exponential_tostructapl} ({\em Left commutation with application}
      for exponential steps), by reading the diagram from the bottom up.
    \item
      Otherwise, note that the reduction step cannot be internal to $\tmthreep$, since \lo\ contexts
      do not go inside substitutions, so it must be internal to $\tmpp$ and the
      situation is:
        $$
        \commutesredEEABCD{\tostructapl}{\tostructapl}{
           \tmpp\esub{\var}{\tmthreep}\,\tmfive
        }{
           \tmtwopp\esub{\var}{\tmthreep}\,\tmfive
        }{
           (\tmpp\,\tmfive)\esub{\var}{\tmthreep}
        }{
           (\tmtwopp\,\tmfive)\esub{\var}{\tmthreep}
        }
        $$
    \end{enumerate}
  \item \casealt{$\tostructapr$ step}
    Then $\tmfive$ must be of the form $\tmfivep\esub{\var}{\tmthreep}$ and the situation is:
        $$
        \commutesredEEABCD{\tostructapr}{\tostructapr}{
           \tmp\,\tmfivep\esub{\var}{\tmthreep}
        }{
           \tmtwop\,\tmfivep\esub{\var}{\tmthreep}
        }{
           (\tmp\,\tmfivep)\esub{\var}{\tmthreep}
        }{
           (\tmtwop\,\tmfivep)\esub{\var}{\tmthreep}
        }
        $$
  \end{enumerate}

\item \casealt{Inductive case 3: right of an application} 
  Suppose that $\tm = \tmfive\,\tmp \to \tmfive\,\tmtwop = \tmtwo$.
  If the application of the $\tostructsym$ step is internal to $\tmp$,
  we may immediately conclude by \ih (analogous to case \ref{l:smam_bisimulation_inside_abstraction_internal}).
  The interesting case is when the outermost application takes part in the $\tostructsym$ step.
  There are two possibilities, depending on whether a $\tostructapl$ step or a $\tostructapr$ step
  is applied:
  \begin{enumerate}
  \item \casealt{$\tostructapl$ step}
    Then $\tmfive$ must be of the form $\tmfivep\esub{\var}{\tmthreep}$ and the situation is:
        $$
        \commutesredEEABCD{\tostructapl}{\tostructapl}{
           \tmfivep\esub{\var}{\tmthreep}\,\tmp
        }{
           \tmfivep\esub{\var}{\tmthreep}\,\tmtwop
        }{
           (\tmfivep\,\tmp)\esub{\var}{\tmthreep}
        }{
           (\tmfivep\,\tmtwop)\esub{\var}{\tmthreep}
        }
        $$
  \item \casealt{$\tostructapr$ step}
    Then $\tmp$ must be of the form $\tmpp\esub{\var}{\tmthreep}$.
    We consider two further subcases, depending on whether the commuted substitution
    is involved in the reduction step:
    \begin{enumerate}
    \item
      If the reduction step $\tmpp\esub{\var}{\tmthreep} \to \tmfourp$ is an exponential step
      and the commuted substitution $\esub{\var}{\tmthreep}$ is also the one contracted by the
      exponential step, then the situation is exactly like in case
      \ref{l:smam_bisimulation_exponential_tostructapr} ({\em Right commutation with application}
      for exponential steps), by reading the diagram from the bottom up.
    \item
      Otherwise, note that the reduction step cannot be internal to $\tmthreep$, since \lo\ contexts
      do not go inside substitutions, so it must be internal to $\tmpp$ and the
      situation is:
        $$
        \commutesredEEABCD{\tostructapr}{\tostructapr}{
           \tmfive\,\tmpp\esub{\var}{\tmthreep}
        }{
           \tmfive\,\tmtwopp\esub{\var}{\tmthreep}
        }{
           (\tmfive\,\tmpp)\esub{\var}{\tmthreep}
        }{
           (\tmfive\,\tmtwopp)\esub{\var}{\tmthreep}
        }
        $$
    \end{enumerate}
  \end{enumerate}

\item \casealt{Inductive case 4: left of a substitution} 
  Suppose that $\tm = \tmp\esub{\var}{\tmfive} \to \tmtwop\esub{\var}{\tmfive} = \tmtwo$.
  If the application of the $\tostructsym$ step is internal to $\tmp$,
  we may immediately conclude by \ih (analogous to case \ref{l:smam_bisimulation_inside_abstraction_internal}).
  The interesting case is when the outermost substitution node takes part in the $\tostructsym$ step.
  There are four possibilities,
  depending on whether a $\tostructgc$ step, a $\tostructcom$ step, a $\tostructes$ step, or a $\tostructdup$ step
  is applied:
  \begin{enumerate}
  \item \casealt{$\tostructgc$ step}
    The reduction step cannot be internal to $\tmfive$, since \lo\ contexts may not go
    inside substitutions, so the step must be internal to $\tmp$,
    and closing the diagram is trivial:
      $$
      \commutesredEEABCD{\tostructgc}{\tostructgc}{
         \tmp\esub{\var}{\tmfive}
      }{
         \tmtwop\esub{\var}{\tmfive}
      }{
         \tmp
      }{
         \tmtwop
      }
      $$
      Note that if $\var \not\in \fv{\tmp}$ then $\var \not\in \fv{\tmtwop}$ by the usual
      property that reduction does not create free variables.
  \item \casealt{$\tostructcom$ step}
    Then $\tmp$ must be of the form $\tmpp\esub{\vartwo}{\tmthreep}$ with $\var \not\in \fv{\tmthreep}$.
    We consider two further subcases, depending on whether the commuted substitution
    is involved in the reduction step:
    \begin{enumerate}
    \item
      If the reduction step $\tmpp\esub{\vartwo}{\tmthreep} \to \tmfourp$ is an exponential step
      and the commuted substitution $\esub{\vartwo}{\tmthreep}$ is also the one contracted by the
      exponential step, then the situation is exactly like in case
      \ref{l:smam_bisimulation_exponential_tostructcom} ({\em Commutation of independent substitutions}
      for exponential steps), by reading the diagram from the bottom up.
    \item
      Otherwise, note that the reduction step cannot be internal to $\tmthreep$, since \lo\ contexts
      may not go inside substitutions, so it must be internal to $\tmpp$, and the situation is:
        $$
        \commutesredEEABCD{\tostructcom}{\tostructcom}{
           \tmpp\esub{\vartwo}{\tmthreep}\esub{\var}{\tmfive}
        }{
           \tmtwopp\esub{\vartwo}{\tmthreep}\esub{\var}{\tmfive}
        }{
           \tmpp\esub{\var}{\tmfive}\esub{\vartwo}{\tmthreep}
        }{
           \tmtwopp\esub{\var}{\tmfive}\esub{\vartwo}{\tmthreep}
        }
        $$
    \end{enumerate}
  \item \casealt{$\tostructes$ step}
    Two cases, depending on whether the $\tostructes$ step is applied from left to right
    or from right to left:
    \begin{enumerate}
    \item \caselight{$\tostructes$ is applied from left to right.}
      Then $\tmp$ must be of the form $\tmpp\esub{\vartwo}{\tmthreep}$ with $\var \not\in \fv{\tmpp}$.
      We consider two further subcases, depending on whether the commuted substitution
      is involved in the reduction step:
      \begin{enumerate}
      \item
        If the reduction step $\tmpp\esub{\vartwo}{\tmthreep} \to \tmfourp$ is an exponential step
        and the commuted substitution $\esub{\vartwo}{\tmthreep}$ is also the one contracted by the
        exponential step, then the situation is exactly like in case
        \ref{l:smam_bisimulation_exponential_tostructes} ({\em Composition of substitutions}
        for exponential steps), by reading the diagram from the bottom up.
      \item
        Otherwise, note that the reduction step cannot be internal to $\tmthreep$, since \lo\ contexts
        may not go inside substitutions, so it must be internal to $\tmpp$, and the situation is:
          $$
          \commutesredEEABCD{\tostructes}{\tostructes}{
             \tmpp\esub{\vartwo}{\tmthreep}\esub{\var}{\tmfive}
          }{
             \tmtwopp\esub{\vartwo}{\tmthreep}\esub{\var}{\tmfive}
          }{
             \tmpp\esub{\vartwo}{\tmthreep\esub{\var}{\tmfive}}
          }{
             \tmtwopp\esub{\vartwo}{\tmthreep\esub{\var}{\tmfive}}
          }
          $$
        Note that if $\var \not\in \fv{\tmpp}$, then $\var \not\in \fv{\tmtwopp}$,
        by the usual fact that reduction does not create free variables.
      \end{enumerate}
    \item \caselight{$\tostructes$ is applied from right to left.}
      Then $\tmfive$ must be of the form $\tmfivep\esub{\vartwo}{\tmthreep}$,
      and the reduction step must be internal to $\tmp$, so the situation is:
      $$
      \commutesredEEABCD{\tostructes}{\tostructes}{
         \tmp\esub{\var}{\tmfivep\esub{\vartwo}{\tmthreep}}
      }{
         \tmtwop\esub{\var}{\tmfivep\esub{\vartwo}{\tmthreep}}
      }{
         \tmp\esub{\var}{\tmfivep}\esub{\vartwo}{\tmthreep}
      }{
         \tmtwop\esub{\var}{\tmfivep}\esub{\vartwo}{\tmthreep}
      }
      $$
    \end{enumerate}
  \item \casealt{$\tostructdup$ step}
    Two cases, depending on whether the $\tostructdup$ step is applied from left to right
    or from right to left:
    \begin{enumerate}
    \item \caselight{$\tostructdup$ is applied from left to right.}
      \label{l:smam_bisimulation_substitution_tostructdup_lr}
      Then the reduction step is internal to $\tmp$ and closing the diagram is immediate:
        $$
        \commutesredEEABCD{\tostructdup}{\tostructdup}{
           \tmp\esub{\var}{\tmfive}
        }{
           \tmtwop\esub{\var}{\tmfive}
        }{
           \varsplit{\tmp}{\var}{\vartwo}\esub{\var}{\tmfive}\esub{\vartwo}{\tmfive}
        }{
           \varsplit{\tmtwop}{\var}{\vartwo}\esub{\var}{\tmfive}\esub{\vartwo}{\tmfive}
        }
        $$
    \item \caselight{$\tostructdup$ is applied from right to left.}
      Then $\tmp$ must be of the form $\tmpp\esub{\vartwo}{\tmfive}$.
      We consider two further subcases, depending on whether the commuted substitution
      is involved in the reduction step:
      \begin{enumerate}
      \item
        If the reduction step $\tmpp\esub{\vartwo}{\tmfive} \to \tmfourp$ is an exponential step
        and the affected substitution $\esub{\vartwo}{\tmfive}$ is also the one contracted by the
        exponential step, then $\tmpp$ must be of the form $\varsplit{\ctxtwo}{\vartwo}{\var}\ctxholep{\vartwo}$
        and the situation is:
          $$
          \commuteslsEEABCD{\tostructdup}{\tostructdup}{
             \varsplit{\ctxtwo}{\vartwo}{\var}\ctxholep{\vartwo}\esub{\vartwo}{\tmfive}\esub{\var}{\tmfive}
          }{
             \varsplit{\ctxtwo}{\vartwo}{\var}\ctxholep{\tmfive}\esub{\vartwo}{\tmfive}\esub{\var}{\tmfive}
          }{
             \ctxtwo\ctxholep{\vartwo}\esub{\vartwo}{\tmfive}
          }{
             \ctxtwo\ctxholep{\tmfive}\esub{\vartwo}{\tmfive}
          }
          $$
      \item
        Otherwise, note that the reduction step cannot be internal to $\tmfive$, since \lo\ contexts
        may not go inside substitutions, so it must be internal to $\tmpp$.
        The situation is then exactly like in case \reflemma{smam_bisimulation_substitution_tostructdup_lr},
        by reading the diagram from the bottom up.
      \end{enumerate}
    \end{enumerate}
  \end{enumerate}\qed
\end{enumerate}
}
\end{proof}

\end{document}